\DeclareMathOperator{\tr}{tr}
\DeclareMathOperator{\conv}{conv}
\let\originalleft\left
\let\originalright\right
\renewcommand{\left}{\mathopen{}\mathclose\bgroup\originalleft}
\renewcommand{\right}{\aftergroup\egroup\originalright}
\newcommand{\KetBra}[1]{{\Ket{#1}\!\Bra{#1} }}
\newcommand{\Bra}[1]{{ \langle \! \langle{#1}\vert }}
\newcommand{\Ket}[1]{{ \vert {#1}  \rangle \!  \rangle}}
\newcommand{\bra}[1]{\left\langle #1 \right|}
\newcommand{\ket}[1]{\left| #1 \right\rangle}
\newcommand{\ketbra}[2]{\left|#1\middle\rangle\middle\langle#2\right|}
\newcommand{\proj}[1]{\left|#1\middle\rangle\middle\langle#1\right|}
\newcommand{\prin}[2]{\left\langle#1, #2\right\rangle}
\newcommand{\norm}[1]{\left\|#1\right\|}
\newcommand{\midset}{\ |\ }
\newcommand{\wsep}{W^{\text{sep}}}
\newcommand{\wsepcone}{{\mathcal W}^{\text{sep}}}
\newcommand{\com}{{[\, ,\,]}}
\newcommand{\ant}{{\{\, ,\,\}}}
\newcommand{\de}[1]{\left(#1\right)}
\newcommand{\De}[1]{\left[#1\right]}
\newcommand{\DE}[1]{\left\{#1\right\}}
\newcommand{\mathand}{\quad\text{and}\quad}
\newcommand{\Hi}{\mathcal{H}}
\newcommand{\id}{\mathds{1}}
\newcommand{\rr}{R_{\text{r}}}
\newcommand{\gr}{R_{\text{g}}}
\newcommand{\ie}{i.e.\@\xspace}
\newtheorem{theorem}{Theorem}
\newtheorem*{theorem*}{Theorem}
\newtheorem{lemma}[theorem]{Lemma}
\newtheorem{definition}[theorem]{Definition}
\newtheorem{corollary}[theorem]{Corollary}
\renewcommand{\ketbra}[2]{\left|#1\middle\rangle\!\middle\langle#2\right|}
\renewcommand{\proj}[1]{\left|#1\middle\rangle\!\middle\langle#1\right|}
\begin{document}
\title{Witnessing causal nonseparability}
\author{Mateus Araújo}
\affiliation{Faculty of Physics, University of Vienna, Boltzmanngasse 5 1090 Vienna, Austria}
\affiliation{Institute for Quantum Optics and Quantum Information (IQOQI), Boltzmanngasse 3 1090 Vienna, Austria}
\author{Cyril Branciard}
\affiliation{Institut Néel, CNRS and Universit\'e Grenoble Alpes, 38042 Grenoble Cedex 9, France}
\author{Fabio Costa}
\affiliation{Faculty of Physics, University of Vienna, Boltzmanngasse 5 1090 Vienna, Austria}
\affiliation{Institute for Quantum Optics and Quantum Information (IQOQI), Boltzmanngasse 3 1090 Vienna, Austria}
\affiliation{Centre for Engineered Quantum Systems, School of Mathematics and Physics, The University of Queensland, St Lucia, QLD 4072, Australia}
\author{Adrien Feix}
\affiliation{Faculty of Physics, University of Vienna, Boltzmanngasse 5 1090 Vienna, Austria}
\affiliation{Institute for Quantum Optics and Quantum Information (IQOQI), Boltzmanngasse 3 1090 Vienna, Austria}
\author{Christina Giarmatzi}
\affiliation{Centre for Engineered Quantum Systems, School of Mathematics and Physics, The University of Queensland, St Lucia, QLD 4072, Australia}
\affiliation{Centre for Quantum Computer and Communication Technology, School of Mathematics and Physics, University of Queensland, Brisbane,   QLD 4072, Australia}
\author{Časlav Brukner}
\affiliation{Faculty of Physics, University of Vienna, Boltzmanngasse 5 1090 Vienna, Austria}
\affiliation{Institute for Quantum Optics and Quantum Information (IQOQI), Boltzmanngasse 3 1090 Vienna, Austria}
\date{\today}

\begin{abstract}
 Our common understanding of the physical world deeply relies on the notion that events are ordered with respect to some time parameter, with past events serving as causes for future ones. Nonetheless, it was recently found that it is possible to formulate quantum mechanics without any reference to a global time or causal structure. The resulting framework includes new kinds of quantum resources that allow performing tasks -- in particular, the violation of \textit{causal inequalities} -- which are impossible for events ordered according to a global causal order. However, no physical implementation of such resources is known. Here we show that a recently demonstrated resource for quantum computation -- the \textit{quantum switch} -- is a genuine example of ``indefinite causal order''. We do this by introducing a new tool -- the \textit{causal witness} -- which can detect the \textit{causal nonseparability} of any quantum resource that is incompatible with a definite causal order. We show however that the quantum switch does not violate any causal inequality. 
\end{abstract}
\maketitle

\section{Introduction}

It is commonly assumed that information is processed through a series of operations which are performed according to a specific order. This is justified by the assumption of a global, underlying time parameter according to which all operations can be ordered. A convenient representation of this structure is that of a circuit~\cite{deutsch1989quantum}, Fig.~\ref{circuit}(a), in which systems are ``wires'' that connect ``boxes'', which represent operations performed on the systems. At a more abstract level, a circuit only imposes a given \textit{causal structure} between operations, as the time order between operations that can be performed in parallel is irrelevant.
The circuit framework is also ubiquitous in the study of quantum foundations to formalize generalized, possibly post-quantum, probabilistic theories~\cite{hardy2009foliable, coecke2010quantum, PhysRevA.81.062348, PhysRevA.84.012311}.

It has been suggested that such a framework might be too restrictive to encompass the most general kinds of information processing allowed by quantum physics~\cite{chiribella09}. 
For example, one can consider protocols in which the order between different operations is controlled by a quantum degree of freedom. It has been shown that such protocols exploiting a so-called ``quantum switch'' not only provide computational advantage over standard, time-ordered, ones~\cite{chiribella12, araujo14}, but they are also physically realizable and a first experimental proof-of-principle has been recently demonstrated~\cite{procopio_experimental_2014}. At a more fundamental level, an underlying time or causal order might not be well-defined in a theory that combines the dynamical causal structure of general relativity and the probabilistic nature of quantum mechanics \cite{hardy2007towards,rovelli1990quantum,ashtekar96}.
 
It is therefore natural to ask what the most general resources allowed by quantum mechanics beyond the circuit model are. In Ref.~\cite{oreshkov12} the \textit{process matrix formalism} was proposed as a general framework to describe resources that can be accessed in ``local laboratories'' and which are locally in agreement with quantum physics, Fig.~\ref{circuit}(b).

\begin{figure}[htpc]
  \centering
  \begin{tikzpicture}[scale=1.5]
                \draw[thick, red] (0.9,-0.7) arc (125:55:-1.56) (-0.9,-0.7) -- (0.9,-0.7);
                \draw[thick, red] (-0.6,-0.7) -- (-0.6,2.3) (0,-0.7) --(0,2.3) (0.6,-0.7) -- (0.6,2.3);
		\node[draw, thick, rectangle,minimum width=1.6cm,minimum height=0.4cm, fill=white!80!gray] (A) at (0.3,-0.4) {$\mathcal{M}_{1}$};
		\node[draw, thick, rectangle,minimum width=1.6cm,minimum height=0.4cm, fill=white!80!gray] (B) at (0.3,0.6) {$\mathcal{M}_{2}$};
		\node[draw, thick, rectangle,minimum width=0.4cm,minimum height=0.4cm, fill=white!80!gray] (C) at (-0.6,1.1) {$\mathcal{M}_{3}$};
		\node[draw, thick, rectangle,minimum width=0.4cm,minimum height=0.4cm, fill=white!80!gray] (D) at (0,2.1) {$\mathcal{M}_4$};
		\node[draw, thick, red, rectangle,minimum width=2.5cm,minimum height=0.5cm, fill=white] (U) at (0,0.1) {};
		\node[draw, thick, red, rectangle,minimum width=0.75cm,minimum height=0.5cm, fill=white] (U2) at (0.6,1.1) {};
		\node[draw, thick, red, rectangle,minimum width=1.6cm,minimum height=0.5cm, fill=white] (U3) at (-0.3,1.6) {};
                \node[] (label) at (0,-1.4) {(a)};
                \node[] (hidden) at (2.0,0) {};
  \end{tikzpicture}
  \begin{tikzpicture}[scale=1.5]
		\node[draw, thick, rectangle,minimum width=0.4cm,minimum height=0.4cm,fill=white!80!gray] (A) at (0,-0.4) {$\mathcal{M}_{1}$};
		\node[draw, thick, rectangle,minimum width=0.4cm,minimum height=0.4cm,fill=white!80!gray] (B) at (0.6,0.6) {$\mathcal{M}_{2}$};
		\node[draw, thick, rectangle,minimum width=0.4cm,minimum height=0.4cm,fill=white!80!gray] (C) at (0,0.6) {$\mathcal{M}_{3}$};
		\node[draw, thick, rectangle,minimum width=0.4cm,minimum height=0.4cm,fill=white!80!gray] (D) at (0.6,1.6) {$\mathcal{M}_{4}$};
                \draw[thick, red]  (-0.7,-1.1) -- (-0.7, 2.3) --(0.9,2.3) -- (0.9,1.9) -- (-0.3,1.9) -- (-0.3,1.3) -- (0.9,1.3) -- (0.9,0.9) -- (-0.3,0.9) -- (-0.3,0.3) -- (0.9,0.3) -- (0.9,-0.1) -- (-0.3, -0.1) -- (-0.3,-0.7) -- (0.9,-0.7) -- (0.9,-1.1) -- (-0.7,-1.1);
                \draw[thick, red,->] (A) -- +(0,0.3);  \draw[thick, red,<-]  (A) -- +(0,-0.3);
                \draw[thick, red,->] (B) -- +(0,0.3);  \draw[thick, red,<-]  (B) -- +(0,-0.3);
                \draw[thick, red,->] (C) -- +(0,0.3);  \draw[thick, red,<-]  (C) -- +(0,-0.3);
                \draw[thick, red,->] (D) -- +(0,0.3);  \draw[thick, red,<-]  (D) -- +(0,-0.3);
                \node[] (label) at (0.2,-1.4) {(b)};
  \end{tikzpicture}
  \caption{(a) If the operations $\mathcal{M}_{i}$ of local agents are performed in a definite causal sequence, they can be represented as gates in a circuit, where information flows from bottom to top. (b) A process matrix formalizes a resource in which the order between operations may not be fixed. A probabilistic mixture of different orders is an example of a process matrix that does not correspond to a circuit. Still, in this case operations are performed in a well-defined order in each experimental run; the most general resource with this property is called \textit{causally separable}. The process matrix formalism also allows for the more general case of \textit{causally nonseparable} resources~\cite{oreshkov12}.}
  \label{circuit}
\end{figure}
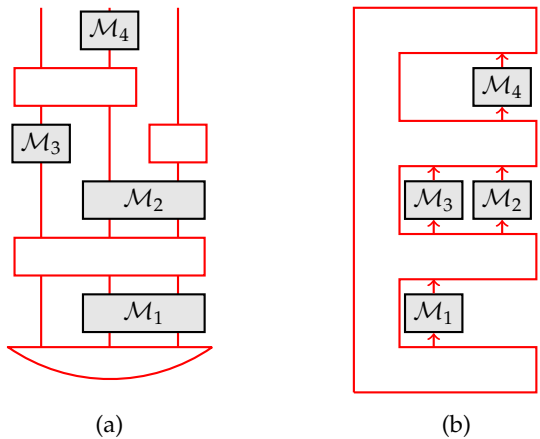

Causal relations are defined operationally in this formalism. If, for example, through appropriate state preparations, an agent $A$ can influence the outcomes of measurements performed by an agent $B$, whereas $B$ is never able to influence $A$, then $A$ causally precedes $B$ by definition and, in this case, the physical resources available to them can in fact be represented as a circuit. A first example of a resource that cannot be represented as a circuit is a \textit{probabilistic mixture} of circuits: a definite order still exists between $A$ and $B$ in each run of an experiment, but which order is realized in a given run is only specified according to some probability distribution. Resources compatible with a definite causal order, in this broader sense, are called \textit{causally separable}. Surprisingly, the formalism also allows for \textit{causally nonseparable} resources, which are incompatible with any definite order between operations. It was found that a set of agents with access to a specific causally nonseparable resource could perform a task, the violation of a \textit{causal inequality}, which is impossible for arbitrary causally ordered strategies, even allowing probabilistic mixtures of orders~\cite{oreshkov12}. However, there is no physical interpretation for such resources and no physically realizable protocol is known which can violate a causal inequality. 

It is therefore not completely clear what is the precise relation between ``quantum correlations with no causal order'', which violate causal inequalities, and physically implementable resources, such as the quantum switch, which outperform causally ordered ones. To understand this relation, a crucial observation is that the causal inequalities are \textit{device-independent} constraints: they are formulated independently of the physics of the systems or the specific apparatuses employed. On the other hand, the tasks discussed in Refs.~\cite{chiribella12, araujo14} include additional assumptions, as for example that in each laboratory quantum systems of a definite dimension have to be used. It is clear that, given additional restrictions, it is more difficult for causally-ordered agents to perform certain tasks and, consequently, it can be easier to detect the lack of causal order in a physical resource.

The aim of the present work is to develop a general framework for the device-\textit{dependent} detection of causal nonseparability.
The central tool we introduce is what we call a \textit{causal witness}, which represents a set of quantum operations, such as unitaries, channels, state preparations, and measurements, whose expectation value is non-negative as long as all the operations are performed in a definite causal order, \ie, as long as only causally separable resources are used. The observation of a negative expectation value is thus sufficient to conclude that the operations were not performed in a definite order. The concept is analogous to that of entanglement witness: an observable that has a non-negative expectation value for separable states but can have a negative expectation value for specific entangled states.

We find that, for every causally nonseparable process, it is possible to construct a causal witness that detects it. Importantly, and differently from the case of entanglement witnesses, it is possible to use this method to write necessary and sufficient conditions for causal separability in a form that can be checked efficiently using semidefinite programming (SDP).

The tools developed are applied to the study of the quantum switch as a resource within the process matrix formalism.
We show that, indeed, the quantum switch corresponds to a causally nonseparable process. We show that the protocol of Ref.~\cite{chiribella12} can be reformulated as a causal witness which detects the causal nonseparability of the quantum switch. We also find new, more efficient witnesses, which could be useful for experimental implementations.

We finally address the question of whether the quantum switch can pass any device-independent test of causal nonseparability. As it turns out, this is not possible: we prove that a broad class of resources, including the quantum switch, cannot violate any causal inequality.

The paper is organized as follows: In Section~\ref{sec:formalism}, we review the process matrix formalism, giving a convenient characterization of general and causally separable process matrices for the cases of interest. In Section~\ref{sec:causal_witness}, we introduce and characterize the central concept of causal witness, and we present efficient algorithms for finding witnesses and for proving the causal (non)separability of a general process matrix. In Section \ref{sec:Q_control} we formalise the quantum switch as a process matrix. We proceed to prove its causal nonseparability in Section~\ref{sec:switch_witness}, through the use of causal witnesses. One such witness is the task proposed in Ref.~\cite{chiribella12}, that we optimize to increase its resistance to noise. Finally, we clarify in Section~\ref{sec:causal_ineqs} the link between causal witnesses and causal inequalities and show that the quantum switch cannot violate any causal inequality.
\medskip

\section{The process matrix formalism}\label{sec:formalism}

In the general scenario we consider in this paper, $N$ parties $A^i$ establish correlations by exchanging physical systems between their laboratories. Each party opens their laboratory only once to let an incoming system enter and to send an outgoing system out; they can act on these systems by performing an arbitrary operation in their local laboratory, which can yield different measurement outcomes. The causal relations between the parties (\ie, the ordering of events) are not \emph{a priori} specified. The most general situation compatible with the assumption that \emph{the operations performed in each local laboratory can be described by the quantum formalism} can be conveniently represented in the ``process matrix'' formalism introduced in Ref.~\cite{oreshkov12}. This extends the ``comb'' formalism of Ref.~\cite{chiribella09b}, which describes causally ordered quantum networks. The aim of the formalism is to characterize all possible probability distributions that can be obtained in our general scenario. The key concept is that of a \textit{process}, which can be understood as the external resource determining the statistics of the local operations, and which generalizes both the notions of quantum state and of quantum channel. The \textit{process matrix} is a useful mathematical representation of such a concept. We shall use these two terms interchangeably.

\subsection{Local operations}
Each party $A$ acts in a \textit{local quantum laboratory}, which can be identified by an input Hilbert space ${\cal H}^{A_I}$ and an output Hilbert space ${\cal H}^{A_O}$. The dimensions $d_{A_I}$ and $d_{A_O}$ of input and output spaces do not have to be equal, as ancillary systems can be added or discarded during an operation; we shall nevertheless assume throughout the paper that all Hilbert spaces are finite-dimensional. 
According to quantum theory, the most general local operation is described by a completely positive (CP), trace non-increasing map ${\cal M}^A: A_I\rightarrow A_O$ \cite{chuang00}, where we write $A_I$, respectively $A_O$, for the space of hermitian linear operators over the Hilbert space ${\cal H}^{A_I}$, resp. ${\cal H}^{A_O}$. Examples of CP maps are deterministic operations, such as unitaries or quantum channels, or (generalized) measurements. In general, a label $a$, denoting the measurement outcome, is associated with the CP map ${\cal M}_a^A$. The choice of operation (e.g. of measurement setting) is represented by an \textit{instrument}~\cite{davies70}, which is defined as the collection ${\cal J}^A = \left\{{\cal M}_a^A\right\}_{a=1}^m$ of CP maps associated to all measurement outcomes, characterized by the property that $\sum_{a=1}^m{\cal M}_a^A$ is CP and trace-preserving (CPTP). An instrument generalizes the notion of POVM (positive operator-valued measure) to include the transformations applied to the system; it reduces to a POVM for $1$-dimensional output spaces. When the choice of operation is described by a classical variable $x$, we will express such a dependence explicitly as ${\cal J}^A_x =\left\{{\cal M}_{a|x}^A\right\}_{a=1}^m$.

A convenient representation of CP maps is given by the Choi-Jamio{\l}kowski (CJ) isomorphism \cite{zyczkowski06}. For a CP map ${\cal M}_a^A : A_I\rightarrow A_O$, its corresponding CJ matrix is defined here as
\begin{equation}
M^{A_I A_O}_a:=\left[{\cal I}\otimes{\cal M}^A_a\left(\KetBra{\id}\right)\right]^T \in A_I\otimes A_O,
\label{CJ}
\end{equation}
where ${\cal I}$ is the identity map, $\Ket{\id} \equiv \Ket{\id}^{A_I A_I}:=\sum_j \ket{j}^{A_I}\otimes\ket{j}^{A_I} \in \Hi^{A_I} \otimes \Hi^{A_I}$ is a (non-normalized) maximally entangled state, and $T$ denotes matrix transposition with respect to the chosen orthonormal basis $\{\ket{j}^{A_I}\}$ of $\Hi^{A_I}$. Some useful properties of the CJ isomorphism are given in Appendix~\ref{sec:cj}. A map is completely positive if and only if its CJ representation is positive semidefinite, while the trace-preserving condition is equivalent to $\tr_{A_O} M^{A_I A_O} = \id^{A_I}$ (where $\tr_{A_O}$ denotes the partial trace over $A_O$, and $\id^{A_I}$ is the identity matrix in $A_I$). An instrument is therefore equivalently represented as a set
\begin{equation}
\left\{M^{A_I A_O}_a\right\}_{a=1}^m, \quad
 M^{A_I A_O}_a\geq 0, \quad
 \tr_{A_O}\sum_{a=1}^m M^{A_I A_O}_a = \id^{A_I}.
\label{instrument}
\end{equation}

\subsection{Process matrices}\label{sec:w_matrix}

As discussed in Ref.~\cite{oreshkov12}, requiring that quantum mechanics holds locally implies that the probability that the $N$ parties $A^i$ observe the outcomes $a_1,\ldots,a_N$, for a choice of operations $x_1,\ldots,x_N$, is a multilinear function $P\big({\cal M}_{a_1|x_1}^{A^1},\ldots,{\cal M}_{a_N|x_N}^{A^N}\big)$ of the corresponding CP maps ${\cal M}_{a_1|x_1}^{A^1},\ldots,{\cal M}_{a_N|x_N}^{A^N}$. Using the CJ representation, it was shown that these probabilities can then be expressed as
\begin{eqnarray}
&& P\big(M_{a_1|x_1}^{A^1},\ldots,M_{a_N|x_N}^{A^N}\big) \nonumber \\
&& \qquad = \ \tr \left[ \Big(M^{A^1_I A^1_O}_{a_1|x_1}\otimes \ldots \otimes M^{A^N_I A^N_O}_{a_N|x_N} \Big) W \right],
\label{born}
\end{eqnarray}
for some hermitian operator $W\in A^1_I\otimes A^1_O\otimes \ldots \otimes A^N_I\otimes A^N_O $ called a \textit{process matrix}, which describes the general quantum resource connecting the local laboratories.

The set of valid process matrices is defined by requiring that probabilities are well-defined -- that is, they must be non-negative and must sum up to $1$ -- for all possible operations, including operations that involve, in each laboratory, local interactions with ancillary systems that may be entangled with the other laboratories.
As we show in Appendix~\ref{sec:valid_w}, these conditions are equivalent to
\begin{gather}
 \label{W_pos}
 W \ge 0 , \\
\label{normalization}
 \tr W = d_O, \\
 \label{LV}
 W = L_V(W),
\end{gather}
where $d_O = d_{A^1_O}\ldots d_{A^N_O}$, and $L_V$ is a projector onto the linear subspace ${\cal L}_V \subset A^1_I\otimes A^1_O\otimes \ldots \otimes A^N_I\otimes A^N_O$ defined in Appendix~\ref{sec:valid_w}.
We will denote the closed convex cone of non-normalized processes defined by \eqref{W_pos} and~\eqref{LV} by $\mathcal{W}$.

In the case of two parties $A$ (Alice) and $B$ (Bob), see Figure~\ref{fig:w-bipartite}, these conditions on $W \in A_I \otimes A_O \otimes B_I \otimes B_O$ reduce to 
\begin{gather}
 W \ge 0 \, , \\
 \tr W = d_O \, , \\
 {}_{B_IB_O}W = {}_{A_OB_IB_O}W \, , \label{eq:valid_bipartite_1} \\
 {}_{A_IA_O}W = {}_{A_IA_OB_O}W \, , \label{eq:valid_bipartite_2} \\
 W = {}_{B_O}W + {}_{A_O}W - {}_{A_OB_O}W \, , \label{eq:valid_bipartite_3}
\end{gather}
where (here and throughout the paper) the operator ${}_{X}\cdot$ denotes the CPTP map consisting in tracing out the subsystem $X$ and replacing it by the normalized identity operator, formally defined as
\begin{equation}\label{def:notation_trace}
 _X W = \frac{\id^{X}}{d_X} \otimes \tr_X W \, .
\end{equation}

\begin{figure}[htpc]
  \centering
  \begin{tikzpicture}[scale=1.6]
		\node[draw, thick, rectangle,minimum width=0.7cm,minimum height=0.9cm, fill=white!80!gray] (S) at (-0.8,0) {$A$};
		\node[red,thick, minimum width=0.7cm,minimum height=0.9cm] (W) at (0,0) {$W$};
		\node[draw, thick, rectangle,minimum width=0.7cm,minimum height=0.9cm, fill=white!80!gray] (T) at (0.8,0) {$B$};
                \draw[red, thick] (-1.2,-0.85) -- (1.2,-0.85) -- (1.2,-0.5) -- (0.35,-0.5) -- (0.35,0.5) -- (1.2,0.5) -- (1.2,0.85) -- (-1.2,0.85) -- (-1.2, 0.5) -- (-0.35,0.5) -- (-0.35,-0.5) -- (-1.2,-0.5) -- (-1.2,-0.85);
                \draw[->, red, thick] (S) -- (-0.8,0.5);
                \draw[<-, red, thick] (S) -- (-0.8,-0.5);
                \draw[->, red, thick] (T) -- (0.8,0.5);
                \draw[<-, red, thick] (T) -- (0.8,-0.5);
		\node[red] (SO) at ([shift={(-0.15cm,0.4cm)}]S) {\footnotesize $A_O$};
		\node[red] (SI) at ([shift={(-0.15cm,-0.4cm)}]S) {\footnotesize $A_I$};
		\node[red] (TO) at ([shift={(0.15cm,0.4cm)}]T) {\footnotesize $B_O$};
		\node[red] (TI) at ([shift={(0.15cm,-0.4cm)}]T) {\footnotesize $B_I$};
  \end{tikzpicture}
  \caption{Representation of a bipartite process matrix $W$, connecting Alice's ($A_O$) and Bob's ($B_O$) output systems to their input systems ($A_I$ and $B_I$).}
  \label{fig:w-bipartite}
\end{figure}
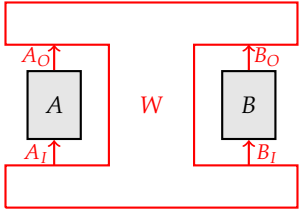

\subsubsection{Non-signalling and 1-way-signalling process matrices}
\label{subsubsec:no_sig_1_way_sig}

Two important particular cases of process matrices may shed light on the above definition. The first case is when the process matrix does not allow for any signalling, and the second one is when it allows for signalling only in one fixed direction between the parties. They are discussed in more details in Appendix~\ref{sec:appendix_process_matrices}.

The first case is described by process matrices $W$ satisfying
\begin{equation} \label{eq:no_signaling}
 W = {}_{A^1_O\ldots A^N_O}W = \rho^{A^1_I\ldots A^N_I}\otimes \id^{A^1_O\ldots A^N_O},
\end{equation}
where $\rho^{A^1_I\ldots A^N_I}$ is a density matrix representing an ordinary quantum state. In this case, the probability rule~\eqref{born} reduces to the standard Born rule
\begin{equation}
P\big(M_{a_1|x_1}^{A^1},\ldots,M_{a_N|x_N}^{A^N}\big) = \tr \left[ \Big(E^{A^1_I}_{a_1|x_1}\otimes \ldots \otimes E^{A^N_I}_{a_N|x_N} \Big) \rho \right],
\label{trueborn}
\end{equation}
where $E^{A^i_I}_{a_i|x_i} := \tr_{A^i_O} M^{A^i_I A^i_O}_{a_i|x_i}$ are POVM elements.

\medskip

The second case, of which the first one is a particular case, is described by process matrices $W$ satisfying
\begin{eqnarray} \label{eq:causal_order}
\begin{array}{rcl}
W &=& {}_{A^N_O}W, \\
{}_{A^N_I A^N_O}W &=& {}_{A^{N-1}_O A^N_I A^N_O}W, \\[-1mm]
&\vdots& \\
{}_{A^2_I A^2_O \ldots\ A^N_I A^N_O}W &=& {}_{A^1_O A^2_I A^2_O \ldots\ A^N_I A^N_O}W \, .
\end{array}
\end{eqnarray}
These conditions, first found in \cite{gutoski06,chiribella09b}, mean that party $A^i$ can only signal to party $A^j$ if $i < j$. The process is therefore compatible with the causal order $A^1 \prec A^2 \prec\, \ldots \prec A^N$. When this is the case, we write as a mnemonic
\begin{equation}
 W = W^{A^1 \prec A^2 \prec\, \ldots \prec A^N}.
\end{equation}
Process matrices of this form (and the obvious permutations) are called \textit{causally ordered}. As shown in Refs.~\cite{gutoski06,chiribella09b}, they correspond to standard (causally ordered) quantum circuits and can be implemented as quantum channels with memory between the parties.

\subsubsection{Bipartite causally separable processes}\label{sec:causally_separable_2}

According to Eq.~\eqref{eq:causal_order}, a bipartite causally ordered process matrix $W^{A \prec B}$ compatible with the order $A \prec B$ satisfies $W^{A \prec B} = {}_{B_O}W^{A \prec B}$ and ${}_{B_I B_O}W^{A \prec B} = {}_{A_O B_I B_O}W^{A \prec B}$.
Note that the latter relation corresponds to Eq.~\eqref{eq:valid_bipartite_1} above, and is therefore automatically satisfied if $W^{A \prec B} \in \mathcal L_V$. Thus, a given matrix $W^{A \prec B} \in A_I \otimes A_O \otimes B_I \otimes B_O$ is a valid causally ordered process matrix compatible with the order $A \prec B$ if and only if
$W^{A \prec B} \ge 0$, $\tr W^{A \prec B} = d_O$,
\begin{equation}
W^{A \prec B} \in \mathcal L_V \mathand W^{A \prec B} = {}_{B_O}W^{A \prec B} \, . \label{eq:valid_W_AB}
\end{equation}
The analogous condition holds for the order $B \prec A$.

Note that a non-signalling process matrix $W$ must be compatible with both orders $A \prec B$ and $B \prec A$. It must therefore satisfy $W = {}_{B_O}W = {}_{A_O}W$, or equivalently $W = {}_{A_OB_O}W$; we indeed recover the form of Eq.~\eqref{eq:no_signaling}.

Following Ref.~\cite{oreshkov12}, we say that a bipartite process matrix $W$ is \textit{causally separable} if it can be decomposed as a convex combination of causally ordered processes, \ie, if it is of the form
\begin{equation}
 W^{\text{sep}} \, = \, q \, W^{A \prec B} \, + \, (1{-}q) \, W^{B \prec A}, \label{def:caus_sep}
\end{equation}
with $0 \leq q \leq 1$.
Ignoring the normalization constraint, the set of causally separable process matrices is a convex cone, which we denote by $\wsepcone$.
A process matrix that cannot be decomposed as in~\eqref{def:caus_sep} is called \textit{causally nonseparable}.

\subsubsection{Tripartite causally separable processes}\label{sec:causally_separable_3c}

In this paper we will define tripartite causal separability only for processes where the output space of the third party $C$ (Charlie) is trivial, \ie, $d_{C_{O}} = 1$ (see Figure~\ref{fig:w-tripartite}). As $C$ cannot signal to the other parties, every process of this kind if compatible with $C$  being last. Thus, only two causal orders are relevant in this case: $A \prec B \prec C$ and $B \prec A \prec C$. The conditions for process matrices being compatible with these orders are, according to equation \eqref{eq:causal_order},
\begin{align}
 W^{A \prec B \prec C} &= {}_{C_O} W^{A \prec B \prec C}, \\
 {}_{C_IC_O}W^{A \prec B \prec C} &= {}_{B_OC_IC_O} W^{A \prec B \prec C}, \\
 {}_{B_IB_OC_IC_O}W^{A \prec B \prec C} &= {}_{A_OB_IB_OC_IC_O} W^{A \prec B \prec C},
\end{align}
and
\begin{align}
 W^{B \prec A \prec C} &= {}_{C_O} W^{B \prec A \prec C}, \\
 {}_{C_IC_O}W^{B \prec A \prec C} &= {}_{A_OC_IC_O} W^{B \prec A \prec C}, \\
 {}_{A_IA_OC_IC_O}W^{B \prec A \prec C} &= {}_{B_OA_IA_OC_IC_O} W^{B \prec A \prec C}.
\end{align}
Since these three conditions together define a linear subspace, we can write them more succinctly as
\begin{align}
 W^{A \prec B \prec C} &= L_{A \prec B \prec C}(W^{A \prec B \prec C}), \label{eq:labc} \\
 W^{B \prec A \prec C} &= L_{B \prec A \prec C}(W^{B \prec A \prec C}), \label{eq:lbac}
\end{align}
where $L_{A \prec B \prec C}$ and $L_{B \prec A \prec C}$ are the projectors onto the aforementioned subspaces.

\begin{figure}[htpc]
  \centering
  \begin{tikzpicture}[scale=1.6]
		\node[draw, thick, rectangle,minimum width=0.7cm,minimum height=0.85cm, fill=white!80!gray] (S) at (-0.8,0) {$A$};
		\node[red, minimum width=0.7cm,minimum height=0.85cm] (W) at (0,0) {$W$};
		\node[draw, thick, rectangle,minimum width=0.7cm,minimum height=0.85cm, fill=white!80!gray] (T) at (0.8,0) {$B$};
                \node[draw, thick, rectangle,minimum width=0.7cm,minimum height=0.85cm, fill=white!80!gray] (C) at (0,1.35) {$C$};
                 \draw[red, thick] (-1.2,-0.85) -- (1.2,-0.85) -- (1.2,-0.5) -- (0.35,-0.5) -- (0.35,0.5) -- (1.2,0.5) -- (1.2,0.85) -- (-1.2,0.85) -- (-1.2, 0.5) -- (-0.35,0.5) -- (-0.35,-0.5) -- (-1.2,-0.5) -- (-1.2,-0.85);
                \draw[thick, red, ->] (S) -- (-0.8,0.5);
                \draw[thick, red, <-] (S) -- (-0.8,-0.5);
                \draw[thick, red, ->] (T) -- (0.8,0.5);
                \draw[thick, red, <-] (T) -- (0.8,-0.5);
                \draw[thick, red, <-] (C) -- (0,0.85);
		\node[red] (SO) at ([shift={(-0.15cm,0.38cm)}]S) {\footnotesize $A_O$};
		\node[red] (SI) at ([shift={(-0.15cm,-0.4cm)}]S) {\footnotesize $A_I$};
		\node[red] (TO) at ([shift={(0.15cm,0.38cm)}]T) {\footnotesize $B_O$};
		\node[red] (TI) at ([shift={(0.15cm,-0.4cm)}]T) {\footnotesize $B_I$};
                \node[red] (CI) at ([shift={(0.15cm,-0.4cm)}]C) {\footnotesize $C_I$};
  \end{tikzpicture}
  \caption{Representation of a tripartite process matrix $W$ where one party has trivial output $d_{C_{O}} = 1$. It can be seen as connecting Alice's ($A_O$) and Bob's ($B_O$) output systems to Alice, Bob and Charlie's input systems $A_I$, $B_I$ and $C_{I}$.}
  \label{fig:w-tripartite}
\end{figure}
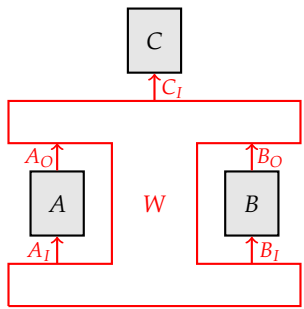

Therefore, when $C$'s output space is trivial, we will call a tripartite process matrix $W^{\text{sep}}$ causally separable if it is of the form
\begin{equation}\label{eq:wsep_switch}
W^{\text{sep}} \, = \, q \, W^{A\prec B\prec C} \, + \, (1{-}q) \, W^{B\prec A\prec C},
\end{equation}
with $0 \leq q \leq 1$. Ignoring the normalization constraint, this defines a convex cone $\wsepcone_{3C}$.
We will use this definition in Section~\ref{sec:switch_witness} to show that a recently introduced tripartite quantum resource, which yields information-processing advantages with respect to causally ordered processes~\cite{chiribella12,araujo14}, is causally nonseparable.

The generalization of the notion of causal separability to a larger number of parties, with arbitrary dimensions of the output spaces, is not trivial. The reason is that one can consider situations in which an agent, through her local operations, could modify a classical variable that determines the causal order of agents in her future. In such a ``classical switch'', operations would still be causally ordered in each run of an experiment, but it wouldn't be possible to write the corresponding process matrix as a mixture of causally ordered ones. As this issue does not affect the cases treated here, we shall not consider it further. A more detailed analysis will be presented in an upcoming work~\cite{oreshkov15}.
 
\section{Causal witnesses}\label{sec:causal_witness}

\subsection{Definition and characterization}

In this section we develop mathematical tools to identify, in the bipartite case, which process matrices are causally separable and which are not. In analogy with entanglement witnesses \cite{guhne09}, we call a hermitian operator $S$ a \emph{causal witness} (or \emph{witness}, simply) if\footnote{Note that the bound $0$ and the sign of the inequality are arbitrary; we choose them as in Eq.~\eqref{eq:causal_witness} for mathematical convenience.}
\begin{equation}\label{eq:causal_witness}
 \tr [S\, \wsep] \ge 0
\end{equation}
for every causally separable process matrix $\wsep$. This definition is motivated by the separating hyperplane theorem \cite{rockafellar70}: since the set of causally separable processes is closed and convex, for every causally nonseparable process matrix $W_{\text{ns}}$ there exists a causal witness $S_{W_{\text{ns}}}$ such that $\tr [S_{W_{\text{ns}}} W_{\text{ns}} ] < 0$.

To construct a witness for a given nonseparable process, we will start by characterizing the set of all causal witnesses in terms of linear constraints on a convex cone. This will allow us to cast the problem of finding a witness as an SDP problem. First, note that \eqref{eq:causal_witness} is equivalent to
   \begin{subequations}
   \begin{gather}
    \label{eq:wab} \tr [S\, W^{A \prec B}] \ge 0\quad \forall \, W^{A \prec B} \, , \\ 
    \label{eq:wba} \tr [S\, W^{B \prec A}] \ge 0\quad \forall \, W^{B \prec A} \, .
   \end{gather}
   \end{subequations}
  Let us focus on condition~\eqref{eq:wab}. Using Eq.~\eqref{eq:valid_W_AB} and noting that for any valid process matrix $W$, ${}_{B_O}W$ is a valid causally ordered process matrix compatible with the order $A \prec B$, one finds that~\eqref{eq:wab} is equivalent to
  \begin{equation}
   \tr \De{S ({}_{B_O}W)} \ge 0\quad \forall \, W \in \mathcal L_V, \ W \ge 0 \, .
  \end{equation}
  Thinking of the trace as the Hilbert-Schmidt inner product and noting that the map ${}_{B_O}\cdot$ is self-dual, we have that
  \begin{equation}
   \tr \De{S ({}_{B_O}W)} = \tr \De{({}_{B_O}S) \, W} ,
  \end{equation}
  and it is sufficient that ${}_{BO}S \ge 0$ for the right-hand-side to be non-negative for all valid $W$. An analogous argument shows that ${}_{A_O}S \ge 0$ is sufficient to satisfy condition \eqref{eq:wba}. We conclude that for $S$ to be a causal witness, it is sufficient that
  \begin{equation}\label{eq:witness_conditions}
    {}_{B_O}S \ge 0 \mathand {}_{A_O}S \ge 0. 
  \end{equation}
  Note also that adding an operator $S^\perp$ belonging to the orthogonal complement $\mathcal L_V^\perp$ of $\mathcal L_V$ to any witness $S$ gives another valid witness, since $\tr[(S+S^\perp) W] = \tr[S  W]$ for any valid process matrix $W$. It turns out that this suffices to completely characterize the set of causal witnesses, as stated in the following theorem:
  \begin{theorem}\label{th:witness}
  A hermitian operator $S\in {A_I}\otimes {A_O}\otimes{B_I}\otimes{B_O}$ is a \textit{causal witness} if and only if $S$ can be written as
  \begin{equation}
   S = S_P + S^\perp,
  \end{equation}
  where $S_P$ and $S^\perp$ are hermitian operators such that
  \begin{gather}
   {}_{B_O}S_P \ge 0, \quad
   {}_{A_O}S_P \ge 0,\quad
  L_V(S^\perp) = 0 \, .
  \end{gather}
\end{theorem}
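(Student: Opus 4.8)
The plan is to read the theorem as an identity between dual cones and to obtain it from the bipolar theorem, so that the whole content collapses to a single re-decomposition lemma. The \emph{sufficiency} direction is already contained in the discussion preceding the statement: the self-duality of the maps ${}_{A_O}\cdot$ and ${}_{B_O}\cdot$ together with the characterization \eqref{eq:valid_W_AB} of causally ordered processes shows that ${}_{B_O}S_P\ge0$ and ${}_{A_O}S_P\ge0$ force $\tr[S_P\,\wsep]\ge0$, while $L_V(S^\perp)=0$ makes $\tr[S^\perp W]=0$ for every valid $W\in\mathcal L_V$; adding the two gives \eqref{eq:causal_witness}. Hence only \emph{necessity} — that \emph{every} witness is of this form — needs proof, and I would set it up as a completeness statement for this family of witnesses.

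First I would write $\mathcal S$ for the set of operators admitting the claimed decomposition. It is the Minkowski sum of the closed convex cone $\mathcal C:=\{S_P:{}_{B_O}S_P\ge0,\ {}_{A_O}S_P\ge0\}$ with the subspace $\mathcal L_V^\perp$, hence a convex cone, and by definition the set of all causal witnesses is the dual cone $(\wsepcone)^*$. Sufficiency gives $\mathcal S\subseteq(\wsepcone)^*$, so — once $\mathcal S$ is shown to be closed — the bipolar theorem reduces everything to the single identity $\mathcal S^*=\wsepcone$. Dualizing the Minkowski sum as an intersection of duals, and using that the dual cone of the subspace $\mathcal L_V^\perp$ is $\mathcal L_V$, I obtain $\mathcal S^*=\mathcal L_V\cap\mathcal C^*$. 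A short computation, based on the equivalence ${}_{B_O}S_P\ge0\Leftrightarrow\tr_{B_O}S_P\ge0$ and on self-duality, identifies $\mathcal C^*$ with (the closure of) $\{W_1+W_2 : W_1\ge0,\ {}_{B_O}W_1=W_1,\ W_2\ge0,\ {}_{A_O}W_2=W_2\}$. Since, by \eqref{eq:valid_W_AB}, a valid $A\prec B$ process is exactly a positive operator in $\mathcal L_V$ fixed by ${}_{B_O}\cdot$ (and symmetrically for $B\prec A$), the inclusion $\wsepcone\subseteq\mathcal S^*$ is immediate.

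The reverse inclusion $\mathcal S^*\subseteq\wsepcone$ is the crux and, I expect, the main obstacle. Unpacked, it is the following re-decomposition lemma: if $W\in\mathcal L_V$ splits as $W=W_1+W_2$ with $W_1\ge0$ invariant under ${}_{B_O}\cdot$ and $W_2\ge0$ invariant under ${}_{A_O}\cdot$, then $W$ can be rewritten as a sum of a genuine $A\prec B$ process and a genuine $B\prec A$ process, that is, with both summands lying \emph{in addition} in $\mathcal L_V$. The tempting move is to project, setting $\tilde W_i:=L_V(W_i)$: because $L_V$ commutes with the commuting projections ${}_{A_O}\cdot$ and ${}_{B_O}\cdot$, each $\tilde W_i$ automatically lands in the correct causally ordered subspace and $\tilde W_1+\tilde W_2=L_V(W)=W$. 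The trouble is that $L_V$ does not preserve positivity — restricted to the ${}_{B_O}\cdot$-invariant sector it acts as $W_1\mapsto W_1-{}_{B_IB_O}W_1+{}_{A_OB_IB_O}W_1$, whose middle term can destroy $\tilde W_1\ge0$ — so the projected pieces need not be valid processes, and the freedom to absorb components of $\mathcal L_V^\perp$ must be used in a more delicate way. This positivity-breaking behavior, rather than any one computation, is where the real difficulty lies.

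To close the gap I would recast the existence of a good decomposition as a semidefinite feasibility problem — find $S_P$ with ${}_{B_O}S_P\ge0$, ${}_{A_O}S_P\ge0$ and $L_V(S_P)=L_V(S)$ — and prove it feasible for every witness $S$ by strong SDP duality. Concretely, I would dualize this program and invoke its theorem of alternatives: infeasibility would supply $Y_1,Y_2\ge0$ and a Hermitian $Z$ with $L_V(Z)={}_{B_O}Y_1+{}_{A_O}Y_2$ and $\tr[Z\,L_V(S)]<0$; the operator $W':=L_V(Z)={}_{B_O}Y_1+{}_{A_O}Y_2\in\mathcal L_V$ would then have to be shown causally separable — again the re-decomposition lemma, now for an operator of this special form — whereupon $\tr[W'\,S]<0$ would contradict that $S$ is a witness. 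Making this rigorous requires a constraint qualification ensuring no duality gap, supplied for instance by a full-rank valid process such as a suitably normalized identity, together with the closedness of $\mathcal S$ used in the bipolar step; both are finite-dimensional checks. The decisive point throughout remains the re-decomposition lemma, which I would ultimately attack by exploiting the explicit form of the bipartite projector $L_V$ on the causally ordered sectors.
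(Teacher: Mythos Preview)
Your overall strategy --- computing the dual cone of $\wsepcone$ via the standard identities for unions, intersections, and subspaces --- is exactly the paper's. You also correctly isolate the crux as what you call the re-decomposition lemma: any $W\in\mathcal L_V$ of the form $W_1+W_2$ with $W_1\ge0$, ${}_{B_O}W_1=W_1$, $W_2\ge0$, ${}_{A_O}W_2=W_2$ is causally separable. But you then overcomplicate this step and leave it open. The paper dispatches it in a few lines (its Lemma~\ref{th:causally_separable}) by a direct linear-algebraic check showing that the given summands $W_1,W_2$ are \emph{already} in $\mathcal L_V$, so no re-decomposition --- and hence no positivity worry --- arises at all.

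Concretely, verify conditions \eqref{eq:valid_bipartite_1}--\eqref{eq:valid_bipartite_3} for each summand. The invariance ${}_{B_O}W_1=W_1$ makes \eqref{eq:valid_bipartite_2} and \eqref{eq:valid_bipartite_3} hold automatically for $W_1$; likewise ${}_{A_O}W_2=W_2$ makes \eqref{eq:valid_bipartite_1} and \eqref{eq:valid_bipartite_3} automatic for $W_2$. The only nontrivial requirements are therefore \eqref{eq:valid_bipartite_1} for $W_1$ and \eqref{eq:valid_bipartite_2} for $W_2$; but these follow by subtraction, since $W=W_1+W_2\in\mathcal L_V$ satisfies both and in each case the other summand already does. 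Hence $L_V(W_i)=W_i$, your ``tempting move'' is literally the identity map, and the term $-{}_{B_IB_O}W_1+{}_{A_OB_IB_O}W_1$ you flagged as potentially positivity-breaking vanishes identically. The SDP-duality detour and the theorem of alternatives are unnecessary (and, as you yourself noticed, circular): the lemma is elementary once you spot this interplay between the individual invariance conditions on the summands and the global constraint $W\in\mathcal L_V$.
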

The rather technical proof of this theorem is relegated to Appendix \ref{app:proof_thm1}. This theorem provides a characterization of the closed convex cone of causal witnesses $\mathcal S$.
  
Since $S^\perp$ does not change the expectation value $\tr\De{S W}$, it can freely be chosen to be for instance
\begin{equation}
S^\perp = L_V(S_P) - S_P,
\end{equation}
so that $S = L_V(S_P)$.
This has the effect of restricting witnesses to the subspace of valid processes $\mathcal L_V$, which have the following characterization:
  \begin{corollary}\label{th:witness_lv}
  A hermitian operator $S \in \mathcal L_V$  is a \textit{causal witness} if and only if there exists a hermitian operator $S_P \in {A_I}\otimes {A_O}\otimes{B_I}\otimes{B_O}$ such that $S = L_V(S_P)$, ${}_{B_O}S_P \ge 0$, and ${}_{A_O}S_P \ge 0$.
\end{corollary}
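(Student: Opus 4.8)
The plan is to read off the corollary directly from Theorem~\ref{th:witness}, using only that $L_V$ is the projector onto $\mathcal{L}_V$. As such it is linear, hermiticity-preserving, idempotent ($L_V\circ L_V = L_V$), and it acts as the identity on its range, so that $L_V(S)=S$ whenever $S\in\mathcal{L}_V$. These elementary properties are what make the reduction to $\mathcal{L}_V$ automatic, and I do not expect any genuine obstacle beyond invoking them correctly.

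For the ``only if'' direction I would take a causal witness $S\in\mathcal{L}_V$ and apply Theorem~\ref{th:witness} to obtain a decomposition $S = S_P + S^\perp$ with ${}_{B_O}S_P \ge 0$, ${}_{A_O}S_P \ge 0$, and $L_V(S^\perp)=0$. Applying the linear map $L_V$ to both sides and using $L_V(S)=S$ together with $L_V(S^\perp)=0$ yields $S = L_V(S_P)$. Thus the very $S_P$ supplied by the theorem already satisfies all three required conditions, and no new operator needs to be constructed: the only thing being verified is that the $L_V$-component of the theorem's $S_P$ reproduces $S$ exactly.

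For the ``if'' direction, given a hermitian $S_P$ with $S=L_V(S_P)$, ${}_{B_O}S_P\ge 0$, and ${}_{A_O}S_P\ge 0$, I would set $S^\perp := S - S_P = L_V(S_P) - S_P$, which is hermitian since $L_V$ preserves hermiticity. Idempotency then gives $L_V(S^\perp) = L_V(S_P) - L_V(S_P) = 0$, so $S = S_P + S^\perp$ is precisely a decomposition of the form required by Theorem~\ref{th:witness}; hence $S$ is a causal witness. Since both implications collapse to a single use of Theorem~\ref{th:witness} plus the projector identities $L_V^2=L_V$ and $L_V|_{\mathcal{L}_V}=\mathrm{id}$, the statement is a short corollary rather than an independent result, and the sole point meriting care is the explicit appeal to those two projector properties.
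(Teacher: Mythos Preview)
Your proposal is correct and matches the paper's own justification: the paper derives the corollary directly from Theorem~\ref{th:witness} by noting that one may choose $S^\perp = L_V(S_P) - S_P$ so that $S = L_V(S_P)$, which is exactly your ``if'' direction, and your ``only if'' direction is the obvious converse obtained by applying $L_V$ to the decomposition of Theorem~\ref{th:witness}. The only difference is that you spell out both implications explicitly, whereas the paper treats the corollary as an immediate consequence and gives only the one-line observation about the choice of $S^\perp$.
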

This restricted set of causal witnesses is also a closed convex cone, which we denote by $\mathcal S_V = \mathcal S \cap \mathcal L_V$.

One could define witnesses as belonging to $\mathcal S_{V}$ instead of $\mathcal S$, since both sets are as powerful in detecting causal nonseparability. However, some physically motivated witnesses, such as those presented in Section \ref{sec:game_witness} (for the tripartite case), do not belong to $\mathcal S_V$, which is why we use the more general definition that witnesses belong to $\mathcal S$.

\subsection{Finding causal witnesses}\label{sec:testing_separability}

The previous characterization of the convex cone of causal witnesses allows one to efficiently check the causal nonseparability of any process matrix $W$ through algorithms for semidefinite programming (SDP) \cite{nesterov87}. They output a causal witness if $W$ is causally nonseparable, and an explicit decomposition in terms of causally ordered process matrices otherwise.

The idea is simply to minimize $\tr[S\, W]$ over the cone of causal witnesses\footnote{In principle minimizing over $\mathcal S$ instead of $\mathcal S_V$ would lead to the same value for $\tr[S\, W]$, but this causes technical problems as explained in Appendix~\ref{sec:sdp_duality}.} $\mathcal S_V$, and check whether we obtain a negative value or not.
Note that in order to make $\tr[S \, W]$ lower bounded (to avoid getting a value $-\infty$ for causally nonseparable process matrices) a normalisation constraint on the witnesses has to be imposed. This normalisation is arbitrary -- any constraint that makes $\mathcal S_V$ compact suffices -- and different normalisation choices give rise to different interpretations for the value of $\tr[S\, W]$.  We shall normalise the witnesses by imposing that $\tr \De{S\,\Omega} \le 1$ for every (normalised) process matrix $\Omega$, for $-\tr[S\, W]$ can then be interpreted as a measure of causal nonseparability, as we shall see later in this subsection. In order to be able to use it in the SDP problem we still need to write this normalisation as a conic constraint. To do so, we extend the constraint $\tr \De{S\,\Omega} \le 1$ to non-normalised process matrices by linearity:
\begin{equation}
 \tr \De{S\,\Omega} \le \tr[\Omega]/d_O,
\end{equation}
which is equivalent to
\begin{equation}
 \tr \De{\de{\id/d_O-S} \, \Omega} \ge 0
\end{equation}
for all $\Omega \in \mathcal W$. Recalling that $S$ is assumed to be in $\mathcal S_V \subset \mathcal L_V$, this means that $\id/d_O - S \in \mathcal W^*_V := \mathcal W^* \cap \mathcal L_V$, where $\mathcal W^*$ is the dual cone of $\mathcal W$ -- that is, the cone of hermitian operators that have non-negative trace with process matrices.

To test the causal nonseparability of a given process matrix $W$, we are thus led to define the following SDP problem:
\begin{equation}\label{eq:sdp_witness}
\begin{gathered}
   \min \tr[SW] \\
  \text{s.t.}\quad S \in \mathcal S_V, \quad \id/d_O - S\ \in \mathcal W^*_V ,
\end{gathered}
\end{equation}
which is written explicitly in terms of positive semidefinite constraints in Appendix \ref{sec:explicit_sdp}.

If the solution of the SDP problem~\eqref{eq:sdp_witness} leads to a negative expectation value of $S$, one can conclude that $W$ is causally nonseparable, since SDP algorithms can be guaranteed\footnote{When the assumptions of the Duality Theorem (\ref{thm:duality}) are satisfied, which is the case for our SDP problems, as proven in Appendix~\ref{sec:sdp_duality}.} to find the optimal solution~\cite{nesterov87}. In such a case, the optimal solution $S^*$ provides an explicit witness to verify the causal nonseparability of $W$. On the other hand, if $\tr[S^* \, W] = 0$, one concludes that $W$ is causally separable, and an explicit decomposition of $W$ into causally ordered processes is given by the SDP problem dual to \eqref{eq:sdp_witness} (this can be seen explicitly from the representation of the SDP problem~\eqref{eq:sdp_dual} given in Appendix~\ref{sec:explicit_sdp}). As shown in Appendix~\ref{sec:sdp_duality}, this dual is
  \begin{equation}\label{eq:sdp_dual}
\begin{gathered}
 \min \tr[\Omega]/d_O \\
 \text{s.t.} \quad W + \Omega \in \wsepcone, \quad
 \Omega \in \mathcal W,
\end{gathered}
\end{equation}
where $\wsepcone$ is the cone of non-normalized causally separable process matrices, as previously defined. Furthermore, the optimal value $\tr[\Omega^*]/d_O$ of problem \eqref{eq:sdp_dual} is related to the optimal value $\tr[S^*W]$ of problem~\eqref{eq:sdp_witness} through
\begin{equation}\label{eq:duality_relation}
\tr [\Omega^*]/d_O = -\tr[S^* W].
\end{equation}
This gives an operational meaning to $-\tr[S^* W]$. As shown in Appendix \ref{sec:sdp_duality}, this quantity corresponds to the minimal $\lambda \ge 0$ such that
\begin{equation}
 \frac{1}{1+\lambda}\de{W + \lambda \, \widetilde\Omega}
\end{equation}
is causally separable, optimized over all valid, normalised processes $\widetilde\Omega$. In other words, it quantifies the resistance of $W$ to the worst-case noise. This is an analogue of the measure of entanglement called generalised robustness, which quantifies the resistance of the entanglement of a quantum state to worst-case noise \cite{steiner03}. It turns out that for our case the interpretation of $-\tr[S^* W]$ as a \emph{measure of causal nonseparability} is also tenable, as it respects some simple axioms that we propose in Appendix~\ref{sec:measures}. For this reason, we define the \textit{generalised robustness} of a process $W$ as
\begin{equation}\label{eq:def_generalised_robusntess}
 \gr(W) = -\tr(S^* W).
\end{equation}

Again in analogy with the case of entanglement measures, one can also define the \textit{random robustness} \cite{vidal99} of $W$ as is its resistance to ``white noise'', which can be defined as the process that sends maximally mixed states to each laboratory, independently of the local operations:
\begin{equation}\label{white}
 \id^\circ := \frac{\id}{d_{A_I}d_{B_I}}.
\end{equation}
The optimal witness with respect to random robustness can be found by solving an SDP problem analogous to~\eqref{eq:sdp_witness}:
\begin{equation}\label{eq:sdp_random_robustness}
\begin{gathered}
   \min \tr(SW) \\
  \text{s.t.}\quad S \in \mathcal S_V \, , \quad \tr(S\id^\circ) \le 1 \, ,
\end{gathered}
\end{equation}
whose dual is
  \begin{equation}\label{eq:sdp_random_robustness_primal}
\begin{gathered}
 \min \lambda \\
 \text{s.t.} \quad \lambda \ge 0 \, ,\quad W + \lambda \id^\circ \in \wsepcone \, ,
\end{gathered}
\end{equation}
and random robustness itself is defined as
\begin{equation}\label{eq:def_random_robusntess}
 \rr(W) = -\tr(S^* W) \, ,
\end{equation}
where $\tr(S^* W)$ is now the optimal value of the problem~\eqref{eq:sdp_random_robustness}.
This quantity can be used to compare witnesses in scenarios where white noise is an appropriate noise model, however, it cannot be interpreted as a proper measure of causal nonseparability, as it does not respect all the axioms we propose in appendix \ref{sec:measures} -- more specifically, it is not monotonous under local operations.

A geometrical interpretation of the results of this section is shown in Figure~\ref{fig:geometrical_witness}.

\begin{figure}
%
\begin{tikzpicture}[every node/.style={shape=circle,inner sep=1.1pt,fill=black},
                    every label/.style={fill=none},pin distance=2mm, 
    tangent/.style={
        decoration={
            markings,
            mark=
                at position #1
                with
                {
                    \coordinate (tangent point-\pgfkeysvalueof{/pgf/decoration/mark info/sequence number}) at (0pt,0pt);
                    \coordinate (tangent unit vector-\pgfkeysvalueof{/pgf/decoration/mark info/sequence number}) at (1,0pt);
                    \coordinate (tangent orthogonal unit vector-\pgfkeysvalueof{/pgf/decoration/mark info/sequence number}) at (0pt,1);
                }
        },
        postaction=decorate
    },
    use tangent/.style={
        shift=(tangent point-#1),
        x=(tangent unit vector-#1),
        y=(tangent orthogonal unit vector-#1)
    },
    use tangent/.default=1]
    \node [fill=none, blue] (wsepcone) at (1.6,0.4) {\Large $\wsepcone$};
        \node [fill=none, red] (wcone) at (1.3,-1.8) {\Large $\mathcal W$};
\node [label=0:{$\id^\circ$}] (id) at (0,0) {} ;
\node [label={$W$}] (w) at (-3.4,1.3) {} ; 
\draw [name path=ell,blue,tangent=0.402,tangent=0.465] (id) ellipse (2.8cm and 1cm);
\draw [black, use tangent] (-4.5,0) -- (3.6,0) node [pos=0.2, anchor=south east, fill=none,] {$S_{\rr}$};
\draw [black, use tangent=2] (-4.5,0) -- (3,0) node [pos=0.85,anchor=north west, fill=none] {$S_{\gr}$};
\draw [name path=ell2,red] (-1.2,0.2) ellipse (4.02cm and 3.2cm);
\node [label=-30:{$\Omega$}] (omega) at (0.2,-2.8) {} ; 
\draw [name path=idw, gray!70!black] (id) -- (w);
\draw [name path=omegaw, gray!70!black] (w) -- (omega);
\node[name intersections={of=idw and ell},label={45:{$W^\text{r}$}}] (is1) at (intersection-1) {}; 
\node[name intersections={of=omegaw and ell},label=-0:{$W^\text{g}$}] (is2) at (intersection-1) {}; 
\draw[decorate,decoration={brace,raise=1pt},gray]  (w) -- (is1) node [midway, above, sloped, fill=none,shape=rectangle, inner sep=4pt] {\footnotesize $d(W, W^\text{r})$};  
\draw[decorate,decoration={brace,raise=1pt},gray]  (is1) -- (id) node [pos=0.6, above, sloped, fill=none,shape=rectangle, inner sep=4pt] {\footnotesize $d(W^\text{r}, \id^\circ)$};  
\draw[decorate,decoration={brace,raise=1pt},gray]  (is2) -- (w) node [pos=0.75, below, sloped, fill=none,shape=rectangle, inner sep=4pt] {\footnotesize $d(W, W^\text{g})$};  
\draw[decorate,decoration={brace,raise=1pt},gray]  (omega) -- (is2)node [midway, below, sloped, fill=none,shape=rectangle, inner sep=4pt] {\footnotesize $d(W^\text{g}, \Omega)$};  
\end{tikzpicture}
%
\caption{
Here we schematically represent the set of normalised process matrices in $\mathcal W$ by the red ellipse and the set of normalised causally separable processes in $\wsepcone$ by the blue ellipse. Since the latter set is closed and convex, any causally nonseparable process $W$ is separated from it by a hyperplane, corresponding to an operator $S$ which we call a \textit{causal witness}. In the figure we represent two such causal witnesses, $S_{\gr}$ and $S_{\rr}$, that represent two different ways to quantify how far $W$ is from being causally separable. $-\tr(S_{\gr}W)$ measures the generalised robustness of $W$, which is its resistance to the worst-case noise $\Omega$. Geometrically, the generalised robustness of $W$ is given by the ratio of distances $d(W,W^\text{g})/d(W^\text{g},\Omega)$, where $W^\text{g}$ is the causally separable process closest to $W$ on the depicted line. In its turn, $-\tr(S_{\rr} W)$, the random robustness of $W$, is its resistance to the ``white noise'' $\id^\circ$. Geometrically, it is given by analogous ratio $d(W,W^\text{r})/d(W^\text{r},\id^\circ)$, where $W^\text{r}$ is again the causally separable process closest to $W$ on the depicted line. $S_{\gr}$ and $S_{\rr}$ are the optimal solutions of the SDP problems~\eqref{eq:sdp_witness} and~\eqref{eq:sdp_random_robustness}, respectively.
}
\label{fig:geometrical_witness}
\end{figure} 

\subsection{Implementing causal witnesses}\label{sec:measuring_witnesses}

Once a causal witness $S$ has been obtained for a given causally nonseparable process matrix $W$, a natural question is how to ``measure'' it, \ie, how to access the quantity $\tr[S \, W]$ -- and, in particular, check its sign -- experimentally.

To do so, note that as $S \in {A_I}\otimes {A_O}\otimes{B_I}\otimes{B_O}$ is a hermitian operator, it can always be decomposed as a linear combination of the form\footnote{In the decomposition~\eqref{eq:WitnDecomp}, $x,y,a$ and $b$ should \emph{a priori} simply be understood as labels for $M_{a|x}^{A_IA_O}$ and $M_{b|y}^{B_IB_O}$. We can however assume, without loss of generality, that ${}_{A_O} \big( \sum_a M_{a|x}^{A_IA_O} \big) \le \id^{A_IA_O} / d_{A_O}$ and ${}_{B_O} \big( \sum_b M_{b|y}^{B_IB_O} \big) \le \id^{B_IB_O} / d_{B_O}$ for all $x, y$ (we can indeed always include scaling factors in the coefficients $\gamma_{x,y,a,b}$). Introducing, when required, some complementary positive semidefinite operators $M_{\varnothing|x}^{A_IA_O}$ and $M_{\varnothing|y}^{B_IB_O}$ (with null coefficients $\gamma_{x,y,a,b}$), so that now ${}_{A_O} \big( \sum_a M_{a|x}^{A_IA_O} \big) = \id^{A_IA_O} / d_{A_O}$ and ${}_{B_O} \big( \sum_b M_{b|y}^{B_IB_O} \big) = \id^{B_IB_O} / d_{B_O}$, the sets $\{M^{A_I A_O}_{a|x}\}_a$ and $\{M^{A_I A_O}_{b|y}\}_b$ can then be interpreted as the CJ representation of instruments, for which $x,y$ are inputs and $a,b$ are outputs.}
\begin{equation}\label{eq:WitnDecomp}
S = \sum_{x,y,a,b} \gamma_{x,y,a,b} \ M_{a|x}^{A_IA_O} \otimes M_{b|y}^{B_IB_O} \, ,
\end{equation}
where $\gamma_{x,y,a,b}$ are real coefficients and $M_{a|x}^{A_IA_O}$ and $M_{b|y}^{B_IB_O}$ are positive semidefinite matrices that can be interpreted as the Choi-Jamio{\l}kowski representation of CP trace non-increasing maps (see Section~\ref{sec:formalism}). 

Expanding $\tr[S \, W]$,
\begin{eqnarray}
\tr[S \, W] &=& \! \sum_{x,y,a,b} \gamma_{x,y,a,b} \ \tr \big[ \big( M_{a|x}^{A_IA_O} \otimes M_{b|y}^{B_IB_O} \big) \, W \big], \qquad \label{eq:WitnDecomp_probas}
\end{eqnarray}
where according to the generalized Born rule~\eqref{born}, the terms $\tr \big[ \big( M_{a|x}^{A_IA_O} \otimes M_{b|y}^{B_IB_O} \big) \, W \big]$ represent the probabilities $P \big( M_{a|x}^{A_IA_O}, M_{b|y}^{B_IB_O} \big)$ that the maps $M_{a|x}^{A_IA_O}$ and $M_{b|y}^{B_IB_O}$ are realized.
We assume that these CP maps can be implemented even if the causal order of the parties is not well-defined. The quantity $\tr[S \, W]$ can thus in principle be implemented experimentally by estimating the probabilities $P \big( M_{a|x}^{A_IA_O}, M_{b|y}^{B_IB_O} \big)$ and combining them as in Eq.~\eqref{eq:WitnDecomp_probas}.

The decomposition~\eqref{eq:WitnDecomp} is not unique. Furthermore, as noted before we can add to any witness $S$ a term $S^\perp$ such that $L_V(S^\perp) = 0$ without changing its validity or its trace with any valid process. Hence, it actually suffices to find a decomposition for $S + S^\perp$ for some arbitrary $S^\perp$, implement the corresponding maps, and combine their statistics as above.

\subsection{Example}\label{sec:witness_example}

Let us now illustrate the above considerations on an explicit example. Ref.~\cite{oreshkov12} introduced the following process matrix, for a case where all incoming and outgoing systems of $A$ and $B$ are 2-dimensional (qubit) systems (\ie, $d_{A_I} = d_{A_O} = d_{B_I} = d_{B_O} = 2$):
\begin{equation} \label{W_OCB}
  W_\text{OCB} = \frac{1}{4} \left[ \id + \frac{\id^{A_I} Z^{A_O} Z^{B_I} \id^{B_O} + Z^{A_I} \id^{A_O} X^{B_I} Z^{B_O}}{\sqrt{2}} \right] \, , 
\end{equation}
where $Z$ and $X$ are the Pauli matrices, and tensor products are implicit.
One can easily check that $W_\text{OCB} \ge 0$, that $\tr [ W_\text{OCB} ] = 4 = d_O$, and that $W_\text{OCB}$ satisfies Eqs.~\eqref{eq:valid_bipartite_1}--\eqref{eq:valid_bipartite_3}, which ensures that it is indeed a valid process matrix.
It was shown that $W_\text{OCB}$ allows for a violation of a causal inequality (see Section~\ref{sec:causal_ineqs}), which implies that it is causally nonseparable.

\medskip

The concept of causal witnesses introduced here allows us to prove the causal nonseparability of $W_\text{OCB}$ more directly. Solving the SDP problem~\eqref{eq:sdp_random_robustness} with YALMIP~\cite{yalmip} and the solver MOSEK~\cite{mosek}, we obtained, up to numerical precision, the optimal witness with respect to random robustness
\begin{equation} \label{S_OCB}
  S_\text{OCB} = \frac{1}{4} \left[ \id - \big( \id^{A_I} Z^{A_O} Z^{B_I} \id^{B_O} + Z^{A_I} \id^{A_O} X^{B_I} Z^{B_O} \big) \right] \, .
\end{equation}
Applying it to $W_\text{OCB}$, we find that $-\tr[S_\text{OCB} \, W_\text{OCB}] = \rr(W_\text{OCB}) = \sqrt{2} -1  > 0$ (where $\rr(W_\text{OCB})$ is the random robustness as defined in Equation~\eqref{eq:def_random_robusntess}). This proves that $W_\text{OCB}$ is causally nonseparable.

This also implies that the process matrices of the form
\begin{equation}\label{W_OCBv}
  W_\text{OCB}(\lambda) = \frac{1}{1+\lambda} \, \de{ W_\text{OCB} + \lambda \, \id^\circ},
\end{equation}
are causally nonseparable for 
\begin{equation}
 \lambda < \rr(W_\text{OCB}) = \sqrt2 -1 
\end{equation}
(and their causal nonseparability is then witnessed by $S_\text{OCB}$).
For $\lambda \ge \sqrt2 -1$, $W_\text{OCB}(\lambda)$ is causally separable; the solution of the SDP problem~\eqref{eq:sdp_random_robustness_primal} provides an explicit decomposition for $W_\text{OCB}\de{\rr(W_\text{OCB})}$ (as can be seen when writing~\eqref{eq:sdp_random_robustness_primal} in a form similar to Eq.~\eqref{eq:sdp_dual_D2}), from which we can derive an explicit decomposition for all $W_\text{OCB}(\lambda)$ for $\lambda \ge \sqrt2 -1 $, as
\begin{equation}\label{W_OCBv_decomp}
  W_\text{OCB}(\lambda) = \frac{1}{2} W_\text{OCB}^{A \prec B}(\lambda) + \frac{1}{2} W_\text{OCB}^{B \prec A}(\lambda),
\end{equation}
where
\begin{align}
  W_\text{OCB}^{A \prec B}(\lambda) &:= \frac{1}{4} \left[ \id + \frac{\sqrt{2}}{1+\lambda} \ \id^{A_I} Z^{A_O} Z^{B_I} \id^{B_O} \right] \, ,  \\
  \quad W_\text{OCB}^{B \prec A}(\lambda) &:= \frac{1}{4} \left[ \id + \frac{\sqrt{2}}{1+\lambda} \ Z^{A_I} \id^{A_O} X^{B_I} Z^{B_O} \right]
\end{align}
are causally ordered process matrices. (Note that for $\lambda < \sqrt{2} -1$, $W_\text{OCB}^{A \prec B}(\lambda)$ and $W_\text{OCB}^{B \prec A}(\lambda)$ as defined above would not be positive semidefinite, which explains why Eq.~\eqref{W_OCBv_decomp} then fails to provide a valid causally separable decomposition of $W_\text{OCB}(\lambda)$.)

\medskip

To measure the witness $S_\text{OCB}$ and obtain the quantity $\tr[S_\text{OCB} \cdot W]$ experimentally, one can for instance decompose it in the following way: define, for $x,y,y',a,b = 0,1$, the CJ matrices
\begin{eqnarray}
M_{a|x}^{A_IA_O} &:=& \Big( \frac{\id {+} (-1)^a Z}{2} \Big)^{\!A_I} \!\otimes \Big( \frac{\id {+} (-1)^x Z}{2} \Big)^{\!A_O} , \label{MAA_OCB} \\
M_{b|y,y'=0}^{B_IB_O} &:=& \Big( \frac{\id {+} (-1)^b X}{2} \Big)^{\!B_I} \!\otimes \Big( \frac{\id {+} (-1)^{y+b} Z}{2} \Big)^{\!B_O} , \qquad \label{MBB0_OCB} \\
M_{b|y,y'=1}^{B_IB_O} &:=& \Big( \frac{\id {+} (-1)^b Z}{2} \Big)^{\!B_I} \!\otimes \frac{\id^{B_O}}{2} \, , \label{MBB1_OCB}
\end{eqnarray}
which represent measure-and-prepare maps (see Appendix~\ref{sec:cj}).
One can then check that
\begin{eqnarray}
S_\text{OCB} = 3 \cdot \id^\circ - 4 \, G_\text{OCB}
\end{eqnarray}
with
\begin{eqnarray}
G_\text{OCB} = \frac{1}{8} \sum_{x,y,a,b} \Big[ && \delta_{a,y} \ M_{a|x}^{A_IA_O} \! \otimes \! M_{b|y,y'=0}^{B_IB_O} \nonumber \\[-2mm]
&& + \ \delta_{b,x} \ M_{a|x}^{A_IA_O} \! \otimes \! M_{b|y,y'=1}^{B_IB_O} \Big] \, ,
\end{eqnarray}
where $\delta_{j,k}$ is the Kronecker delta.
Thus, one can compute $\tr [S_\text{OCB} \cdot W]$ by performing the maps above on $W$ and combining the probabilities $P\big(M_{a|x}^{A_IA_O}, M_{b|y,y'}^{B_IB_O}\big) = \tr [M_{a|x}^{A_IA_O} \! \otimes \! M_{b|y,y'}^{B_IB_O} \cdot W]$ as follows:
\begin{eqnarray}
&& \tr [ S_\text{OCB} \cdot W ] = 3 - 4 \, \tr [ G_\text{OCB} \cdot W ] \nonumber \\
&& \qquad = 3 - 4 \cdot \frac{1}{8} \sum_{x,y,a,b} \Big[ \delta_{a,y} \ P\big(M_{a|x}^{A_IA_O}\!, M_{b|y,y'=0}^{B_IB_O}\big) \nonumber \\[-3mm]
&& \hspace{3.2cm} + \delta_{b,x} \ P\big(M_{a|x}^{A_IA_O}\!, M_{b|y,y'=1}^{B_IB_O}\big) \Big] \, . \quad
\end{eqnarray}

\medskip

As one may recognize, the choice of CP maps in~\eqref{MAA_OCB}--\eqref{MBB1_OCB} is the same\footnote{Note that compared to Ref.~\cite{oreshkov12}, we exchanged in the present paper the notations $x,y$ and $a,b$ for inputs and outputs, so as to use here the same notations as most of the recent works on quantum and nonlocal correlations \cite{brunner14}. Furthermore, in~\cite{oreshkov12} the state sent out by $B$ when $y'=1$ was arbitrary, while here we fixed it to be $\id^{B_O}/2$.} as that considered in Ref.~\cite{oreshkov12}, so that the experimental procedure proposed here to measure the witness $S_\text{OCB}$ would be the same as that suggested in~\cite{oreshkov12} to violate a causal inequality.
The labels $x,y,y',a,b$ can be considered as inputs and outputs for the above maps (which indeed satisfy ${}_{A_O} \big(\sum_a M_{a|x}^{A_IA_O} \big) = \id^{A_IA_O} / d_{A_O}$ and ${}_{B_O} \big( \sum_b M_{b|y,y'}^{B_IB_O} \big) = \id^{B_IB_O} / d_{B_O}$ for all $x,y,y'$).
As it turns out, in the causal inequality of Ref.~\cite{oreshkov12} the probabilities $P(a,b|x,y,y') = P\big(M_{a|x}^{A_IA_O}, M_{b|y,y'}^{B_IB_O}\big)$ are actually combined in precisely the same way as above -- namely, $\tr [ G_\text{OCB} \cdot W ]$ above can be identified with the probability $p_\text{succ}$ of winning the corresponding ``causal game'',
\begin{equation}\label{eq:ocb_causal_inequality}
p_\text{succ} =  \frac12 \De{P(a=y|y'=0) + P(b=x|y'=1)},
\end{equation}
when the inputs $x, y, y' = 0, 1$ are given with equal probabilities. 

Remarkably, in this particular case the bounds of the causal witness $S_\text{OCB}$ and of the causal inequality \eqref{eq:ocb_causal_inequality} coincide, \ie, $\tr [ S_\text{OCB} \cdot W ] \ge 0$ if and only if $p_\text{succ} = \tr [ G_\text{OCB} \cdot W ] \le 3/4$, where $3/4$ is the upper bound on $p_\text{succ}$ for any causal correlation (as defined in Section~\ref{sec:causal_ineqs} below). Furthermore, the noise threshold below which the noisy process matrix $W_\text{OCB}(\lambda)$~\eqref{W_OCBv_decomp} can violate the causal inequality is the same as the threshold $\rr(W_\text{OCB})$ below which $W_\text{OCB}(\lambda)$ is causally nonseparable, as already noted in Ref.~\cite{brukner14}. This is however not a general property of causal witnesses and causal inequalities: similarly to the case of entanglement vs. quantum nonlocality and of entanglement witnesses vs. Bell inequalities~\cite{brunner14}, there exist causally nonseparable process matrices that cannot yield any violation of any causal inequality -- while there always exists a causal witness that detects their causal nonseparability. We will come back to this issue in Section~\ref{sec:causal_ineqs} below, with an explicit example in the tripartite case.

\section{Quantum control of causal order}\label{sec:Q_control}
\subsection{The quantum switch}
   It has recently been suggested that quantum computation can be extended beyond the framework of quantum circuits, which enforces a fixed order between the execution of quantum gates. The main idea is that the order in which gates are performed can be coherently controlled by a quantum system. The new resource that allows for such a control is the \textit{quantum switch}, first proposed in Ref.~\cite{chiribella09}. It works as follows: consider a two-qubit system, composed of a control and of a target qubit. Two parties $A$ and $B$ act on the target qubit with the unitaries $U_A$, $U_B$ respectively. If the control qubit is prepared in the state $\ket{0}$, $U_A$ is applied to the target before $U_B$, while if the control is in state $\ket{1}$ the two unitaries are applied in the reversed order. The global unitary, acting on both the target and control qubits, is thus
\begin{equation}
V(U_A,\,U_B) = \proj{0}\otimes U_BU_A + \proj{1}\otimes U_AU_B,
\label{unitaryswitch}
\end{equation}
where the first factor in each tensor product acts on the control system and the second factor acts on the target. For an initial state $\frac{\ket{0}+\ket{1}}{\sqrt{2}}\otimes\ket{\psi}$ of the control-target system, one gets, after applying $V$, the state $\frac{1}{\sqrt{2}}\big(\ket{0} \otimes U_BU_A\ket{\psi}+\ket{1} \otimes U_AU_B\ket{\psi}\big)$, which can be interpreted as having applied the two unitaries on the target in a ``superposition of orders''\footnote{Since any CP map can be purified to a unitary evolution by introducing an ancillary system and a projective measurement on some subsystem of the original system and ancilla, the notion of superposition of orders can be easily extended from unitary operations to arbitrary CP maps by introducing an ancillary register for each party.}.

Note that if the control system is discarded, one is left with the mixed state
\begin{equation}
\frac{1}{2}\left(U_BU_A \proj{\psi}U_A^{\dag}U_B^{\dag} + U_AU_B\proj{\psi}U_B^{\dag}U_A^{\dag}\right).
\label{mixed}
\end{equation}
This can be produced by randomly exchanging the order in which $U_A$ and $U_B$ are applied and thus can be seen as an equal mixture of causally ordered processes. To make the situation more interesting, we shall be led to introduce a third party, $C$, who can perform measurements on the control qubit (and possibly also on the target qubit) in order to define a causally nonseparable process (using the definition \eqref{eq:wsep_switch}) using quantum control of causal order.

\subsection{Process matrix representation of the quantum switch}
For our purposes, we can formally represent the quantum switch (with fixed input state) as a tripartite process matrix: the two parties $A$ and $B$ perform an arbitrary CP map each on the target qubit, while $C$ performs an arbitrary two-qubit POVM measurement on the resulting control-target state (with no outgoing system). The dimensions of input and output systems of the local laboratories are therefore
\begin{equation}
d_{A_I}=d_{A_O}=d_{B_I}=d_{B_O}=2,\quad d_{C_I}=4,\quad d_{C_O}=1.
\label{dimensions}
\end{equation}
For clarity, we shall divide $C$'s input space as $C_I=C_I^c\otimes C_I^t$, where $C_I^c$ and $C_I^t$ refer to the control and target qubits, respectively (with therefore $d_{C_I^c}=d_{C_I^t}=2$).

In order to describe the process matrix of the quantum switch, we are first going to make use of the ``pure'' version of the formalism, described in Appendix~\ref{sec:appendix_process_matrices}. An identity channel from a party's output space $A_O$ to another party's input space $B_I$ is described, as a process matrix, by the projector onto the ``process vector'' $\Ket{\id}^{A_OB_I}=\sum_{j=0,1} \ket{j}^{A_O}\ket{j}^{B_I}$. The situation where $A$ receives a state $\ket{\psi}$, performs an arbitrary operation on it, and sends the output directly to $B$ through an identity channel, who in turn sends the output of his operation to $C_I^t$, is represented by the process vector $\ket{\psi}^{A_I}\Ket{\id}^{A_OB_I}\Ket{\id}^{B_OC_I^t}$, see Appendix~\ref{sec:appendix_process_matrices}.
Then the quantum switch, with the control qubit initially in the state $\frac{\ket{0}+\ket{1}}{\sqrt{2}}$ and the target qubit in the state $\ket{\psi}$, is represented by the process matrix $\proj{w}$, where
\begin{eqnarray}
\label{eq:qswitch}
\ket{w} &=&  \frac{1}{\sqrt{2}}\left(\ket{\psi}^{A_I}\Ket{\id}^{A_OB_I}\Ket{\id}^{B_OC_I^t}\ket{0}^{C_I^c} \right. \nonumber \\[-2mm]
&& \qquad \ + \left. \ket{\psi}^{B_I}\Ket{\id}^{B_OA_I}\Ket{\id}^{A_OC_I^t}\ket{1}^{C_I^c} \right).
\end{eqnarray}
This can be checked by noting that
\begin{multline}
 \Bra{U_A^*}^{A_IA_O}\Bra{U_B^*}^{B_IB_O} \cdot \ket{w} = \\ 
\frac1{\sqrt{2}} \Big( \ket{0}^{C_I^c} \otimes \big( U_BU_A\ket{\psi} \big)^{C_I^t} + \ket{1}^{C_I^c} \otimes \big( U_AU_B\ket{\psi} \big)^{C_I^t} \Big),
\end{multline}
where $\Ket{U_A^*}^{A_IA_O} := \id \otimes U_A^* \Ket{\id}$ is the ``pure'' CJ representation of $U_A$, and similarly for $\Ket{U_B^*}^{B_IB_O}$ (and with $\Bra{U^*} = \Ket{U^*}^{\!\dagger\,}$); see Appendix~\ref{sec:cj}.
Note that the resulting state is the same as that obtained by applying~\eqref{unitaryswitch} to an initial state $\frac{\ket{0}+\ket{1}}{\sqrt{2}}\otimes\ket{\psi}$.

Note that the process~\eqref{eq:qswitch} itself is clearly causally nonseparable\footnote{Note that the results of Ref.~\cite{chiribella09} only show that the quantum switch cannot be realized by a circuit with a fixed order of gates, but the more general notion of causal (non)separability was not considered.}, since \begin{inparaenum}[\itshape i\upshape)]
\item it is a superposition of a pure process only compatible with the order $A \prec B \prec C$ and a pure process only compatible with the order $B \prec A \prec C$ and \item it is a projector onto a pure vector, thus it cannot be written as a nontrivial mixture of causally ordered processes. \end{inparaenum}

From Eq.~\eqref{eq:qswitch}, one finds (using the facts that $\tr_{C_I^t} (\KetBra{\id}^{B_OC_I^t}) = \id^{B_O}$ and $\tr_{C_I^t} (\KetBra{\id}^{A_OC_I^t}) = \id^{A_O}$) that by tracing out $C$, one gets
\begin{equation}
\tr_{C_I} \, \proj{w} = \tr_{C_I^cC_I^t} \, \proj{w} = \frac{1}{2} W^{A\prec B}+ \frac{1}{2} W^{B\prec A}, 
\end{equation}
where
\begin{align}
W^{A\prec B} &= \proj{\psi}^{A_I} \otimes \KetBra{\id}^{A_OB_I} \otimes \id^{B_O}, \\
W^{B\prec A} &= \proj{\psi}^{B_I} \otimes \KetBra{\id}^{B_OA_I} \otimes \id^{A_O},
\end{align}
are (bipartite) causally ordered process matrices; $\tr_{C_I} \, \proj{w}$ indeed describes the situation of Eq.~\eqref{mixed}.

For some information-processing tasks, the quantum switch is known to provide an advantage over causally ordered processes~\cite{chiribella12,araujo14}, even when $C$ ignores the target system and only measures the control system. We will thus restrict our attention to witnesses of the form $S^{C_I^c}\otimes \id^{C_I^t}$, which can simplify the analysis and the experimental implementation. The reduced process we will be dealing with is the partial trace of the quantum switch \eqref{eq:qswitch} over the target system:
  \begin{equation}\label{eq:switch_without_ci2}
   W_\text{switch} = \tr_{C_I^t} \proj{w}.
  \end{equation}
Note that the proof of causal nonseparability based on the purity of the switch does not extend to the reduced switch \eqref{eq:switch_without_ci2}, since it is not an extremal process. We will therefore use the framework of causal witnesses to show that the reduced switch is also causally nonseparable.

\section{Witnesses for the quantum switch}\label{sec:switch_witness}

Since the quantum switch is a tripartite process where ${d_{C_{O}}=1}$, we can use definition \eqref{eq:wsep_switch} to study its causal (non)separability.
In this tripartite situation, we will define causal witnesses to be the hermitian operators $S$ such that
\begin{equation}\label{eq:def_witness_switch}
 \tr\De{S\,W^{\text{sep}}} \ge 0
\end{equation}
for every causally separable processes $W^{\text{sep}}$ in the cone $\wsepcone_{3C}$.
The set of causal witnesses is thus the cone dual to $\wsepcone_{3C}$, which we denote by $\mathcal S_{3C}$, or $\mathcal S_{3C,V}$ when restricted to $\mathcal L_V$. The characterization of $\mathcal S_{3C}$ is given by the following theorem:
  \begin{theorem}\label{thm:switch_witness}
  A hermitian operator $S\in {A_I}\otimes {A_O}\otimes{B_I}\otimes{B_O}\otimes{C_I}\otimes{C_O}$ with ${d_{C_O} = 1}$ is a \textit{causal witness} if and only if $S$ can be written as
  \begin{equation}
   S = S^P_{ABC} + S^\perp_{ABC} = S^P_{BAC} + S^\perp_{BAC},
  \end{equation}
  where
  \begin{gather}
   S^P_{ABC} \ge 0, \quad L_{A \prec B \prec C}(S^\perp_{ABC}) = 0, \\
   S^P_{BAC} \ge 0, \quad L_{B \prec A \prec C}(S^\perp_{BAC}) = 0,
  \end{gather}
  with $L_{A \prec B \prec C}$ and $L_{B \prec A \prec C}$ as defined in Subsection~\ref{sec:causally_separable_3c}.
\end{theorem}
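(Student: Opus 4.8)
The statement has exactly the same logical shape as Theorem~\ref{th:witness}, and the plan is to prove it by the same convex-duality route, now applied to the two full causal orders $A\prec B\prec C$ and $B\prec A\prec C$ rather than to the marginal conditions of the two-party case. By definition $\mathcal S_{3C}$ is the cone dual to $\wsepcone_{3C}$, so I would first rewrite $\wsepcone_{3C}$ as a Minkowski sum of two elementary cones. Writing $\mathcal L_{A\prec B\prec C}$ for the subspace onto which $L_{A\prec B\prec C}$ projects (and similarly for $B\prec A\prec C$), and letting $\mathcal P$ denote the cone of positive semidefinite operators, the set of non-normalised causally ordered processes compatible with $A\prec B\prec C$ is exactly $K_{ABC} := \{W\ge 0 : W = L_{A\prec B\prec C}(W)\} = \mathcal P \cap \mathcal L_{A\prec B\prec C}$, and likewise $K_{BAC} = \mathcal P \cap \mathcal L_{B\prec A\prec C}$. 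Ignoring normalisation, the defining relation~\eqref{eq:wsep_switch} then identifies $\wsepcone_{3C}$ with the Minkowski sum $K_{ABC} + K_{BAC}$.

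The computation then proceeds in two duality steps. First, for closed convex cones the dual of a sum is the intersection of the duals, so $\mathcal S_{3C} = (K_{ABC}+K_{BAC})^* = K_{ABC}^* \cap K_{BAC}^*$; thus $S$ is a witness if and only if $\tr[SW]\ge 0$ holds separately for every $W\in K_{ABC}$ and every $W\in K_{BAC}$, the two-order analogue of conditions~\eqref{eq:wab}--\eqref{eq:wba}. Second, I would compute each factor using the dual of an intersection, $(\mathcal P \cap \mathcal L)^* = \overline{\mathcal P^* + \mathcal L^\perp}$. Since $\mathcal P$ is self-dual and the dual of a subspace is its orthogonal complement, this gives $K_{ABC}^* = \overline{\mathcal P + \mathcal L_{A\prec B\prec C}^\perp}$, so that $S\in K_{ABC}^*$ means precisely $S = S^P_{ABC} + S^\perp_{ABC}$ with $S^P_{ABC}\ge 0$ and $S^\perp_{ABC}\in\mathcal L_{A\prec B\prec C}^\perp$, the latter being equivalent to $L_{A\prec B\prec C}(S^\perp_{ABC}) = 0$ because $L_{A\prec B\prec C}$ is the orthogonal projector. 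Combining the two factors yields exactly the claimed double decomposition $S = S^P_{ABC}+S^\perp_{ABC} = S^P_{BAC}+S^\perp_{BAC}$.

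The only genuine obstacle is justifying that the sum $\mathcal P + \mathcal L_{A\prec B\prec C}^\perp$ is already \emph{closed}, so that the closure bar can be dropped and the decomposition is actually attained rather than merely approximated. This requires a Slater-type constraint qualification: it suffices to exhibit a strictly positive definite element of $\mathcal L_{A\prec B\prec C}$, i.e.\ a full-rank causally ordered process. The maximally mixed non-signalling process, proportional to $\id$, works — it is positive definite and, being non-signalling, lies in $\mathcal L_{A\prec B\prec C}$ and in $\mathcal L_{B\prec A\prec C}$ — so the interior of $\mathcal P$ meets each subspace and both sums are closed. This is precisely the topological point handled carefully in the proof of Theorem~\ref{th:witness}, so I would either invoke that argument directly or repeat the short lemma verbatim. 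The remaining manipulations — self-duality of $\mathcal P$, the identity $(C_1+C_2)^*=C_1^*\cap C_2^*$, and the identification of $\mathcal L^\perp$ with the kernel of the projector — are entirely routine.
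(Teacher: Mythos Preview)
Your proposal is correct and follows essentially the same route as the paper: write $\wsepcone_{3C}$ as (the convex hull of) the union of the two elementary cones $\mathcal P\cap\mathcal L_{A\prec B\prec C}$ and $\mathcal P\cap\mathcal L_{B\prec A\prec C}$, dualise to get the intersection of their duals, and identify each dual as $\mathcal P+\mathcal L^\perp$. The only differences are cosmetic (you phrase things via Minkowski sums rather than $\conv(\cdot\cup\cdot)$, which coincide for cones through the origin) and that you are more explicit about the closure issue than the paper, which simply applies the Rockafellar duality rules~\eqref{eq:dual_properties} as stated; in particular, the paper's proof of Theorem~\ref{th:witness} does not isolate a separate Slater-type lemma, so your plan to ``invoke that argument directly'' should be replaced by the short interior-point argument you already sketch with $\id$.
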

The proof is given in Appendix~\ref{app:proof_thm3}.
This characterization allows us to cast the problem of finding a witness for the quantum switch (or in fact for any process $W$ with $d_{C_O} = 1$) as an SDP problem analogous to \eqref{eq:sdp_witness}:
\begin{equation}\label{eq:sdp_witness_switch}
\begin{gathered}
   \min \tr(SW) \\
  \text{s.t.}\quad S \in \mathcal S_{3C,V}, \quad \id/d_O - S \in {\mathcal{W}^*_{3C,V}} \, ,
\end{gathered}
\end{equation}
where ${\mathcal{W}^*_{3C,V}} := {\mathcal{W}^*_{3C}} \cap \mathcal L_V$, with ${\mathcal{W}^*_{3C}}$ the dual of the cone ${\mathcal{W}_{3C}}$ of (non-normalized) tripartite process matrices with $d_{C_O} = 1$.

Analogously to problems \eqref{eq:sdp_witness}--\eqref{eq:sdp_dual}, the dual of \eqref{eq:sdp_witness_switch} writes
\begin{equation}\label{eq:sdp_dual_switch}
\begin{gathered}
 \min \tr[\Omega]/d_O \\
 \text{s.t.} \quad W + \Omega \in \wsepcone_{3C}, \quad
 \Omega \in \mathcal W_{3C},
\end{gathered}
\end{equation}
and the optimal values of \eqref{eq:sdp_witness_switch} and~\eqref{eq:sdp_dual_switch} respect the duality relation \eqref{eq:duality_relation}, which allows us to interpret $-\tr(S^*W)$ as generalised robustness also in this case. Furthermore, \eqref{eq:sdp_witness_switch} and \eqref{eq:sdp_dual_switch} respect the assumptions of the Duality Theorem, and therefore SDP algorithms can find their optimal solutions efficiently. We shall, however, omit the proofs, as they are simply a slight modification of the ones already presented in Appendix~\ref{sec:sdp_duality}.

\subsection{Optimal witness}\label{sec:SDP_witness_switch}

To find the optimal generalised robustness witness for the quantum switch we need to solve SDP problem~\eqref{eq:sdp_witness_switch} providing $W_\text{switch}$ from Eq.~\eqref{eq:switch_without_ci2} as an argument. Solving it using YALMIP and the solver MOSEK we obtain a witness $S_\text{optimal}$ numerically; the generalised robustness of the quantum switch is found to be
\begin{equation}
 \gr(W_\text{switch}) = -\tr S_\text{optimal} W_\text{switch} \approx 0.5454 \,.
\end{equation}
Later in this section we will compare this number to that obtained from non-optimal witnesses. For this purpose, we shall use the amount of worst-case noise tolerated by a witness, \ie, the amount of worst-case noise that can be added to the quantum switch before the witness can no longer detect its causal nonseparability. It should be clear that, when the said witness is optimal, this number reduces to the generalised robustness of the quantum switch.

\subsection{Chiribella's witness}\label{sec:game_witness}

In Ref.~\cite{chiribella12} Chiribella proposed an information-processing task for which the quantum switch had an advantage over causally ordered processes. We want to understand what this advantage means, and how it relates to causal nonseparability. For that we shall present a slightly modified version of his task and show how it can be understood as a causal witness.

Our version of the task is as follows: Alice (party $A$) receives a qubit in her lab, applies a unitary $U_A$ to it, and sends it away. Bob (party $B$) receives a qubit in his lab, applies a unitary $U_B$ to it, and sends it away. We assume that in each run of the experiment, $U_A$ and $U_B$ either commute or anticommute. Charlie (party $C$) receives a qubit in his lab, and makes a measurement on it to decide whether $U_A$ and $U_B$ commute or anticommute.

To construct a causal witness in relation to this task, we start with the Choi-Jamio{\l}kowski representation of the actions of the parties: Alice applying a unitary $U_A$, Bob applying a unitary $U_B$, and Charlie obtaining the result $\pm$ when measuring in the $\ket{\pm} = \frac{\ket{0}\pm\ket{1}}{\sqrt{2}}$ basis.
Using the CJ representations $\Ket{U_A^*}$ and $\Ket{U_B^*}$ of $U_A$ and $U_B$ (see Appendix~\ref{sec:cj}), the corresponding operator is 
\begin{equation}
 G^{U_A,U_B}_{\pm} = \KetBra{U_A^*}\otimes \KetBra{U_B^*} \otimes \ketbra{\pm}{\pm} \, .
\end{equation}
The witness corresponding to the task is obtained by averaging over the cases where Charlie obtains $+$ when Alice and Bob apply commuting unitaries, and the cases where Charlie obtains $-$ when Alice and Bob apply anticommuting unitaries:
\begin{equation}\label{eq:chiribella_game}
 G_\text{Chiribella} = \frac12 \int \mathrm{d}\mu_{\com} \, G^{U_A,U_B}_+ + \frac12\int \mathrm{d}\mu_{\ant} \, G^{U_A,U_B}_- \, ,
\end{equation}
where $\mathrm{d}\mu_{\com}$ is a measure over commuting unitaries, and $\mathrm{d}\mu_{\ant}$ is a measure over anticommuting unitaries (we assume here that the cases where $U_A$ and $U_B$ commute and anticommute each appear with probability $\frac12$). The probability of success in this task when the parties are using a strategy described by a process matrix $W$ is then
\begin{equation}
 p_\text{succ} = \tr [ G_\text{Chiribella} W]\, .
\end{equation}
It is easy to check that for any choice of measures $\mathrm{d}\mu_{\com},\mathrm{d}\mu_{\ant}$ the probability of success is $1$ when $W = W_\text{switch}$. The maximal probability of success for causally separable processes, however, depends crucially on the measures $\mathrm{d}\mu_{\com}$ and $\mathrm{d}\mu_{\ant}$. If we were to choose, for example, measures that only produce pairs of Pauli matrices, then there is a causally separable circuit\footnote{One such circuit, described in Ref.~\cite{chiribella12}, involves applying the Pauli unitaries to one half of a maximally entangled state and doing a measurement in the Bell basis.} that can decide the commutativity or anticommutativity with probability $1$.

To avoid this problem we will first choose measures that can produce \emph{any} pair of commuting or anticommuting unitaries (modulo global phases). Specifically, we choose the commuting measure $\mathrm{d}\mu_{\com}$ to pick up commuting unitaries of the form 
\begin{equation}
 U_A = U\begin{pmatrix} 1 & 0 \\ 0 & e^{i\theta_1} \end{pmatrix}U^\dagger \mathand U_B = U\begin{pmatrix} 1 & 0 \\ 0 & e^{i\theta_2} \end{pmatrix}U^\dagger,
\end{equation}
where $U$ is uniformly distributed according to the Haar measure, and $\theta_i$ are uniformly distributed in the interval $[0,2\pi]$. For the anticommuting measure $\mathrm{d}\mu_{\ant}$, we will use $U_A = VXV^\dagger$ and $U_B=VZV^\dagger$, where $V$ is also a Haar-random unitary (and $X$ and $Z$ are the Pauli matrices)\footnote{It turns out that with this choice of measures the witness $G_\text{Chiribella}$ is the same as we would obtain by translating the task from Ref.~\cite{chiribella12} directly into the language of causal witnesses; the only difference, then, is that in~\cite{chiribella12} the witness was decomposed in terms of measurements and repreparations, whereas we decomposed it using unitaries only.}.

With these measures $G_\text{Chiribella}$ turns out to be a valid causal witness, as the maximal probability of success for causally separable processes $p_\text{succ}^\text{sep}$ is bounded below one. To calculate it we need to solve the following SDP problem:
\begin{equation}
\begin{gathered}    \max \tr [G_\text{Chiribella}  W] \\
  \text{s.t.} \quad \tr W = d_O, \quad W \in \wsepcone_{3C}.
\end{gathered}
\end{equation}
Solving it with YALMIP and MOSEK, we obtain
\begin{equation}
 p_\text{succ}^\text{sep} \approx 0.9288 \, .
\end{equation}
The amount of worst-case noise that $G_\text{Chiribella}$ can tolerate is $0.0766$, which is much worse than the $0.5454$ tolerated by $S_\text{optimal}$.

An issue with $G_{\text{Chiribella}}$ is that it would take an infinite number of measurements to estimate each term of the sum in \eqref{eq:chiribella_game}. Furthermore $\mathrm{d}\mu_{\com}$ and $\mathrm{d}\mu_{\ant}$ were chosen arbitrarily, while it would be preferable to have a justification for the choice of a particular measure. Both problems are solved by restricting the unitaries $U_A$ and $U_B$ to come from a finite set. In this way we only need perform a finite number of measurements to estimate the witness, and it is possible to optimize the measures over commuting and anticommuting unitaries through SDP problems.

The best witness we found is obtained by choosing the following ten unitaries:
\begin{multline}
  \mathcal G = \{\id, X, Y, Z, \frac{X+Y}{\sqrt2},\frac{X-Y}{\sqrt2}, \\ \frac{X+Z}{\sqrt2},\frac{X-Z}{\sqrt2},\frac{Y+Z}{\sqrt2},\frac{Y-Z}{\sqrt2}\}
\end{multline}
($Y$ being the third Pauli matrix),
and defining the witness to be
\begin{equation}\label{eq:game_simple}
 G_\text{finite} = \sum_{i,j=1}^{10} q_{ij}^\com \, G^{U_i,U_j}_+ + q_{ij}^\ant \, G^{U_i,U_j}_- ,
\end{equation}
where $U_k \in \mathcal G$, and $q_{ij}^\com, q_{ij}^\ant$ are the input probability distributions over commuting and anticommuting unitaries, normalised such that $\sum_{i,j} ( q_{ij}^\com + q_{ij}^\ant ) = 1$.

To obtain the weights $q_{ij}^\com, q_{ij}^\ant$ and $p_\text{succ}^\text{sep}$ we solved an SDP problem presented in Appendix~\ref{sec:optimizing_chiribella}. We obtained
\begin{equation}
 p_\text{succ}^\text{sep} \approx 0.8690 \, ,
\end{equation}
and tolerance to worst-case noise $0.1507$, which is higher than $G_\text{Chiribella}$'s $0.0766$, but still lower than $S_\text{optimal}$'s $0.5454$.

We want to emphasize that the witnesses obtained in this subsection are equivalent to the ones defined through \eqref{eq:def_witness_switch} in the beginning of the present section -- the only difference being the arbitary choice of the causal bound being $\ge 0$ vs $\le p_\text{succ}^\text{sep}$. More precisely, let $G$ be a witness such that
\begin{equation}
 \tr(G\,W^\text{sep}) \le p_\text{succ}^\text{sep}
\end{equation}
for every (normalised) causally separable $W^\text{sep}$ and 
\begin{equation}
 T_0 \le \tr(G\,W) \le T_1
\end{equation}
for every (normalised) process matrix $W$.
Then
\begin{equation}
 S = \frac1{p_\text{succ}^\text{sep}-T_0}\de{ p_\text{succ}^\text{sep}\frac{\id}{d_O} - G}
\end{equation}
is a valid generalised robustness witness. Furthermore, if $S$ is the optimal witness for some process matrix $W$ that saturates the upper bound $\tr(G\,W) = T_1$, it follows that
\begin{equation}\label{eq:generalised_robustness_versus_p_succ}
 \gr(W) = - \tr[S W] = \frac{T_1-p_\text{succ}^\text{sep}}{p_\text{succ}^\text{sep}-T_0} \, .
\end{equation}
When $G$ is either $G_\text{finite}$ or $G_\text{Chiribella}$, we have that $T_0=0$ and $T_1=1$. And even though they are \emph{not} optimal witnesses for $W_\text{switch}$, the relationship between $p_\text{succ}^\text{sep}$ and resistance to worst-case noise is valid for them, \ie, for both $G_\text{finite}$ and $G_\text{Chiribella}$ the resistance to worst-case noise is equal to $1/p_\text{succ}^\text{sep} -1$, as given by~\eqref{eq:generalised_robustness_versus_p_succ}.

\section{Causal inequalities}
\label{sec:causal_ineqs}

The notion of causal separability considered above relies on the quantum description of the local laboratories. One may ask what are the constraints imposed by a definite causal structure regardless of the specific description, or even the physics governing the devices performing the local operations. To study such restrictions, we will make use of so-called \textit{causal inequalities}~\cite{oreshkov12}, which bound the possible correlations that can be established between events following a definite causal order. The violation of a causal inequality gives a stronger, device-independent signature of lack of causal order than the measurement of a witness. It is natural to ask whether it is possible to use the quantum switch to violate a causal inequality; we show below that this is not the case.

\subsection{Device-independent causal relations}

We still consider a multipartite scenario in which a set of $N$ parties $\{A^{i}\}_{i=1}^{N}$ are located in different, separated laboratories. Each party can perform operations and obtain measurement outcomes. Contrary to the previous case however, we do not consider here any particular physical description of what happens in each lab; the ``settings'' for the operations in the different laboratories and the measurement outcomes are labelled by some classical variables $x_{i}$ and $a_{i}$ (with $1 \le i \le N$), respectively; for simplicity we assume that the $x_{i}$'s and $a_{i}$'s take a finite number of values. Defining the vector of settings $\vec{x} = (x_{1}, \dots x_{N})$ and the vector of outcomes $\vec{a} = (a_{1}, \dots, a_{N})$, the device-independent description of the correlations established in such an experiment is encoded in the conditional probability $P(\vec{a}|\vec{x})$.

Causal inequalities~\cite{oreshkov12} are constraints on $P(\vec{a}|\vec{x})$ derived from the assumption that there exists an underlying causal structure defining the order between parties. To be more precise, let us represent the causal order in which the parties act by a permutation $\sigma$, defined such that party $i$ acts before party $j$ if and only if $\sigma(i) < \sigma(j)$. This leads to a total ordering of the parties, namely  $A^{\sigma(1)}\prec A^{\sigma(2)} \prec \ldots\prec A^{\sigma(N)}$. We then say that a probability distribution $P(\vec{a}|\vec{x})$ is \emph{compatible with the causal order $\sigma$} if no party signals to those before her\footnote{Note that this condition is strictly stronger than no-signalling to each individual party, since it is possible to signal to a group of parties without signalling to any individual party.}, namely if for every $i$ the marginal distribution
 \begin{equation}
 P(a_{\sigma(1)}, \ldots, a_{\sigma(i)}|\vec{x}) := \sum_{\substack{ a_{\sigma(j)} \\j > i}} P(\vec{a}|\vec{x})
 \end{equation}
 does not depend on the inputs $x_{\sigma(j)}$ with \mbox{$j > i$}; \ie,
 \begin{multline}
  P(a_{\sigma(1)}, \ldots, a_{\sigma(i)}|x_{\sigma(1)},\ldots,x_{\sigma(i)},x_{\sigma(i+1)},\ldots,x_{\sigma(N)}) \\ = P(a_{\sigma(1)}, \ldots, a_{\sigma(i)}|x_{\sigma(1)},\ldots,x_{\sigma(i)},x'_{\sigma(i+1)},\ldots,x'_{\sigma(N)}) \\ \forall \ x_{\sigma(j)},x'_{\sigma(j)} \, .
 \end{multline}
 A probability distribution that is compatible with at least one causal order $\sigma$ is said to be \emph{causally ordered}.
 
 More generally, we allow the parties to share randomness to agree on a specific order of sending signals between them before the inputs of the game are given to them. This allows for convex combinations of causally ordered probability distributions:
  \begin{equation}
    \label{eq:DI-causal-separability}
    P(\vec{a}|\vec{x}) = \sum_{\sigma} \, q_{\sigma} \, P_{\sigma}(\vec{a}|\vec{x}), \quad q_{\sigma} \geq 0, \quad \sum_{\sigma} q_{\sigma} = 1 \, ,
  \end{equation}
  where each $P_{\sigma}$ is compatible with a fixed order $\sigma$.
 These are still not the most general correlations compatible with the assumption of a definite causal structure, as one party could control the causal order of a set of parties in its future~\cite{baumeler13,baumeler14,oreshkov15}. Correlations compatible with this most general scenario of definite causal order are called simply \textit{causal}. In the bipartite case, the set of causal correlations forms a convex polytope, delimited by a finite number of facets that define causal inequalities~\cite{araujo14b}. The explicit definition of causal correlations in the general $N$-partite case is, however, rather cumbersome, and for the purposes of this article it will be enough to consider probability distributions of the form \eqref{eq:DI-causal-separability}, which is a sufficient (although not necessary) condition for causal separability.

As causally separable processes can only generate causal correlations, the violation of a causal inequality can also be used to detect the causal nonseparability of a process. While causal witnesses are \textit{device-dependent} and can only detect causal nonseparability if each party trusts her operation's implementation, causal inequalities are completely \textit{device-independent}: even if each party distrusts her laboratory, they can still detect causal nonseparability from the statistics of their experimental outcomes, if those violate a causal inequality. While for every causally nonseparable process there is causal witness that will detect its nonseparability, there are causally nonseparable processes cannot be used to violate any causal inequalities: in the next subsection we will prove that the quantum switch provides such an example. There is an analogy here with entanglement witnesses, which allow for a device-dependent way of detecting entanglement, and Bell inequalities, which provide a device-independent entanglement certification -- ``nonlocality''~ \cite{brunner14}. The important difference is that states violating Bell inequalities are physically implementable, while no example of a physically implementable process violating causal inequalities is known.

\subsection{Quantum control of orders and causal inequalities}

One might first wonder if the quantum switch allows for a causal inequality violation between $A$ and $B$ (such as the bipartite causal inequalities of Refs.~\cite{oreshkov12,araujo14b}); this is however clearly not the case since, as pointed out before, ignoring (\ie, tracing out) the third party $C$ makes the process matrix of the quantum switch causally separable.

One might still hope that the quantum switch can be used to violate a tripartite inequality (see e.g.~\cite{baumeler14}), explicitly involving party $C$; as it turns out, this is also impossible, as a consequence of the following theorem\footnote{A similar conclusion based on the same example has been obtained by Oreshkov and Giarmatzi independently of the other authors of this paper and is presented in Ref.~\cite{oreshkov15}}:

\begin{theorem}\label{th:septheorem}
Consider $N{+}1$ parties $\left\{A^1,\dots,A^N,C\right\}$ with settings $\left\{x_1,\dots,x_N,z\right\}$ and outcomes $\left\{a_1,\dots a_N,c\right\}$. If the marginal distribution
\begin{equation}\label{eq:marginal_P}
 P\left(\vec{a}|\vec{x},z\right):= \sum_c P\left(\vec{a},c|\vec{x},z\right)
\end{equation}
is such that
\begin{enumerate}[leftmargin=5mm]
	\item $P\left(\vec{a}|\vec{x},z\right) = P\left(\vec{a}|\vec{x}\right)$ -- \ie, it does not depend on $z$: $C$ does not signal to any other (group of) parties;
	\item $P\left(\vec{a}|\vec{x}\right) = \sum_\sigma \, q_\sigma \, P_\sigma\left(\vec{a}|\vec{x}\right)$, where $q_\sigma \geq 0$, $\sum_{\sigma} q_{\sigma} = 1$, and the probability distributions $P_\sigma$ are causally ordered,
\end{enumerate}
then the full $(N{+}1)$-partite probability distribution $P\left(\vec{a},c|\vec{x},z\right)$ is causal.
\end{theorem}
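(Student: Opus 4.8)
The plan is to establish something slightly stronger than required: that the full distribution $P(\vec a,c|\vec x,z)$ can itself be written in the convex form of Eq.~\eqref{eq:DI-causal-separability}, with each term causally ordered and with $C$ placed \emph{last}. Since any distribution of the form~\eqref{eq:DI-causal-separability} is causal, this suffices. Concretely, I would take the decomposition of the marginal granted by hypothesis~2 and ``lift'' it to a decomposition of the full distribution by attaching $C$'s response to each causally ordered term, then verify that appending $C$ at the end of each order respects the no-signalling-to-the-past conditions.

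First I would use hypothesis~1 to factorise. Since $P(\vec a|\vec x,z)=P(\vec a|\vec x)$ carries no $z$-dependence, the conditional $P(c|\vec a,\vec x,z):=P(\vec a,c|\vec x,z)/P(\vec a|\vec x)$ is well defined wherever $P(\vec a|\vec x)>0$; on the set $P(\vec a|\vec x)=0$ the joint probability vanishes as well (by hypothesis~1 its marginal over $c$ equals $P(\vec a|\vec x)=0$), so the value assigned to the conditional there is immaterial and may be fixed arbitrarily. This yields $P(\vec a,c|\vec x,z)=P(c|\vec a,\vec x,z)\,P(\vec a|\vec x)$. Inserting the decomposition $P(\vec a|\vec x)=\sum_\sigma q_\sigma P_\sigma(\vec a|\vec x)$ from hypothesis~2 and defining
\begin{equation}
 P_\sigma(\vec a,c|\vec x,z):=P(c|\vec a,\vec x,z)\,P_\sigma(\vec a|\vec x),
\end{equation}
I obtain the convex combination $P(\vec a,c|\vec x,z)=\sum_\sigma q_\sigma P_\sigma(\vec a,c|\vec x,z)$, in which each $P_\sigma(\vec a,c|\vec x,z)$ is a non-negative, normalised distribution (normalisation following from $\sum_c P(c|\vec a,\vec x,z)=1$).

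The core step is then to check that each $P_\sigma(\vec a,c|\vec x,z)$ is compatible with the order $A^{\sigma(1)}\prec\dots\prec A^{\sigma(N)}\prec C$. The decisive observation is that summing $c$ out of $P_\sigma(\vec a,c|\vec x,z)$ returns exactly $P_\sigma(\vec a|\vec x)$, again by normalisation of the $c$-conditional. Hence, for every prefix of the first $i\le N$ parties in this order, the marginal of $P_\sigma(\vec a,c|\vec x,z)$ over the outcomes of all later parties (including $c$) coincides with the corresponding prefix marginal of $P_\sigma(\vec a|\vec x)$; by hypothesis~2 this is independent of the inputs $x_{\sigma(j)}$ for $j>i$, and it manifestly carries no dependence on $z$, which is exactly what is required for $C$ to sit after all the $A^{\sigma(j)}$. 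For the last party $C$ no further condition arises. Thus each $P_\sigma(\vec a,c|\vec x,z)$ is causally ordered, so the full distribution is of the form~\eqref{eq:DI-causal-separability} and therefore causal.

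I expect the only real subtlety — more bookkeeping than genuine obstacle — to be justifying that appending $C$ last is legitimate, which is precisely where hypothesis~1 is indispensable: any residual $z$-dependence in the $A$-marginals would break the claim that $C$ does not signal into the past of the $A^{\sigma(j)}$. The vanishing-marginal edge case is handled as above and does not affect the conclusion.
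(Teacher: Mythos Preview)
Your proposal is correct and follows essentially the same approach as the paper: factorise via the chain rule using hypothesis~1, insert the convex decomposition from hypothesis~2, define $\widetilde P_\sigma(\vec a,c|\vec x,z)=P_\sigma(\vec a|\vec x)\,P(c|\vec a,\vec x,z)$, and note that each term is compatible with the order $A^{\sigma(1)}\prec\dots\prec A^{\sigma(N)}\prec C$. You are simply more explicit than the paper about the zero-probability edge case and about verifying the no-signalling-to-the-past conditions for each prefix.
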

\begin{proof}
Using Bayes' rule and the assumptions of the theorem, we can write
\begin{align}
P\left(\vec{a},c|\vec{x},z\right) &= P\left(\vec{a}|\vec{x},z\right) \, P\left(c|\vec{a},\vec{x},z\right) \\
&= \sum_\sigma \, q_\sigma \, P_\sigma\left(\vec{a}|\vec{x}\right) \, P\left(c|\vec{a},\vec{x},z\right) \\
&= \sum_\sigma \, q_\sigma \, \widetilde{P}_\sigma\left(\vec{a},c|\vec{x},z\right),
\end{align}
where $\widetilde{P}_\sigma\left(\vec{a},c|\vec{x},z\right) := P_\sigma\left(\vec{a}|\vec{x}\right) \, P\left(c|\vec{a},\vec{x},z\right)$ is compatible with the order $A^{\sigma(1)}\prec\ldots\prec A^{\sigma(N)}\prec C$; this shows that $P\left(\vec{a},c|\vec{x},z\right)$ is causal.
\end{proof}

To see that the correlations generated by the quantum switch (Eq.~\eqref{eq:qswitch}) respect assumptions~\emph{1.} and~\emph{2.} of the previous theorem, let us calculate the marginal probability distribution defined in Eq.~\eqref{eq:marginal_P} through the generalized Born rule~\eqref{born}, when the three parties perform operations $M^{A_IA_O}_{a|x}, M^{B_IB_O}_{b|y}$ and $M^{C_I}_{c|z}$:
\begin{align}
 P(a,b|x,y,z) &= \sum_c \, \tr \De{ M^{A_IA_O}_{a|x} \otimes M^{B_IB_O}_{b|y} \otimes M^{C_I}_{c|z} \cdot \proj{w}} \nonumber \\
	   & \hspace{-15mm} = \tr \Big[ M^{A_IA_O}_{a|x} \otimes M^{B_IB_O}_{b|y} \otimes \Big( \sum_c M^{C_I}_{c|z} \Big) \cdot \proj{w} \Big] \, .
\end{align}
Since the third party $C$ has no output space ($d_{C_O} = 1$), then for any instrument $\{M^{C_I}_{c|z}\}$ we have $\sum_c M^{C_I}_{c|z} = \id^{C_I}$, so that
\begin{equation}
 P(a,b|x,y,z) = \tr \De{ M^{A_IA_O}_{a|x} \otimes M^{B_IB_O}_{b|y} \cdot W^{AB} }
\end{equation}
with
\begin{equation}\label{eq:reduced_switch}
 W^{AB} := \tr_{C_I} \, \proj{w} \, .
\end{equation}
This implies that $P(a,b|x,y,z)$ does not depend on $z$, as required.
As argued before, tracing out $C$ from the process matrix representing the quantum switch leads to a causally separable process matrix of the form
$W^{AB} = \frac{1}{2} W^{A\prec B} + \frac{1}{2} W^{B\prec A}$
with causally ordered process matrices $W^{A\prec B}$ and $W^{B\prec A}$, which can only generate causally ordered probability distributions $P_{A\prec B}$ and $P_{B\prec A}$. Hence, $P(a,b|x,y,z)$ can be decomposed as $\frac{1}{2} P_{A\prec B}(a,b|x,y,z) + \frac{1}{2} P_{B\prec A}(a,b|x,y,z)$, so that the second assumption of Theorem~\ref{th:septheorem} is also satisfied.

Therefore, the quantum switch represents an example of a causally nonseparable process that can only generate causal correlations, and hence cannot be used to violate any causal inequality\footnote{Note that Theorem~\ref{th:septheorem} implies that this is also true for the $N$-partite generalization of the quantum switch defined in \cite{araujo14}.}. It is noteworthy that all the examples of causally nonseparable processes for which a physical interpretation is known, including those generated by space-time superpositions~\cite{zych14}, fall into this category. This raises the question of whether causally nonseparable processes that do violate causal inequalities can be physically implemented at all.

\section{Conclusion}

The process matrix formalism was originally conceived as a rather speculative extension of quantum mechanics to possibly include the indefinite causal structures expected in a quantized theory of gravity \cite{hardy2007towards}. The results of this work show that, in fact, it is a natural framework to study a class of quantum resources which cannot be captured by the circuit model, but nonetheless are physically realizable and can provide powerful computational advantages. We have shown that the quantum switch, a recently demonstrated resource for quantum computation, can be conveniently represented as a causally non-separable process matrix. We have also presented causal witnesses that can verify the causal nonseparability of the switch. As they only require performing unitaries in a ``superposition of order'' and a final measurement of a control qubit, such witnesses can be easily implemented in quantum-optics setups, as the one employed in Ref.~\cite{procopio_experimental_2014}.  

The theory of causal witnesses developed here has close resemblances with the theory of entanglement witnesses. In both cases, one is interested in finding ways to certify that a resource is outside some convex set, the set of separable states in the latter case, that of causally nonseparable process matrices in the former case. Following this analogy, causal inequalities can be seen as the counterpart to the Bell inequalities, as they both provide device-independent tests regarding the existence of some classical variable: local hidden variables for measurement outcomes in one case, classical variables determining the causal order in the other. A significant difference between the two frameworks is that the problem of determining causal separability can be solved numerically with efficient algorithms, whereas characterizing entanglement has been proven to be an NP-hard problem \cite{gurvits2003classical}.

As one could expect from the analogy with entanglement, there exist causally nonseparable processes that cannot violate causal inequalities. What is striking, in the case of process matrices, is that a physical interpretation is known only for resources in this category. As one of the main open problems in this field is the characterization of physical process matrices, it is tempting to speculate whether the (im)possibility to violate causal inequalities could provide a useful guidance in this respect.
\subsection*{Acknowledgements}
We thank Michal Sedl{\'a}k for useful discussions. We acknowledge support from the European Commission project RAQUEL (No.~323970); the Austrian Science Fund (FWF) through the Special Research Program Foundations and Applications of Quantum  Science (FoQuS), the doctoral programme CoQuS, and Individual Project (No.~2462); FQXi; the John Templeton Foundation; the Templeton World Charity Foundation (grant TWCF 0064/AB38); the French National Research Agency through the `Retour Post-Doctorants' program (ANR-13-PDOC-0026); and the European Commission through a Marie Curie International Incoming Fellowship (PIIF-GA-2013-623456).

\appendix

\section{Details of the formalism}\label{sec:app_formalism}

Here we explore in more details the properties of the Choi-Jamio{\l}kowski (CJ) isomorphism and of the process matrix formalism.  Note that other existing definitions of the CJ isomorphism differ by a transposition or a partial transposition from the one given here, which follows the convention in~\cite{oreshkov12} and allows a direct identification of non-signaling processes with quantum states.

\subsection{Choi-Jamio{\l}kowski isomorphism}\label{sec:cj}

\paragraph{Pure CJ isomorphism.}
It is convenient to distinguish two versions of the CJ isomorphism: one for maps over density matrices and one for linear operators on pure state. The latter -- the ``pure CJ isomorphism'' -- can be represented via the ``double-ket'' notation~\cite{royer_wigner_1991, braunstein_universal_2000}. For a linear operator $A : \Hi^{A_I} \to \Hi^{A_O}$, we define\footnote{Superscripts on CJ vectors and CJ matrices indicate the systems they refer to (they may be omitted when the context makes it clear enough).}
\begin{equation}
\Ket{A^*}^{A_IA_O} := \id \otimes A^* \Ket{\id},
\label{cjpure}
\end{equation}
where $\Ket{\id} \equiv \Ket{\id}^{A_I A_I}:=\sum_j \ket{j}^{A_I}\otimes\ket{j}^{A_I} \in \Hi^{A_I} \otimes \Hi^{A_I}$ (with also, of course, the usual notation $\Bra{\id} = \Ket{\id}^\dagger$), and the complex conjugation $^*$ is defined with respect to the chosen orthonormal basis $\{\ket{j}^{A_I}\}$ of $\Hi^{A_I}$. The inverse map is given by
\begin{equation}
A\ket{\psi} = \big[\bra{\psi}^{A_I}\otimes \id^{A_O} \cdot \Ket{A^*}^{A_IA_O}\big]^*.
\label{inversecj}
\end{equation}
We say that $\Ket{A^*}$ is the CJ representation (or CJ vector) of $A$. The cumbersome complex conjugation in the definition allows us to have a simpler representation for the process matrix.
\medskip
\paragraph{Maximally entangled states and unitaries.}
Consider here the case where the input and output spaces have equal dimensions, $d_{A_I}=d_{A_O}$.
The state obtained by applying a local unitary to one subsystem of a maximally entangled state is also maximally entangled. in reverse, it is possible to generate any (bipartite) maximally entangled state by applying a local unitary to one subsystem of a reference maximally entangled state. Therefore, the CJ vector \textit{$\Ket{U^*}^{A_IA_O}=\id\otimes U ^*\Ket{\id}$ is maximally entangled if and only if $U$ is a unitary}. More explicitly, an operator
\begin{equation}
U \, = \, \sum_{jk} \, u_{jk} \, \ket{j}\!\bra{k}
\end{equation} 
is unitary if and only if $\sum_l u_{jl}u^*_{kl}=\sum_l u^*_{lk} u_{lj}=\delta_{j, k}$ for all $j,k$. One can check that this is also a necessary and sufficient condition for which 
\begin{equation}
\Ket{U^*}^{A_IA_O}= \sum_{jk}u^*_{jk} \ket{k}^{A_I}\ket{j}^{A_O}
\end{equation}
is maximally entangled.

\medskip
\paragraph{Measurement-preparation.}
Another useful linear operator is $\ketbra{\psi}{\phi}$, which describes the observation of an outcome $\ket{\phi}$ in a projective measurement, followed by the repreparation of a state $\ket{\psi}$. Plugging this into the definition \eqref{cjpure}, we find the CJ representation
\begin{equation}
\big| \,\left( \ketbra{\psi}{\phi} \, \right)^*\big\rangle\!\big\rangle^{A_IA_O}= \ket{\phi}^{A_I} \otimes \big(\ket{\psi}^* \big)^{A_O}.
\label{measure_prepare}
\end{equation}
Reciprocally, every pure product CJ vector represents a measurement-preparation operation.

An important particular case is when $\ket{\psi}=\ket{\phi}$, which corresponds to the ideal non-demolition von Neumann measurement:
\begin{equation}
\big| \,\left( \ketbra{\phi}{\phi} \,\right)^* \big\rangle\!\big\rangle^{A_IA_O} = \ket{\phi}^{A_I} \otimes \big(\ket{\phi}^* \big)^{A_O}.
\label{cjprojection}
\end{equation}

\medskip
\paragraph{Mixed CJ operators.}
For the general case of a linear map $\mathcal M^A: A_I \to A_O$, we define the CJ isomorphism as
\begin{equation}
 M^{A_IA_O} := \big[\mathcal I \otimes \mathcal M^A (\KetBra{\id})\big]^T \,.
\label{cjmixed}
\end{equation}
It is easy to verify that the definition \eqref{cjmixed} reduces to~\eqref{cjpure} for operators of the form $\mathcal M^A(\rho) = A\rho A^{\dag}$, \ie\ that, in such a case,
\begin{equation}
M = \KetBra{A^*} \, 
\label{consistency}
\end{equation}
(with $\Ket{A^*} \equiv \Ket{A^*}^{A_IA_O}$ and $\Bra{A^*} = \Ket{A^*}^\dagger$).

According to Choi's theorem \cite{choi_completely_1975}, a linear map $\mathcal M^A: A_I \to A_O$ is CP if and only if its CJ matrix is positive semidefinite, $M^{A_IA_O}\geq 0$. A characterization of the trace-preserving condition can be found using the inverse CJ isomorphism,
\begin{equation}
\mathcal M^A(\rho) = \big[\tr_{A_I} [\rho^{A_I} \otimes \id^{A_O} \cdot M^{A_IA_O}]\big]^T \,.
\label{inversemixed}
\end{equation}
By taking the trace of both sides of the equation, it can be readily verified that the map $\mathcal M^A$ is trace-preserving if and only if 
\begin{equation}
\tr_{A_O} M^{A_IA_O} = \id^{A_I}.
\label{CPTP}
\end{equation}
Note that a CP map can be part of an instrument only if it is \textit{trace-non-increasing}, a condition that translates to
\begin{equation}
\id^{A_I} -\tr_{A_O} M^{A_IA_O} \geq 0 \, .
\label{CPTNI}
\end{equation}

A useful example is the CPTP map $\mathcal M^A(\sigma)=\rho\tr\sigma$, which corresponds to the preparation of a (normalized) state $\rho$ independently of the input state $\sigma$. Its CJ representation is found to be
\begin{equation}
M^{A_IA_O}=\id^{A_I} \otimes (\rho^T)^{A_O} \, .
\label{preparation}
\end{equation}
A second relevant case is the CP (not trace-preserving) map that gives the probability of observing a POVM element $E$ in a measurement: $\mathcal M^A(\rho)=\tr [E \rho]$ (here $d_{A_O}=1$). Its CJ representation is simply
\begin{equation}
M^{A_I}=E^{A_I} \, .
\label{povm}
\end{equation}
Finally, the situation where a POVM element $E$ is measured on the state $\sigma$ in $A_I$ and a state $\rho$ is prepared in $A_O$ corresponds to the CP map $\mathcal M^A(\sigma)=\rho\tr\left[E\sigma\right]$, which has CJ representation
\begin{equation}
M^{A_IA_O}=E^{A_I} \otimes (\rho^T)^{A_O} \, .
\label{measurementpreparation}
\end{equation}

\subsection{Process matrices}\label{sec:appendix_process_matrices}

\setcounter{paragraph}{0}
Here we discuss in more detail some examples and properties of process matrices.

\medskip
\paragraph{Quantum states.}
Consider a bipartite process matrix of the form 
\begin{equation}
W^{A_IA_OB_IB_O}=\rho^{A_IB_I}\otimes \id^{A_OB_O}.
\label{bistate}
\end{equation}
According to the generalized Born rule, Eq.\ \eqref{born}, the probability for the two parties $A$ and $B$ to perform trace non-increasing CP maps with CJ matrices $M^{A_IA_O}$ and $M^{B_IB_O}$, respectively, is given by
\begin{align} \notag
P\big(M^{A_IA_O},M^{B_IB_O}\big) =&\tr\left[\big(M^{A_IA_O}\otimes M^{B_IB_O}\big) W \right],\\ \label{processstate}
 =&\tr\left[\big(E^{A_I}\otimes E^{B_I} \big) \rho^{A_IB_I} \right],
\end{align}
where $E^{A_I}:=\tr_{A_O}M^{A_IA_O}$ and $E^{B_I}:=\tr_{B_O}M^{B_IB_O}$. These operators are positive semidefinite and, because of Eq.~\eqref{CPTNI}, they can be completed to form a POVM. Thus Eq.~\eqref{processstate} corresponds to the probability of observing the POVM element $E^{A_I}\otimes E^{B_I}$ given the state $\rho^{A_IB_I}$; in other words, the process matrix \eqref{bistate} describes a bipartite state. Notice that a process matrix of this form does not allow signalling in either direction and therefore, being compatible with both $A\prec B$ and $B\prec A$, it is causally separable. This is irrespective of the state $\rho$, which can be entangled or separable. Note also the difference between the process matrix \eqref{bistate} and the CJ representation of state preparation, Eq.~\eqref{preparation}.
\medskip
\paragraph{Channels.}
Consider a bipartite situation where a party $A$ only performs state preparations, while the second party $B$ only performs measurements. In this case, the local laboratory of $A$ is characterized by a trivial input space, $d_{A_I}=1$, while $B$ has a trivial output space, $d_{B_O}=1$. The process matrix shared by $A$ and $B$, which represents here a quantum channel, is then defined on the space $A_O\otimes B_I \ni W$. The probability that $B$ observes a POVM element $E$ when $A$ prepares a state $\rho$ is given by
\begin{equation}
P\left(E|\rho\right)=\tr\left[ \left(\rho^T\right)^{A_O}\!\otimes E^{B_I} \, \cdot \, W^{A_OB_I}\right],
\end{equation}
where we used \eqref{preparation} and \eqref{povm} for the local operations. This is equivalent to saying that $B$ measures $E$ in the state $\tr_{A_O}\big[\big(\rho^T\big)^{A_O} \otimes \id^{B_I} \cdot W^{A_OB_I}\big]=\left[\tr_{A_O} \big[ \rho^{A_O} \otimes \id^{B_I} \cdot \big(W^T\big)^{A_OB_I}\big] \right]^T$. Comparing this with the inverse CJ transformation \eqref{inversemixed}, we find that the process matrix $W$ corresponds to a channel with CJ representation $W^T$. In other words, a channel $\mathcal C$ from $A_O$ to $B_I$ is represented by the process matrix
\begin{equation}
W^{A_OB_I}= \mathcal I \otimes \mathcal C (\KetBra{\id})
\label{channel}
\end{equation}
(with here $\Ket{\id} \equiv \Ket{\id}^{A_OA_O}$).
Note that the CJ representation of a channel, Eq.~\eqref{cjmixed}, differs by a transposition from the corresponding process matrix \eqref{channel}.
\medskip
\paragraph{Reduced process matrices.}
Given a multipartite process $W=W^{A^1_IA^1_O\dots A^N_IA^N_O}$ and a CPTP map for the $j$-th party with CJ matrix $M^{A^j_IA^j_O}$, we define the \textit{reduced process matrix} for the remaining $N-1$ parties, given $M^{A^j_IA^j_O}$, as
\begin{eqnarray}
&& \overline{W}(M^{A^j_IA^j_O}) \nonumber \\
&& :=\tr_{A^j_I A^j_O}\left[\big(\id^{A_I^1 A_O^1}\!\otimes\ldots M^{A^j_IA^j_O} \! \otimes\ldots \id^{A_I^N A_O^N}\big) \!\cdot\! W\right]. \qquad \quad
\label{reduced}
\end{eqnarray}
With the usual generalized Born rule \eqref{born}, the reduced process matrix gives the probability for the remaining $N-1$ parties to measure arbitrary CP maps, given that the $j$-th party performs $M^{A^j_IA^j_O}$. The explicit dependence of $\overline{W}$ on $M^{A^j_IA^j_O}$ accounts for the possibility of signalling: the remaining parties observe different probability distributions depending on the choice of CPTP map performed by party $j$. As an example, consider a process matrix of the form \eqref{channel}. If $A$ prepares a state $\rho$, the reduced process matrix for $B$ is
\begin{eqnarray}
\overline{W}^{B_I}(\rho)&=& \tr_{A_O}\left[\left(\rho^T\right)^{A_O}\!\otimes \id^{B_I} \, \cdot \,W^{A_O B_I}\right] \nonumber \\
&=&\sum_{jk} \, \bra{k}\rho^T\ket{j} \ \mathcal C\left(\ket{j}\!\bra{k}\right) = \mathcal C\left(\rho\right). \quad
\label{reda}
\end{eqnarray}
Thus, for a process that represents a channel from $A$ to $B$, the reduced process for $B$, given that $A$ prepares $\rho$, is simply the channel applied to $\rho$, as should be expected.

\medskip
\paragraph{Pure process matrices.}
In some cases, the process matrix turns out to be a rank-one projector: $W=\proj{w}$ for some ``process vector'' $\ket{w}$. If the CJ operators representing the local operations are also rank-one projectors, as is the case for unitaries and projective measurements followed by pure repreparations, it is convenient to work at the level of vectors and of probability amplitudes: given the local operations $A_1,\dots,A_N$ represented by the CJ vectors $\Ket{A_1^*}^{A_I^1 A_O^1},\dots,\Ket{A_N^*}^{A_I^N A_O^N}$, the overall probability amplitude is given (up to global phase, which we choose to be $0$) by
\begin{equation}
\Bra{A_1^*}^{A_I^1 A_O^1}\otimes\dots\otimes\Bra{A_N^*}^{A_I^N A_O^N} \cdot \ket{w}^{A_I^1 A_O^1 \dots A_I^N A_O^N}.
\label{amplitude}
\end{equation}
The probability is then obtained as the modulus square of the amplitude and conforms to the general expression \eqref{born}. Given that party $j$ performs the unitary $U_j$, the reduced process is clearly given by the partial scalar product
\begin{equation}
\id^{A_I^1 A_O^1}\otimes\dots\Bra{U_j^*}^{A_I^j A_O^j}\otimes\dots\id^{A_I^N A_O^N} \cdot \ket{w}^{A_I^1 A_O^1 \dots A_I^N A_O^N}.
\label{reducedpure}
\end{equation}

The process matrix describing a unitary channel $U$ from $A_O$ to $B_I$ is of particular interest. Using \eqref{channel}, we find that it is given by
\begin{equation}
\ket{w}^{A_OB_I}=\id \otimes U \Ket{\id} = \Ket{U}^{A_OB_I}.
\label{unitary}
\end{equation}
Note again the difference between this expression and the CJ representation \eqref{cjpure}. Generalizing this to a sequence of parties $A_1,\dots,A_N$, with the output of party $j$ connected to the input of party $j+1$ via the unitary $U_j$, we find
\begin{equation}
\ket{w}^{A^1_O\dots A^N_I}=\Ket{U_1}^{A^1_OA^2_I}\otimes\dots\otimes\Ket{U_N}^{A^{N-1}_OA^N_I}.
\label{manyunitaries}
\end{equation}

\section{Valid process matrices}\label{sec:valid_w}

The conditions for an operator $W \in  A_I \otimes A_O \otimes B_I \otimes B_O$ to be a valid process matrix were first found in Ref.~\cite{oreshkov12}, where they were formulated in a basis-dependent way. Here we derive the equivalent characterization of valid process matrices given in Eqs.~\eqref{W_pos}--\eqref{LV}; we formulate it in a basis-independent way, which we find to be more convenient for our purposes.

We present below the derivation in the bipartite case, and also write explicitly, for ease of reference, the characterization in the tripartite case. The $N$-partite case follows from a straightforward generalization.

\subsection{Bipartite process matrices}

Recall that a given operator $W \in  A_I \otimes A_O \otimes B_I \otimes B_O$ is a valid process matrix if and only if it yields, through the generalized Born rule~\eqref{born}, only well-defined probabilities -- that is, the probabilities must be non-negative and must sum up to $1$.

\medskip
\paragraph*{Non-negativity.}
As recalled previously, a map is completely positive if and only if its CJ representation is positive semidefinite.
Including the possibility that $A$ and $B$'s operations involve interactions with a (possibly entangled) ancillary system in a state $\rho^{A_I'B_I'}$, the non-negativity of probabilities is thus equivalent to\footnote{Note that ignoring the possibility of an ancillary system, one would only find that $W$ must be ``positive on pure tensors'' (with respect to the partition $A_I A_O / B_I B_O$) -- a class strictly larger than positive semidefinite matrices \cite{barnum05}.}
\begin{gather}
 \tr \left[ \big(M^{A_I' A_I A_O} \otimes M^{B_I' B_I B_O} \big) \cdot \big( \rho^{A_I'B_I'} \otimes W^{A_I A_O B_I B_O} \big) \right] \geq 0 \nonumber \\
 \forall \ \ M^{A_I' A_I A_O} \geq 0 \, , \ M^{B_I' B_I B_O}  \geq 0 \, , \ \rho^{A_I'B_I'} \geq 0 \, .
 \label{constr:non_neg}
\end{gather}

For the case where the ancillary spaces $A_I'$ and $B_I'$ are isomorphic to $A_I \otimes A_O$, and $M^{A_I' A_I A_O}$ and $\rho^{A_I'B_I'}$ are both projectors onto the maximally entangled state $\Ket{\id}^{A_I A_O / A_I A_O}:=\sum_{j,k} \ket{j,k}^{A_I A_O} \otimes \ket{j,k}^{A_I A_O}$ (where $\{\ket{j,k}^{A_I A_O}\}$ is an orthonormal basis of $\Hi^{A_I} \otimes \Hi^{A_O}$), we find that  the trace in~\eqref{constr:non_neg} is equal to $\tr [M^{B_I' B_I B_O} \cdot W^{A_I A_O B_I B_O}]$. Requiring that its value is non-negative for all $M^{B_I' B_I B_O}  \geq 0$ implies that $W$ must be positive semidefinite.

Reciprocally, $W \geq 0$ clearly implies that~\eqref{constr:non_neg} is satisfied. Hence, the non-negativity of probabilities is equivalent to $W$ being positive semidefinite, Eq.~\eqref{W_pos}.

\medskip
\paragraph*{Normalization.}
The fact that probabilities must sum up to $1$ for all instruments is equivalent to the constraint that the probability of realization of any CPTP map is $1$.
Now, recall that a CP map ${\cal M}^A : A_I\rightarrow A_O$ is trace-preserving if and only if its CJ matrix $M^{A_I A_O}$ satisfies $\tr_{A_O} M^{A_I A_O} = \id^{A_I}$ -- or equivalently, using the notation of Eq.~\eqref{def:notation_trace}, ${}_{A_O} M^{A_I A_O} = \id^{A_I A_O} / d_{A_O}$.
Ignoring here for simplicity the possible use of an ancillary system (which leads to the same conclusion\footnote{Taking into account a possible ancillary state $\rho^{A_I'B_I'}$, the same reasoning as below leads to Eqs.~\eqref{consrt:norm} and \eqref{eq:W_LA_LB_LAB}, with $W$ replaced by $\rho^{A_I'B_I'} \otimes W^{A_I A_O B_I B_O}$ (and where the definitions of the maps $L_A$ and $L_B$ should include the ancillary systems $B_I'$ and $A_I'$, resp.), which must hold for all $\rho^{A_I'B_I'}$ such that $\tr \rho^{A_I'B_I'} = 1$. One can easily check that these are indeed equivalent to~\eqref{consrt:norm} and \eqref{eq:W_LA_LB_LAB} in their original form.}), the normalization of probabilities is thus equivalent to
\begin{gather}
 \tr \left[ \big(M^{A_I A_O} \otimes M^{B_I B_O} \big) \cdot W^{A_I A_O B_I B_O} \right] = 1 \nonumber \\
\forall \ \ M^{A_I A_O} \geq 0 \, , \ M^{B_I B_O} \geq 0 \, , \  \nonumber \\  
\text{s.t.} \ \ {}_{A_O} M^{A_I A_O} = \id^{A_I A_O} / d_{A_O} \ , \ {}_{B_O} M^{B_I B_O} = \id^{B_I B_O} / d_{B_O} \, .
 \label{constr:normalization}
\end{gather}

First of all, note that the positivity of $M^{A_I A_O}$ and $M^{B_I B_O}$ is irrelevant here, since the set of positive semidefinite operators is a full dimensional subset of the space of hermitian operators\footnote{More precisely: for any hermitian operator $M^{A_I A_O}$, there always exists $\alpha > 0$ such that $M^{A_I A_O} + \alpha \, \id^{A_I A_O} / d_{A_O} \ge 0$. Assuming that ${}_{A_O} M^{A_I A_O} = \id^{A_I A_O} / d_{A_O}$, one can thus decompose $M^{A_I A_O}$ as $M^{A_I A_O} = (\alpha {+} 1) \, M_+^{A_I A_O} - \alpha \, M_-^{A_I A_O}$, with $M_+^{A_I A_O} = \frac{1}{\alpha + 1}(M^{A_I A_O} + \alpha \, \id^{A_I A_O} / d_{A_O}) \ge 0$ and $M_-^{A_I A_O} = \id^{A_I A_O} / d_{A_O} \ge 0$ satisfying ${}_{A_O} M_+^{A_I A_O} = {}_{A_O} M_-^{A_I A_O} = \id^{A_I A_O} / d_{A_O}$. Similarly, any hermitian operator $M^{B_I B_O}$ such that ${}_{B_O} M^{B_I B_O} = \id^{B_I B_O} / d_{B_O}$ can be decomposed as $M^{B_I B_O} = (\beta {+} 1) \, M_+^{B_I B_O} - \beta \, M_-^{B_I B_O}$, with $M_+^{B_I B_O} \ge 0$, $M_-^{B_I B_O} \ge 0$ and ${}_{B_O} M_+^{B_I B_O} = {}_{B_O} M_-^{B_I B_O} = \id^{B_I B_O} / d_{B_O}$. Note that the four pairs $(M_\pm^{A_I A_O},M_\pm^{B_I B_O})$ satisfy the assumptions of Eq.~\eqref{constr:normalization}, and therefore $\tr \left[ \big(M_\pm^{A_I A_O} \otimes M_\pm^{B_I B_O} \big) \cdot W^{A_I A_O B_I B_O} \right] = 1$. Expanding $\tr \left[ \big(M^{A_I A_O} \otimes M^{B_I B_O} \big) \cdot W^{A_I A_O B_I B_O} \right]$ using the decomposition just constructed, we find that its value is also $1$.
Hence, if Eq.~\eqref{constr:normalization} holds, then it also holds without the positivity constraints on $M^{A_I A_O}$ and $M^{B_I B_O}$; the converse is of course trivially true.}. The only relevant conditions are the normalization constraints.
Defining the maps ${}_{[1-A_O]}M = M - {}_{A_O} M$ and ${}_{[1-B_O]}M = M - {}_{B_O} M$, and noting that for any hermitian operators $x$ and $y$, the operators ${}_{[1-A_O]}x + \id / d_{A_O}$ and ${}_{[1-B_O]}y + \id / d_{B_O}$ satisfy the above normalization constraints (where from now on we are omitting the superscripts to reduce cluttering), we find that Eq.~\eqref{constr:normalization} is equivalent to
\begin{gather}
 \tr \Big[ \Big({}_{[1-A_O]}x + \id / d_{A_O}\Big) \otimes \Big({}_{[1-B_O]}y + \id / d_{B_O}\Big) W \Big] = 1 \nonumber \\
\forall \ \ x, \ y \, .
\label{constr:normalization_v2}
\end{gather}
For $x = y = 0$, this yields the normalization condition of Eq.~\eqref{normalization},
\begin{equation}
\tr [W] = d_{A_O} d_{B_O}  \, . \label{consrt:norm}
\end{equation}
For $y = 0$ and $x = 0$, respectively, this in turn implies 
\begin{gather}
\tr \De{ \de{{}_{[1-A_O]}x \otimes \id } W } = 0 \quad \forall \ x \, , \label{consrt:fA} \\
\tr \De{ \de{ \id \otimes {}_{[1-B_O]}y } W } = 0 \quad \forall \ y \, , \label{consrt:fB}
\end{gather}
which then imply
\begin{gather}
\tr \De{ \de{ {}_{[1-A_O]}x \otimes {}_{[1-B_O]}y } W } = 0 \nonumber \\
\forall \ x, \ y \, . \label{consrt:fAfB}
\end{gather}
Reciprocally, Eqs.~\eqref{consrt:norm}--\eqref{consrt:fAfB} clearly imply~\eqref{constr:normalization_v2}, so that these are equivalent to Eq.~\eqref{constr:normalization}.

Thinking of the trace as the Hilbert-Schmidt inner product
 \[ \prin{M}{W} = \tr [ M \cdot W ] \]
(for hermitian operators $M, W$) and noting that the maps ${}_{[1-A_O]} \cdot$ and ${}_{[1-B_O]} \cdot$ are self-dual, the conditions~\eqref{consrt:fA}--\eqref{consrt:fAfB} are equivalent to
\begin{gather}
 {}_{[1-A_O]} \big(\tr_{B_I B_O} W\big) = 0, \label{eq:b8} \\
 {}_{[1-B_O]} \big(\tr_{A_I A_O} W\big) = 0, \label{eq:b9} \\
 {}_{[1-A_O][1-B_O]} W = 0, \label{eq:b10}
\end{gather}
which we can rewrite as
\begin{gather}
 {}_{B_IB_O}W = {}_{A_OB_IB_O}W \, , \label{eq:b11} \\
 {}_{A_IA_O}W = {}_{A_IA_OB_O}W \, , \label{eq:b12} \\
 W = {}_{B_O}W + {}_{A_O}W - {}_{A_OB_O}W \, , \label{eq:b13}
\end{gather}
which are conditions \eqref{eq:valid_bipartite_1}--\eqref{eq:valid_bipartite_3}. 

Note that each condition \eqref{eq:b8}--\eqref{eq:b10} defines a linear subspace, and the intersection of these three linear subspaces is the smallest subspace that contains all valid bipartite process matrices, which we denote by\footnote{Note that although we do not write that explicitly, the projectors we define below (e.g. $L_V$), and of course the subspaces they define (e.g. $\mathcal L_V$), depend on the number of parties $N$.}
\begin{eqnarray}
\label{def:LV2_subspace}
\mathcal L_V = \big\{W \in A_I \! \otimes \! A_O \! \otimes \! B_I \! \otimes \! B_O \, | \, W = L_V(W)\big\}, \qquad
\end{eqnarray}
The projector onto this subspace, $L_V$, shall be used quite often in the paper, so it is useful to find an explicit expression for it. To do that, first we rewrite conditions \eqref{eq:b11}--\eqref{eq:b13} explicitly as projections onto subspaces, \ie, as
\begin{gather}
 W = L_A(W) \, , \ \ W = L_B(W) \, , \ \ W = L_{AB}(W) \, , \ \label{eq:W_LA_LB_LAB}
\end{gather}
where the projectors $L_A$, $L_B$, and $L_{AB}$ are given by
\begin{gather}
\label{eq:mapao} L_{A}(W) = W - {}_{B_IB_O}W + {}_{A_OB_IB_O}W  \, , \\
\label{eq:mapbo} L_{B}(W) = W - {}_{A_IA_O}W + {}_{A_IA_OB_O}W  \, , \\
 L_{AB}(W) = {}_{B_O}W + {}_{A_O}W - {}_{A_OB_O}W \, ,
\end{gather}
Since the three projectors above commute, the projector onto the intersection of their subspaces $L_V$ is given simply by the composition of $L_A$, $L_B$, and $L_{AB}$, \ie,
\begin{equation}
 L_V(W) = L_A \circ L_B \circ L_{AB}(W),
\end{equation}
which, after simplification, can be written as 
\begin{multline}\label{eq:w_subspace}
L_V(W) = {}_{A_O}W + {}_{B_O}W - {}_{A_OB_O}W \\- {}_{B_IB_O}W + {}_{A_OB_IB_O}W \\ - {}_{A_IA_O}W + {}_{A_IA_OB_O}W 
\end{multline}

Summing up, we conclude that an operator $W \in A_I \otimes A_O \otimes B_I \otimes B_O$ is a valid bipartite process matrix if and only if $W \ge 0$, $\tr W = d_{A_O} d_{B_O}$, and $W = L_V(W)$, as in Eqs~\eqref{W_pos}--\eqref{LV}.

\subsection{Tripartite process matrices}
\label{sec:appendix-tripartite-processes}
A similar reasoning leads to the conclusion that an operator $W \in A_I \otimes A_O \otimes B_I \otimes B_O \otimes C_I \otimes C_O$ is a valid tripartite process matrix if and only if $W \ge 0$, $\tr W = d_{A_O} d_{B_O} d_{C_O}$, and 
\begin{gather}
 W = L_A(W) \, , \quad
 W = L_B(W) \, , \quad
 W = L_C(W) \, , \nonumber \\
 W = L_{AB}(W) \, , \quad
 W = L_{AC}(W) \, , \quad
 W = L_{BC}(W) \, , \nonumber \\
 W = L_{ABC}(W) \, , \label{constr:LABC}
\end{gather}
where the maps $L_A$, $L_B$, $L_C$, $L_{AB}$, $L_{AC}$, $L_{BC}$, and $L_{ABC}$ are now commuting projectors onto linear subspaces of $A_I \otimes A_O \otimes B_I \otimes B_O \otimes C_I \otimes C_O$, defined by
\begin{gather*}
 L_{A}(W) = {}_{[1-(1-A_O)B_IB_OC_IC_O]}W \, , \\[1mm]
 L_{B}(W) = {}_{[1-(1-B_O)A_IA_OC_IC_O]}W \, , \\[1mm]
 L_{C}(W) = {}_{[1-(1-C_O)A_IA_OB_IB_O]}W \, , \\[1mm]
 L_{AB}(W) = {}_{[1-(1-A_O)(1-B_O)C_IC_O]}W \, , \\[1mm]
 L_{AC}(W) = {}_{[1-(1-A_O)(1-C_O)B_IB_O]}W \, , \\[1mm]
 L_{BC}(W) = {}_{[1-(1-B_O)(1-C_O)A_IA_O]}W \, , \\[1mm]
 L_{ABC}(W) = {}_{[1-(1-A_O)(1-B_O)(1-C_O)]}W \, ,
\end{gather*}
where we used the shorthand notation
\begin{equation}
 {}_{[\sum_X \alpha_X X]} W = \sum_X \alpha_X \cdot {}_X W
\end{equation}
for a sum over products of subsystems $X$ with coefficients $\alpha_X$ (and with ${}_1 W := W$).

The constraints in~\eqref{constr:LABC} are equivalent to \begin{gather}
 W = L_V(W) \, ,
\end{gather}
where the map $L_V$ is obtained here by composing the 7 maps $L_A$, $L_B$, $L_C$, $L_{AB}$, $L_{AC}$, $L_{BC}$, and $L_{ABC}$. One finds in this tripartite case, after simplification,
\begin{eqnarray}
 L_V(W) &=& {}_{\big[1 - (1 - A_O + A_I A_O) (1 - B_O + B_I B_O) (1 - C_O + C_I C_O)} \nonumber \\[-3mm]
 && \hspace{4cm} {}_{+ \ A_I A_O B_I B_O C_I C_O \big]} W \, , \nonumber \\
\end{eqnarray}
which defines a projector onto the linear subspace
\begin{gather}
\label{def:LV3_subspace}
\mathcal L_V = \big\{W \in A_I \!\otimes\! A_O \!\otimes\! B_I \!\otimes\! B_O \!\otimes\! C_I \!\otimes\! C_O \, | \, W = L_V(W)\big\} \, .
\end{gather}

\subsection{\texorpdfstring{$N$-partite process matrices}{N-partite process matrices}}

The generalization to the $N$-partite case is rather straightforward. We find that an operator $W\in A^1_I\otimes A^1_O\otimes \ldots \otimes A^N_I\otimes A^N_O $ is a valid $N$-partite process matrix if and only if $W \ge 0$, $\tr W = d_{A^1_O}\ldots d_{A^N_O}$, and for all $2^{N}-1$ non-empty subsets ${\cal X}$ of $\{1, \ldots, N\}$,
\begin{gather}
 W = L_{\cal X}(W) := {}_{\big[1 - \prod_{i \in {\cal X}} (1-A^i_O) \prod_{i \notin {\cal X}} A^i_I A^i_O \big]} W \, .
 \label{constr:LX}
\end{gather}

Note that the $2^{N}-1$ maps $L_{\cal X}$ are commuting projectors onto linear subspaces of $A^1_I\otimes A^1_O\otimes \ldots \otimes A^N_I\otimes A^N_O$.
The constraints~\eqref{constr:LX} are equivalent to
\begin{gather}
 W = L_V(W) \, ,
\end{gather}
where the map $L_V$ is obtained this time by composing the $2^{N}-1$ maps $L_{\cal X}$. More explicitly, one finds in the $N$-partite case the general expression
\begin{gather}
 L_V(W) = {}_{\big[1 \ - \ \prod_i (1 - A^i_O + A^i_I A^i_O) \ + \ \prod_i A^i_I A^i_O \big]} W \, , \label{eq:LVN}
\end{gather}
which again defines a projector onto the linear subspace
\begin{gather}
\label{def:LVN_subspace}
\mathcal L_V = \big\{W \in A^1_I\otimes A^1_O\otimes \ldots \otimes A^N_I\otimes A^N_O \ | \ W = L_V(W)\big\} \, .
\end{gather}

\begin{proof}
For any subset ${\cal X}$ of $\{1, \ldots, N\}$, define $P_{\cal X} = \prod_{i \in {\cal X}} (1-A^i_O) \prod_{i \notin {\cal X}} A^i_I A^i_O$. For ${\cal X} \neq {\cal X}'$, note that there exists (at least one) $i_0$ such that the product $P_{\cal X} P_{{\cal X}'}$ contains the factor $(1-A^{i_0}_O) A^{i_0}_I A^{i_0}_O$. Now, ${}_{[(1-A^{i_0}_O) A^{i_0}_O]}W = 0$, so that ${}_{[P_{\cal X} P_{{\cal X}'}]}W = 0$.

Developing $L_V$, we thus find $L_V(W) = {}_{[ \prod_{{\cal X} \neq \emptyset}(1 - P_{\cal X}) ]} W = {}_{[ 1 - \sum_{{\cal X} \neq \emptyset} P_{\cal X} ]} W = {}_{[ 1 - \sum_{\cal X} P_{\cal X} + \prod_i A^i_I A^i_O]} W$.
Now, one can write $\sum_{\cal X} P_{\cal X} = \sum_{k_1=0}^1 \ldots \sum_{k_N=0}^1 (1-A^1_O)^{k_1} (A^1_I A^1_O)^{1-k_1} \ldots (1-A^N_O)^{k_N} (A^N_I A^N_O)^{1-k_N} = \prod_i \sum_{k_i=0}^1 (1-A^i_O)^{k_i} (A^i_I A^i_O)^{1-k_i} = \prod_i (1-A^i_O + A^i_I A^i_O)$, from which Eq.~\eqref{eq:LVN} follows.
\end{proof}

\section{Characterization of bipartite causal witnesses}\label{app:proof_thm1}

To characterize below the set of causal witnesses in the bipartite case, we shall make use of some basic definitions and facts from convex analysis, which we state here without any proof; the interested reader can find them for instance in Sections 2, 14 and~16 of Ref.~\cite{rockafellar70}\footnote{Note a difference in language. This reference uses the polar cone $K^\circ$ instead of the dual $K^*$. They are simply related by $K^\circ = -K^*$.}.
 
\medskip  
  
Let $E$ be a vector space equipped with an inner product $\prin{\cdot}{\cdot}$, and let $E'$ be the space of all linear functionals on $E$. In the finite-dimensional case that interests us, $E'$ is isomorphic to $E$.
  
  \begin{enumerate}
   \item A subset $\mathcal K$ of $E$ is a convex cone if and only if for every $x,y\in \mathcal K$ we also have that $\lambda x + \mu y \in \mathcal K$, for any $\lambda,\mu >0$. 
   \item Let $\mathcal K \subseteq E$ be a convex cone. Then its dual cone $\mathcal K^* \subseteq E'$ is defined as
   \begin{equation}
  \mathcal K^*  = \{ x^* \in E' \midset \prin{x^*}{x} \ge 0\quad\forall x \in \mathcal K\} \, .
  \end{equation}
  \item  The dual of a linear subspace $\mathcal L \subseteq E$ is its orthogonal complement:
  \begin{equation}\label{eq:orthogonal_complement}
  \mathcal L^*= \mathcal L^\perp = \{x^* \in E' \midset \prin{x^*}{x} = 0 \quad \forall x \in \mathcal L\}.
  \end{equation}
  \item Let $\mathcal K_1,\mathcal K_2 \subseteq E$ be closed convex cones that contain the origin. Then
  \begin{subequations}\label{eq:dual_properties}
  \begin{gather}
   [\conv(\mathcal K_1\cup \mathcal K_2)]^* = \mathcal K_1^* \cap \mathcal K_2^* \, , \\
   (\mathcal K_1\cap \mathcal K_2)^* = \conv(\mathcal K_1^* \cup \mathcal K_2^*) \, ,
  \end{gather}
  where $\conv$ denotes the convex hull.
\end{subequations}
 \end{enumerate}
 
We shall furthermore use below the following characterization of bipartite causally separable process matrices:  
\begin{lemma}\label{th:causally_separable}
A given matrix $W \in A_I \otimes A_O \otimes B_I \otimes B_O$ is a valid causally separable process matrix if and only if $\tr W = d_O$, $W \in \mathcal L_V$ (\ie, it satisfies Eqs.~\eqref{eq:valid_bipartite_1}--\eqref{eq:valid_bipartite_3}), and it can be written as
 \begin{gather}
W = W^{A \prec B} + W^{B \prec A} \label{decomp:causally_separable_1} \\[1mm]
 \textrm{with} \quad
 W^{A \prec B} \ge 0 \, , \quad
 W^{A \prec B} = {}_{B_O}W^{A \prec B} \, , \label{decomp:causally_separable_2} \\ 
 \quad \qquad W^{B \prec A} \ge 0 \, , \quad   
 W^{B \prec A} = {}_{A_O}W^{B \prec A} \, .  \label{decomp:causally_separable_3}
 \end{gather}
\end{lemma}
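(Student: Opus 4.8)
The plan is to prove both implications of the equivalence, the substantive one being that the existence of a decomposition \eqref{decomp:causally_separable_1}--\eqref{decomp:causally_separable_3} already suffices for causal separability in the sense of \eqref{def:caus_sep}. For the ``only if'' direction, suppose $W$ is causally separable, so $W = q\,V^{A \prec B} + (1{-}q)\,V^{B \prec A}$ with $0 \le q \le 1$ and $V^{A \prec B}, V^{B \prec A}$ valid causally ordered process matrices [characterised by \eqref{eq:valid_W_AB}]. I would simply absorb the weights into the summands by setting $W^{A \prec B} := q\,V^{A \prec B}$ and $W^{B \prec A} := (1{-}q)\,V^{B \prec A}$. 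Positivity and the last-party conditions $W^{A \prec B} = {}_{B_O}W^{A \prec B}$, $W^{B \prec A} = {}_{A_O}W^{B \prec A}$ survive the rescaling because the replace-map ${}_X\cdot$ is linear; likewise $\tr W = q\,d_O + (1{-}q)\,d_O = d_O$, and $W \in \mathcal L_V$ because $\mathcal L_V$ is a linear subspace containing both $V^{A \prec B}$ and $V^{B \prec A}$.

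The ``if'' direction is where the real work lies, and the point one might initially worry about is that the summands are \emph{a priori} only assumed positive semidefinite and to satisfy their single last-party constraint; they are not assumed to lie in $\mathcal L_V$. My claim is that $\mathcal L_V$-membership of each summand is automatic, and this is the main obstacle to clear. Recall that $W \in \mathcal L_V$ means $W = L_A(W) = L_B(W) = L_{AB}(W)$, with the commuting projectors of \eqref{eq:mapao}--\eqref{eq:mapbo} (and $L_{AB}$ as defined in Appendix~\ref{sec:valid_w}). First I would observe that the hypothesis $W^{A \prec B} = {}_{B_O}W^{A \prec B}$ \emph{alone} forces $L_B(W^{A \prec B}) = W^{A \prec B}$ and $L_{AB}(W^{A \prec B}) = W^{A \prec B}$: substituting ${}_{B_O}W^{A \prec B}$ for $W^{A \prec B}$ and using that replace-maps on disjoint subsystems commute and compose (so ${}_{A_I A_O}{}_{B_O}\cdot = {}_{A_I A_O B_O}\cdot$ and ${}_{A_O}{}_{B_O}\cdot = {}_{A_O B_O}\cdot$) makes the correction terms cancel. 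Symmetrically, $W^{B \prec A} = {}_{A_O}W^{B \prec A}$ gives $L_A(W^{B \prec A}) = W^{B \prec A}$ and $L_{AB}(W^{B \prec A}) = W^{B \prec A}$.

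It then remains to supply the one missing projector identity for each summand, and here I would invoke the hypothesis $W \in \mathcal L_V$ on the \emph{sum}. Applying $L_A$ to \eqref{decomp:causally_separable_1} and using $L_A(W) = W$ together with $L_A(W^{B \prec A}) = W^{B \prec A}$ yields $L_A(W^{A \prec B}) = W - W^{B \prec A} = W^{A \prec B}$, the last required identity for $W^{A \prec B}$; applying $L_B$ in the same way gives $L_B(W^{B \prec A}) = W^{B \prec A}$. Hence each summand satisfies all three defining conditions of $\mathcal L_V$ and, together with its last-party constraint and positivity, is a valid (non-normalised) causally ordered process matrix for the corresponding order.

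To finish, I would normalise. Since $W^{A \prec B}, W^{B \prec A} \ge 0$, their traces are non-negative and sum to $\tr W = d_O$, so setting $q := \tr W^{A \prec B}/d_O \in [0,1]$ and rescaling each nonzero summand to trace $d_O$ produces normalised valid causally ordered process matrices $V^{A \prec B}, V^{B \prec A}$ with $W = q\,V^{A \prec B} + (1{-}q)\,V^{B \prec A}$, matching \eqref{def:caus_sep} exactly. If one trace vanishes, the corresponding summand is $0$ (by positivity) and $W$ coincides with a single causally ordered process, which I would still cast in the form \eqref{def:caus_sep} by padding with an arbitrary valid causally ordered process carrying weight $0$. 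This establishes causal separability and completes the equivalence.
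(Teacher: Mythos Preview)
Your proof is correct and follows essentially the same route as the paper's: both directions match, and for the ``if'' part you use the same subtraction argument---each summand's last-party constraint yields two of the three $\mathcal L_V$ conditions directly, and the third is inherited from the sum by linearity---only phrased in the projector language $L_A,L_B,L_{AB}$ rather than via Eqs.~\eqref{eq:valid_bipartite_1}--\eqref{eq:valid_bipartite_3}. Your treatment of normalisation (including the zero-trace edge case) is slightly more explicit than the paper's, but the argument is the same.
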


\begin{proof}
The ``only if'' direction is straightforward (simply replace $q \, W^{A \prec B} \to W^{A \prec B}$ and $(1{-}q) \, W^{B \prec A} \to W^{B \prec A}$ to go from~\eqref{def:caus_sep} to~\eqref{decomp:causally_separable_1}, so that $W^{A \prec B}$ and $W^{B \prec A}$ in~\eqref{decomp:causally_separable_1} are not normalized).

To see that the converse also holds, first note that $W^{A \prec B} \ge 0$ and $W^{B \prec A} \ge 0$ imply that $W \ge 0$, so that $W$ is indeed a valid process matrix. Note furthermore that $W^{B \prec A} = {}_{A_O}W^{B \prec A}$ implies that ${}_{B_IB_O}W^{B \prec A} = {}_{A_OB_IB_O}W^{B \prec A}$, \ie, that $W^{B \prec A}$ satisfies~\eqref{eq:valid_bipartite_1}. Since $W \in \mathcal L_V$ also satisfies~\eqref{eq:valid_bipartite_1}, so does $W^{A \prec B} = W - W^{B \prec A}$. Similarly, $W^{A \prec B} = {}_{B_O}W^{A \prec B}$, together with the assumption that $W \in \mathcal L_V$, implies that both $W^{A \prec B}$ and $W^{B \prec A}$ satisfy~\eqref{eq:valid_bipartite_2}. Lastly, $W^{A \prec B} = {}_{B_O}W^{A \prec B}$ and $W^{B \prec A} = {}_{A_O}W^{B \prec A}$ directly imply that both $W^{A \prec B}$ and $W^{B \prec A}$ satisfy~\eqref{eq:valid_bipartite_3}. All in all, this shows that $W^{A \prec B}$ and $W^{B \prec A}$ are, up to normalization (which can easily be dealt with as above so as to recover the form~\eqref{def:caus_sep}), valid causally ordered process matrices.
\end{proof}

\medskip  
  
  We are now in a position to prove Theorem~\ref{th:witness}:
\begin{proof}[Proof of Theorem~\ref{th:witness}]
  We want to characterize the set of all hermitian operators $S \in {A_I}\otimes {A_O}\otimes{B_I}\otimes{B_O}$ such that
  \begin{equation}
\tr[S \wsep] \ge 0   
  \end{equation}
   for all causally separable process matrices $\wsep$. Note that the condition $\tr[\wsep] = d_O$ is not relevant for the characterization of the witnesses, so we shall lift it. Without this restriction the set of (non-normalized) causally separable process matrices becomes a convex cone, which we denote by $\wsepcone$. If we consider the duality relations with respect to the Hilbert-Schmidt inner product in the space ${A_I}\otimes {A_O}\otimes{B_I}\otimes{B_O}$, then the convex cone of causal witnesses $\mathcal S$ is the dual cone of $\wsepcone$. To characterize it we are going to use the representation of $\wsepcone$ that follows from Lemma~\ref{th:causally_separable}:
   \begin{equation}
  \wsepcone = \conv\!\big[(\mathcal P \cap \mathcal L_{B_O}) \cup (\mathcal P \cap \mathcal L_{A_O} )\big] \cap \mathcal L_V,
   \end{equation}
    where $\mathcal P$ is the self-dual cone of positive semidefinite matrices and $\mathcal L_{B_O}$ and $\mathcal L_{A_O}$ are the linear subspaces 
   \begin{gather}
    \mathcal L_{B_O} = \{W \midset W = {}_{B_O}W\}, \\
    \mathcal L_{A_O} = \{W \midset W = {}_{A_O}W\},
   \end{gather}
 and with $\mathcal L_V$ defined in Eq.~\eqref{def:LV2_subspace}. Their orthogonal complements within the subspace $A_I \otimes A_O \otimes B_I \otimes B_O$ are simply given by
   \begin{gather}
    \mathcal L_{B_O}^\perp = \{S \midset {}_{B_O}S = 0 \}, \\
    \mathcal L_{A_O}^\perp = \{S \midset {}_{A_O}S = 0 \}, \\
    \mathcal L_V^{\perp} = \{S \midset L_V(S) = 0 \}.
   \end{gather}
  
    Taking then the dual of ${\mathcal{W}^{\text{sep}}}$ using the duality relations \eqref{eq:orthogonal_complement}--\eqref{eq:dual_properties}, we get that the cone of causal witnesses is
    \begin{equation}
    S =  \conv \Big[ \big[(\mathcal P \cap \mathcal L_{B_O})^* \cap (\mathcal P \cap \mathcal L_{A_O})^*\big] \cup {\mathcal L}_V^{\perp}\Big]. \label{eq:calS}
    \end{equation}
  Focusing on $(\mathcal P \cap \mathcal L_{B_O})^*$, using again the duality relations \eqref{eq:orthogonal_complement}--\eqref{eq:dual_properties}, we see that
  \begin{align}
   (\mathcal P \cap \mathcal L_{B_O})^* &= \conv (\mathcal P \cup \mathcal L_{B_O}^\perp)  
  \nonumber \\ &= \{S_+ + S_0 \ | \  S_+ \ge 0, \, {}_{B_O}S_0 = 0 \} \nonumber \\
    &= \{ S \midset {}_{B_O} S \ge 0 \},
  \end{align}
  where the last equality is stating the fact that $S = S_+ + S_0$ with $S_+ \ge 0$ and ${}_{B_O}S_0 = 0$ if and only if ${}_{B_O}S \ge 0$. To see that this is true, let $S$ be such that ${}_{B_O}S \ge 0$. Define then $S_+ = {}_{B_O}S$ and $S_0 = S - {}_{B_O}S$. Then $S = S_+ + S_0$, $S_+ \ge 0$ by assumption, and ${}_{B_O}S_0 = 0$ since the map ${}_{B_O}\cdot$ is a projector. To prove the other direction, let $S = S_+ + S_0$ with $S_+ \ge 0$ and ${}_{B_O}S_0 = 0$. The map ${}_{B_O}\cdot$ being positive, it follows that ${}_{B_O}S = {}_{B_O}S_+ \ge 0$.
  
  Similarly,
    \begin{equation}
    (\mathcal P \cap \mathcal L_{A_O})^* = \{ S \midset {}_{A_O}S \ge 0 \}. \label{eq:AOSpos}
    \end{equation}
  Putting Eqs.~\eqref{eq:calS}--\eqref{eq:AOSpos} together, we see that a causal witness can be written as $S = S_P + S^\perp$ with ${}_{B_O}S_P \ge 0, {}_{A_O}S_P \ge 0$, and $L_V(S^\perp) = 0$.
\end{proof}

\section{Explicit positive semidefinite constraints}\label{sec:explicit_sdp}

For the convenience of the reader, we present the SDP problems \eqref{eq:sdp_witness} and \eqref{eq:sdp_dual} with all conic constraints rewritten in terms of the positive semidefinite cone, to facilitate implementation. To rewrite \eqref{eq:sdp_witness}, we need a characterisation of the cones $\mathcal S_V$ and $\mathcal W^*_V$. The first one is given by Corollary~\ref{th:witness_lv}. The second one is obtained as follows: since $\mathcal W = \mathcal P \cap \mathcal L_V$, we have that
\begin{align}
 \mathcal W^*_V &= \mathcal W^* \cap \mathcal L_V \\
	        &= (\mathcal P \cap \mathcal L_V)^* \cap \mathcal L_V \\
	        &= \conv(\mathcal P \cup \mathcal L_V^\perp) \cap \mathcal L_V \\
	        &=  \{L_V(\Sigma_P)|\Sigma_P \ge 0\} \, , \label{eq:w_dual}
\end{align}
where Equation \eqref{eq:w_dual} follows from an argument analogous to the one used to derive Corollary~\ref{th:witness_lv}. 

With this characterisation, the SDP problem \eqref{eq:sdp_witness} then becomes  
\begin{equation}
\begin{gathered}
   \min \tr(SW) \\
  \text{s.t.}\quad S = L_V(S_P) \, ,\  {}_{A_O}S_P \ge 0 \, ,\  {}_{B_O}S_P \ge 0 \, , \\ 
  \id/d_O - S = L_V(\Sigma_P) \, , \ \Sigma_P \ge 0 \, .
\end{gathered}
\end{equation}
To rewrite the SDP problem \eqref{eq:sdp_dual}, we use the characterisation of $\wsepcone$ given in Lemma~\ref{th:causally_separable}:
\begin{equation}
\begin{gathered}
 \min \tr(\Omega)/d_O \\
 \text{s.t.} \quad W + \Omega  = W^{A \prec B} + W^{B \prec A} \, , \\
 W^{A \prec B} \geq 0 \, , \quad  W^{A \prec B} = {}_{B_O}W^{A \prec B} \, , \\
 W^{B \prec A} \geq 0 \, , \quad  W^{B \prec A} = {}_{A_O}W^{B \prec A} \, , \\
 \Omega \ge 0 \, ,\quad \Omega = L_V(\Omega) \, .
\end{gathered}
\label{eq:sdp_dual_D2}
\end{equation}
Note that we could use directly the definition of $\wsepcone$ from Section \ref{sec:causally_separable_2}, which would give us a slightly more complicated SDP problem.

\section{Duality for conic problems}\label{sec:sdp_duality}

In this appendix we show that the two problems defined in Section~\ref{sec:testing_separability} are SDP problems and they are dual to each other. We show, furthermore, that the Duality Theorem applies to them, which implies that the optimal solutions can be found efficiently and that Equation~\eqref{eq:duality_relation} holds. 

Let us first recall the definitions of primal and dual conic problems (Definition~4.2.1 in~\cite{nesterov87}), of which SDP problems are a particular case:
\begin{definition}\label{def:sdp_sovietic}
 Let $E$ be a finite-dimensional vector space, $\mathcal K$ a closed convex pointed cone in $E$ with a nonempty interior, and $\mathcal L$ a linear subspace of $E$. Let also $b \in E$ and $c \in E'$. The data $E$, $\mathcal K$, $\mathcal L$, $b$, and $c$ define a pair of conic problems
\begin{align*}
(P):& \quad \min \ \prin{c}{x} \quad \textup{s.t.} \quad x \in \mathcal K \cap (\mathcal L+b), \\
(D):& \quad \min \ \prin{y}{b} \quad \textup{s.t.} \quad y \in \mathcal K^* \cap (\mathcal L^\perp+c), 
\end{align*}
where $\mathcal K^* \subset E'$ is the cone dual to $\mathcal K$, $\mathcal L^\perp \subset E$ is the orthogonal complement to $\mathcal L$, $\mathcal L+b \subset E$ and $\mathcal L^\perp+c \subset E'$ are affine subspaces.
(P) and (D) are called, respectively, the primal and dual problems associated with the above data.
\end{definition}

We want our SDP problems to measure how much worst-case noise needs to be added to a given process matrix $W$ to make it causally separable, \ie, the minimal $\lambda \ge 0$ for which 
\begin{equation}
 \frac{1}{1+\lambda}\de{W + \lambda \, \widetilde\Omega}
\end{equation}
is a causally separable process, optimized over all valid (normalised) processes $\widetilde\Omega$. First note that we can get rid of the quadratic variable $\lambda \, \widetilde\Omega$ by defining $\Omega = \lambda \, \widetilde\Omega$, which makes the objective $\lambda$ equal to $\tr \Omega/d_O$. Remembering also that the normalisation $1/(1+\lambda)$ is irrelevant for conic constraints, the problem reduces to minimizing $\tr \Omega/d_O$ such that
\begin{equation}
 W + \Omega \in \wsepcone , \quad \Omega \in \mathcal W \, .
\end{equation}

To translate this SDP problem into the language of Definition~\ref{def:sdp_sovietic}, let us define
\begin{gather}
 E = \mathcal L_V \times \mathcal L_V \, , \\
 \mathcal K = \wsepcone \times \mathcal W \, , \\
 \mathcal L = \{ (\Omega, \Omega) \, | \, \Omega \in \mathcal L_V \} \, , \\
 b = (W,0)\, , \\
 c = (0,\id/d_O)\, ,
\end{gather}
and the inner product
\begin{equation}
 \prin{(S,\Sigma)}{(W,\Omega)} = \tr(S W) + \tr(\Sigma \Omega).
\end{equation}
With these definitions, and denoting by $x = (\omega, \Omega)$ its variable, the primal SDP problem becomes
\begin{equation}\label{eq:primal}
\begin{gathered}
 \min\quad \Big\langle (0,\id/d_O) \, , \, (\omega,\Omega) \Big\rangle\\
 \text{s.t.}\quad (\omega,\Omega) \in \big( \wsepcone\! \times\! \mathcal W \big) \cap \{\omega {=} W {+} \Omega\} \, ,
\end{gathered}
\end{equation}
which indeed corresponds to the SDP problem~\eqref{eq:sdp_dual}.

To construct the dual SDP problem, first note that
\begin{gather}
  E' = \mathcal L_V \times \mathcal L_V \, , \\
 \mathcal K^* = \mathcal S_V \times \mathcal W_V^* \, , \label{eq:cone_cartesian} \\ 
 \mathcal L^\perp = \{ (S, -S) \, | \, S \in \mathcal L_V \}
\end{gather}
where we used the property that $(\mathcal K_1 \times \mathcal K_2)^* = \mathcal K_1^* \times \mathcal K_2^*$ in equation \eqref{eq:cone_cartesian}. Denoting by $y = (S, \Sigma)$ its variable, the dual SDP problem is then
  \begin{equation}\label{eq:dual}
\begin{gathered}
 \min\quad \Big\langle (S,\Sigma) \, , \, (W,0) \Big\rangle\\
 \text{s.t.}\quad (S,\Sigma) \in \big( \mathcal S_V \times \mathcal W^*_V \big) \cap \{\Sigma = \id/d_O - S\} \, ,
\end{gathered}
\end{equation}
which corresponds to the SDP problem \eqref{eq:sdp_witness}. 

Let us emphasize that here the duals of $\wsepcone$ and $\mathcal W$ are, respectively, $\mathcal S_V$ and $\mathcal W^*_V$, instead of $\mathcal S$ and $\mathcal W^*$, which is a consequence of choosing the vector space $E$ to be $E = F \times F$ with $F = \mathcal L_V$ instead of $F = {A_I}\otimes {A_O}\otimes{B_I}\otimes{B_O}$. We did this because as subsets of ${A_I}\otimes {A_O}\otimes{B_I}\otimes{B_O}$, the cones $\wsepcone$ and $\mathcal W$ (and therefore $\mathcal K = \wsepcone \times \mathcal W$) have empty interiors, and therefore these cones would not satisfy the requirements of Definition~\ref{def:sdp_sovietic}. This is problematic because the duals of cones with empty interior are not pointed (in our case, $\mathcal S$ and $\mathcal W^*$ are not pointed), and algorithms that solve SDP problems are numerically unstable when optimizing over non-pointed cones.

This definition is indeed satisfied by the cones we chose, \ie, $\wsepcone \times \mathcal W \subseteq \mathcal L_V \times \mathcal L_V$ is indeed pointed and has nonempty interior, as we shall check now. A pointed cone $\mathcal K$ is a cone such that $\mathcal K \cap (-\mathcal K) = \{0\}$. This indeed satisfied for $\wsepcone \times \mathcal W$, as both cones require their elements to be positive semidefinite, and $W \ge 0$ and $-W \ge 0$ imply that $W = 0$. To show that $\wsepcone \times \mathcal W$ has nonempty interior, it is enough\footnote{Since $\operatorname{int} (\wsepcone \times \mathcal W) = \operatorname{int} \wsepcone \times \operatorname{int} \mathcal W$ and $\wsepcone \subseteq \mathcal W$.} to find an operator that belongs to $\operatorname{int} \wsepcone$. This is done through the following lemma:
\begin{lemma}\label{lemma:wsepcone_interior}
 $\id^\circ + \Omega \in \operatorname{int} \wsepcone$ for any $\Omega \in \mathcal L_V$ such that $\norm{\Omega}_2 < \frac 1{2d_I}$, where $d_I = d_{A_I}d_{B_I}$ and $\norm{\cdot}_2$ is the Hilbert-Schmidt norm.
\end{lemma}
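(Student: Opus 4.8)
The plan is to exhibit, for every admissible $\Omega$, an explicit causally separable decomposition of $W := \id^\circ + \Omega$, and thereby show that the whole open Hilbert--Schmidt ball of radius $1/(2d_I)$ around $\id^\circ$ inside $\mathcal L_V$ is contained in $\wsepcone$; by definition this places $\id^\circ$ in the interior of $\wsepcone$ (relative to $\mathcal L_V$, which is the ambient space $E=\mathcal L_V\times\mathcal L_V$ used in the duality setup). Throughout I will use the characterization of $\wsepcone$ from Lemma~\ref{th:causally_separable}: ignoring normalization, it suffices to write $W\in\mathcal L_V$ as $W = W^{A\prec B} + W^{B\prec A}$ with $W^{A\prec B}\ge 0$, ${}_{B_O}W^{A\prec B} = W^{A\prec B}$, $W^{B\prec A}\ge 0$, and ${}_{A_O}W^{B\prec A} = W^{B\prec A}$.

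First I would record that $\id^\circ\in\mathcal L_V$. Indeed, $\id^\circ = \id/d_I$ (Eq.~\eqref{white}) is fixed by every trace-and-replace map ${}_X\cdot$, so the coefficients in Eq.~\eqref{eq:w_subspace} (which sum to $1$) give $L_V(\id^\circ)=\id^\circ$; hence $W=\id^\circ+\Omega\in\mathcal L_V$. I then propose the symmetric split
\begin{equation}
W^{A\prec B} := {}_{B_O}W - \tfrac12\,{}_{A_OB_O}W, \qquad W^{B\prec A} := {}_{A_O}W - \tfrac12\,{}_{A_OB_O}W.
\end{equation}
Relation \eqref{eq:valid_bipartite_3}, valid since $W\in\mathcal L_V$, gives $W^{A\prec B}+W^{B\prec A}=W$, while the idempotency and mutual commutation of the projectors ${}_{A_O}\cdot$ and ${}_{B_O}\cdot$ yield ${}_{B_O}W^{A\prec B}=W^{A\prec B}$ and ${}_{A_O}W^{B\prec A}=W^{B\prec A}$. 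Thus all the \emph{linear} conditions of Lemma~\ref{th:causally_separable} hold by construction, and only positivity of the two summands remains to be checked.

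The decisive step is the positivity bound, and it is also where the obstacle lies. Since ${}_{B_O}\id^\circ={}_{A_OB_O}\id^\circ=\id^\circ$, one gets $W^{A\prec B} = \tfrac12\id^\circ + \Omega_1$ with $\Omega_1 = {}_{B_O}\Omega - \tfrac12\,{}_{A_OB_O}\Omega$, and $\tfrac12\id^\circ = \tfrac1{2d_I}\id$ has smallest eigenvalue $1/(2d_I)$; so $W^{A\prec B}\ge 0$ follows once $\norm{\Omega_1}_\infty < 1/(2d_I)$. The difficulty is that a plain triangle inequality only gives $\norm{\Omega_1}_2\le\tfrac32\norm{\Omega}_2$, which is too weak to fit inside the $1/(2d_I)$ margin. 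The key observation that rescues the argument is that ${}_{A_O}\cdot$ and ${}_{B_O}\cdot$ are commuting, self-adjoint (self-dual) idempotents, i.e.\ orthogonal projections for the Hilbert--Schmidt inner product; this yields $\prin{{}_{B_O}\Omega}{{}_{A_OB_O}\Omega} = \norm{{}_{A_OB_O}\Omega}_2^2$, whence
\begin{equation}
\norm{\Omega_1}_2^2 = \norm{{}_{B_O}\Omega}_2^2 - \tfrac34\,\norm{{}_{A_OB_O}\Omega}_2^2 \le \norm{{}_{B_O}\Omega}_2^2 \le \norm{\Omega}_2^2.
\end{equation}
Using $\norm{\cdot}_\infty\le\norm{\cdot}_2$ together with the hypothesis $\norm{\Omega}_2 < 1/(2d_I)$, I conclude $\norm{\Omega_1}_\infty < 1/(2d_I)$, so that $W^{A\prec B}\ge\big(\tfrac1{2d_I}-\norm{\Omega_1}_\infty\big)\id > 0$. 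Exchanging $A_O\leftrightarrow B_O$ gives the identical bound for $W^{B\prec A}$, so both summands are (strictly) positive, the decomposition is a valid causally separable one, and $\id^\circ+\Omega\in\wsepcone$ for every $\Omega\in\mathcal L_V$ with $\norm{\Omega}_2<1/(2d_I)$, as claimed.
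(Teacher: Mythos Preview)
Your proof is correct and follows essentially the same approach as the paper: split $\id^\circ+\Omega$ into two pieces of the form $\tfrac12\id^\circ+\Omega_i$, use that ${}_{A_O}\cdot$ and ${}_{B_O}\cdot$ are commuting orthogonal projections in the Hilbert--Schmidt inner product to bound $\norm{\Omega_i}_2\le\norm{\Omega}_2$, and conclude positivity via $\norm{\cdot}_\infty\le\norm{\cdot}_2$. The only cosmetic difference is that the paper uses the asymmetric split $\omega^{A\prec B}={}_{B_O}\Omega$, $\omega^{B\prec A}=\Omega-{}_{B_O}\Omega$ (so Pythagoras applies directly), whereas you distribute the ${}_{A_OB_O}\Omega$ overlap symmetrically and compute the cross term explicitly; the resulting bound and argument are the same.
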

\begin{proof}
Since $\Omega \in \mathcal L_V$, the discussion in section \ref{sec:causally_separable_2} implies that the operators
 \begin{gather}
 \omega^{A \prec B} := {}_{B_O}\Omega  \, , \quad
 \omega^{B \prec A} := \Omega - {}_{B_O}\Omega
\end{gather}
are causally ordered (in the sense that they satisfy Eq.~\eqref{eq:valid_W_AB} and the analogous relation for the order $B \prec A$, respectively), and so are the operators
\begin{gather}
 W^{A \prec B} := \frac12 \id^\circ + \omega^{A \prec B} \\
 W^{B \prec A} := \frac12 \id^\circ + \omega^{B \prec A}
\end{gather}
Since, furthermore, 
\begin{equation}
 W^{A \prec B} + W^{B \prec A} = \id^\circ + \Omega,
\end{equation}
we have that $\id^\circ + \Omega \in \wsepcone$ if $W^{A \prec B}$ and $W^{B \prec A}$ are positive semidefinite. This is the case if
\begin{equation}
\norm{\omega^{A \prec B}} \le \frac 1{2d_I} \mathand \norm{\omega^{B \prec A}} \le \frac 1{2d_I} \, ,
\end{equation}
where $\norm{\cdot}$ is the standard operator norm (\ie, the maximum singular value). To be able to enforce that, first note that $\omega^{A \prec B}$ and $\omega^{B \prec A}$ are orthogonal, and therefore Pythagoras' theorem implies that
 \begin{equation}
 \norm{\Omega}_2^2 = \norm{\omega^{A \prec B}}_2^2 + \norm{\omega^{B \prec A}}_2^2 \, ,
\end{equation}
which implies that
\begin{multline}\label{eq:norm_inequality}
\max \DE{ \norm{\omega^{A \prec B}}, \norm{\omega^{B \prec A}}} \le \\ \max \DE{ \norm{\omega^{A \prec B}}_2, \norm{\omega^{B \prec A}}_2} \le \norm{\Omega}_2 \, ,
\end{multline}
and therefore
\begin{equation}\label{eq:closed_ball}
 \norm{\Omega}_2 \le \frac 1{2d_I}
\end{equation}
implies that $\id^\circ + \Omega \in \wsepcone$. This in turn implies that the interior of the ball composed of operators $\id^\circ + \Omega$ with $\Omega$ satisfying \eqref{eq:closed_ball} belongs to the interior of $\wsepcone$, \ie, $\norm{\Omega}_2 < \frac 1{2d_I}$ implies that $\id^\circ + \Omega \in \operatorname{int} \wsepcone$.
\end{proof}
  
This concludes the proof that problems \eqref{eq:sdp_witness} and \eqref{eq:sdp_dual} are SDP problems dual to each other. We shall now proceed to show that the Duality Theorem (Theorem~4.2.1 in~\cite{nesterov87}) applies to them:
\begin{theorem}\label{thm:duality}
 Let (P), (D) be a primal-dual pair of conic problems as defined above, and let the pair be such that
 \begin{enumerate}[leftmargin=5mm]
  \item The set of primal solutions $\mathcal K \cap (\mathcal L+b)$ intersects $\operatorname{int} \mathcal K$;
  \item The set of dual solutions $\mathcal K^* \cap (\mathcal L^\perp+c)$ intersects $\operatorname{int} \mathcal K^*$;
  \item $\prin{c}{x}$ is lower bounded for all $x \in \mathcal K \cap (\mathcal L + b)$.
 \end{enumerate}
Then both the primal and the dual problems are solvable, and the optimal solutions $x^*$ and $y^*$ satisfy the relation
\begin{equation}\label{eq:duality}
 \prin{c}{b} = \prin{c}{x^*} + \prin{y^*}{b}.
\end{equation}
\end{theorem}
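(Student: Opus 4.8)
The plan is to prove the theorem in the two standard stages of conic duality: the elementary \emph{weak duality} inequality, which is purely algebraic, followed by the substantive claim of \emph{no duality gap together with attainment on both sides}, for which the interior-point hypotheses~1 and~2 and the boundedness hypothesis~3 do all the real work.

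First I would establish weak duality. Take any primal-feasible $x$ and dual-feasible $y$, and write $x = \ell + b$ with $\ell \in \mathcal{L}$ and $y = m + c$ with $m \in \mathcal{L}^\perp$. Expanding the inner product and using $\prin{m}{\ell} = 0$ yields the identity
\begin{equation}
 \prin{y}{x} = \prin{c}{x} + \prin{y}{b} - \prin{c}{b} \, .
\end{equation}
Since $x \in \mathcal{K}$ and $y \in \mathcal{K}^*$, the left-hand side is non-negative, so $\prin{c}{x} + \prin{y}{b} \ge \prin{c}{b}$ for every feasible pair. Hypotheses~1 and~3 guarantee that the primal is feasible and bounded below, so its optimal value is finite, and the displayed inequality then bounds the dual objective below as well; this already gives $\prin{c}{x^*} + \prin{y^*}{b} \ge \prin{c}{b}$.

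The core of the argument is to upgrade this to an equality that is attained. Here I would invoke a separating-hyperplane argument. Let $p^*$ denote the finite primal value, and consider the convex image of $\mathcal{K}$ under the affine map that records both the objective value and the displacement from $\mathcal{L} + b$; the point encoding the target value $p^*$ sits on its boundary. The primal Slater hypothesis~1 — a feasible point in $\operatorname{int}\mathcal{K}$ — is exactly what forces this convex set to have nonempty interior, so that a separating functional exists and can be chosen \emph{nonvertical}, \ie with a strictly positive component along the objective direction. Normalizing it produces a dual-feasible $y^* \in \mathcal{K}^* \cap (\mathcal{L}^\perp + c)$ saturating weak duality, $\prin{y^*}{b} = \prin{c}{b} - p^*$. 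Equality in the identity above then forces the complementary-slackness condition $\prin{y^*}{x^*} = 0$, which pins down a primal optimizer. Because the pair is symmetric — $(\mathcal{K}^*)^* = \mathcal{K}$ for the closed convex cone $\mathcal{K}$ and $(\mathcal{L}^\perp)^\perp = \mathcal{L}$ — attainment on the dual side follows from the identical argument with primal and dual interchanged, now invoking hypothesis~2 in place of hypothesis~1. Combining the two directions gives $\prin{c}{b} = \prin{c}{x^*} + \prin{y^*}{b}$, as claimed.

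The hard part will be precisely the nonvertical separation. In conic optimization, unlike in linear programming, strong duality can fail without a regularity condition, and a purely vertical separating hyperplane yields only a degenerate certificate carrying no information about the objective, leaving a genuine duality gap. The interior-point hypotheses~1 and~2 are exactly the mechanism that excludes this pathology, and checking that our problems~\eqref{eq:sdp_witness} and~\eqref{eq:sdp_dual} really satisfy them — which on the primal side is the content of Lemma~\ref{lemma:wsepcone_interior} — is what licenses the application of the theorem in the first place.
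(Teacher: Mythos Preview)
Your sketch is a correct outline of the standard separating-hyperplane proof of conic strong duality: the weak-duality identity $\prin{y}{x} = \prin{c}{x} + \prin{y}{b} - \prin{c}{b}$ is right, and the subsequent appeal to a nonvertical separating functional under the Slater-type hypotheses is the usual route to closing the gap and obtaining attainment, with the primal--dual symmetry $(\mathcal K^*)^* = \mathcal K$, $(\mathcal L^\perp)^\perp = \mathcal L$ handling the other direction.

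However, the paper does not prove this theorem at all. It is quoted verbatim as Theorem~4.2.1 of Nesterov and Nemirovskii~\cite{nesterov87}, and the surrounding appendix is devoted not to proving the duality theorem but to \emph{verifying its hypotheses} for the specific pair~\eqref{eq:sdp_witness}--\eqref{eq:sdp_dual}: showing that $\mathcal K = \wsepcone \times \mathcal W$ is pointed with nonempty interior in $\mathcal L_V \times \mathcal L_V$ (via Lemma~\ref{lemma:wsepcone_interior}), exhibiting strictly feasible points on both sides, and checking boundedness of the primal objective. So what you have written is genuinely additional content --- a proof of a result the paper treats as a black box --- rather than a reconstruction of anything in the paper. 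Your closing paragraph in fact conflates the two tasks: Lemma~\ref{lemma:wsepcone_interior} is not part of the proof of the duality theorem itself, but part of the paper's verification that the theorem applies to its particular cones.
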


Let us check that for the SDP problems \eqref{eq:primal} and \eqref{eq:dual}, the three assumptions of the Duality Theorem are indeed satisfied.

To see that \emph{1.} is satisfied, we need to find $\Omega \in \operatorname{int} \mathcal W$ such that $W + \Omega \in \operatorname{int} \wsepcone$. Take $\Omega = \lambda \id^\circ$; then $W + \lambda\id^\circ \in \operatorname{int} \wsepcone$ iff $\frac1\lambda W + \id^\circ \in \operatorname{int} \wsepcone$. Using Lemma~\ref{lemma:wsepcone_interior}, we conclude that this is true if
\begin{equation}\label{eq:it_belongs_to_the_interior}
 \norm{\frac1\lambda W}_2 < \frac1{2d_I}
\end{equation}
Since
\begin{equation}
 \norm{\frac1\lambda W}_2 \le \frac1\lambda \norm{W}_1 = \frac{d_O}{\lambda}
\end{equation}
it is enough to choose
\begin{equation}
 \lambda > 2d_Id_O
\end{equation}
to satisfy inequality \eqref{eq:it_belongs_to_the_interior}, and we're done.

To see that \emph{2.} is satisfied, we need to exhibit a witness $S$ such that $S \in \operatorname{int}\mathcal S_V$ and $\id/d_O - S \in \operatorname{int}\mathcal W^*_V$. Since the cone $\mathcal P \cap \mathcal L_V$ of positive semidefinite matrices in $\mathcal L_V$ is a full-dimensional subset of both $\mathcal S_V$ and $\mathcal W^*_V$, it is enough to find an operator $S$ such that $S > 0$ and $\id/d_O -S > 0$. One can take $S = \id/(2d_O)$.

To see that \emph{3.} is satisfied, note that $\Omega \ge 0$ implies that $\tr\Omega/d_O \ge 0$.

All in all, the three assumptions of the Duality Theorem above are thus satisfied. Applying the identity~\eqref{eq:duality} to our pair of conic problems, we have, for the optimal solutions $\Omega^*$ and $S^*$:
\begin{equation}
 0 = \tr[\Omega^*]/d_O + \tr[S^*W] \, ,
\end{equation}
as claimed in Eq.~\eqref{eq:duality_relation}.
As discussed in Sec.~\ref{sec:testing_separability}, a value $\tr[\Omega^*]/d_O  = -\tr[S^* W] > 0$ guarantees that the process matrix $W$ is causally nonseparable, and the solution $S^*$ of the dual problem provides an explicit causal witness; a value $\tr[\Omega^*]/d_O  = 0$ proves that the process matrix $W$ is causally separable, and the primal problem provides a decomposition of $W$ in terms of causally ordered process matrices $W^{A \prec B}$ and $W^{B \prec A}$ (again, this is easier to see in the representation of the primal problem shown in \eqref{eq:sdp_dual_D2}).

\section{Measuring causal nonseparability}\label{sec:measures}

  A causal witness can be used not only to detect the causal nonseparability of a given process, but also to \textit{measure} it. This is analogous to the situation with entanglement witnesses and entanglement measures \cite{brandao05}. First of all, we need to define what we mean by a measure of causal nonseparability. In analogy with the case of entanglement, we suggest that a proper measure of causal nonseparability $\mathcal N$ should satisfy the following properties:
\begin{description}
 \item[Discrimination] $\mathcal N(W) \ge 0$ for every process matrix $W$, with $\mathcal N(W) = 0$ if and only if $W$ is causally separable.
 \item[Convexity] $\mathcal N\de{\sum_ip_iW_i} \le \sum_ip_i\mathcal N(W_i)$ for any process matrices $W_i$ and any $p_i \ge 0$, with $\sum_i p_i = 1$.
 \item[Monotony] $\mathcal N\big(\$(W)\big) \le \mathcal N(W)$, where $\$(W)$ is any process obtainable from $W$ by composing it with local CPTP maps. 
\end{description}
  
Now we shall prove that both $\gr(W)$ and $\rr(W)$ as defined in equations \eqref{eq:def_generalised_robusntess} and \eqref{eq:def_random_robusntess} respect the properties of \textbf{Discrimination} and \textbf{Convexity}, whereas $\gr(W)$ respects \textbf{Monotony} but $\rr(W)$ does not.

\textbf{Discrimination} follows from the definition of the SDP problems~\eqref{eq:sdp_witness}--\eqref{eq:sdp_dual} and \eqref{eq:sdp_random_robustness_primal}--\eqref{eq:sdp_random_robustness}. Note that since they satisfy the assumptions of the Duality Theorem (\ref{thm:duality}), there are algorithms that actually find the optimal solutions efficiently.

To demonstrate \textbf{Convexity}, let us denote by $S_W$ the optimal witness for a given process matrix $W$; because of its optimality, one has, for any process matrices $W_i$ and any $p_i \ge 0$,
\begin{equation}
 \tr[S_{W_j} W_j] \le \tr\De{\de{S_{\sum_ip_iW_i}} W_j}
\end{equation}
and therefore
\begin{equation}
 - \tr\Big[\de{S_{\sum_ip_iW_i}} \sum_ip_iW_i\Big] \le -\sum_ip_i\tr[S_{W_i} W_i],
\end{equation}
that is, $\mathcal N\de{\sum_ip_iW_i} \le \sum_ip_i\mathcal N(W_i)$.

Now we show that \textbf{Monotony} does hold for $\gr(W)$. For that, first we need to define the map $\$(\cdot)$ that composes a process $W$ with local operations. More specifically, the map $\$(\cdot)$ composes a process with the CPTP map $M_1^A$ applied to Alice's input, the CPTP map $M_3^A$ applied to Alice's output, the CPTP map $M_1^B$ applied to Bob's input, and the CPTP map $M_3^B$ applied to Bob's output. We can then define $\$(\cdot)$ as the map such that for all processes $W$ and all CP maps $C^A_2$ and $C^B_2$ we have that
\begin{equation}
 \tr[ (C^A_2 \otimes C^B_2) \cdot \$(W)] = \tr[ (C_{123}^A \otimes C_{123}^B) W],
\end{equation}
where
\begin{equation}
 C_{123}^X := \De{\mathcal I \otimes (\mathcal M^X_3 \circ \mathcal C^X_2 \circ \mathcal M^X_1)(\KetBra{\id})}^T
\end{equation}
is the Choi-Jamiołkowski operator of the composition of the each party's operations. The processes $W$ and $\$(W)$ are illustrated in Figure~\ref{fig:local-maps}. 

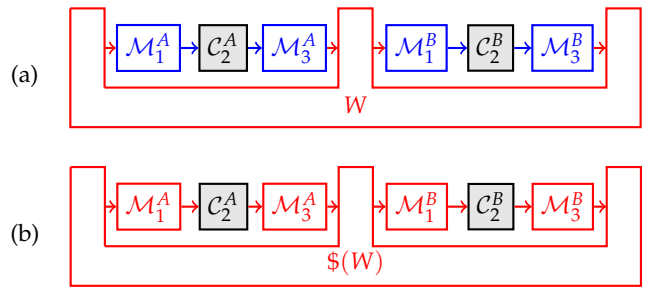
\begin{figure}[htpc]
  \centering
  \begin{tikzpicture}[scale=1.5]
		\node[draw, thick, rectangle,minimum width=0.4cm,minimum height=0.4cm,fill=white!80!gray] (A) at (2.5/2+0.18/2,-0.35) {$\mathcal{C}^{A}_{2}$};
		\node[draw, thick, rectangle,minimum width=0.4cm,minimum height=0.4cm,fill=white!80!gray] (B) at (3*2.5/2-0.13/2,-0.35) {$\mathcal C^B_2$};
              	\node[draw, thick, rectangle,minimum width=0.4cm,minimum height=0.4cm, blue] (MA1) at (2.5/2+0.15/2-0.64,-0.35) {$\mathcal{M}^{A}_1$};
		\node[draw, thick, rectangle,minimum width=0.4cm,minimum height=0.4cm, blue] (MA3) at (2.5/2+0.15/2+0.64,-0.35) {$\mathcal M^A_3$};
                \node[draw, thick, rectangle,minimum width=0.4cm,minimum height=0.4cm, blue] (MB1) at (3*2.5/2-0.15/2-0.64,-0.35) {$\mathcal{M}^{B}_1$};
		\node[draw, thick, rectangle,minimum width=0.4cm,minimum height=0.4cm, blue] (MB3) at (3*2.5/2-0.15/2+0.64,-0.35) {$\mathcal M^B_3$};
    \draw[thick, blue,->] (A) -- (MA3);  \draw[thick, blue,<-]  (A) -- (MA1);
    \draw[thick, blue,->] (B) -- (MB3);  \draw[thick, blue,<-]  (B) -- (MB1);
    \draw[thick, red,->] (MA3) -- ++(0.38,0);  \draw[thick, red,<-]  (MA1) --  ++(-0.38,0);
    \draw[thick, red,->] (MB3) -- ++(0.38,0);  \draw[thick, red,<-]  (MB1) --  ++(-0.38,0);
    \draw[thick, red] (0,0) -- ++(0,-1.05) -- ++(5,0) -- ++ (0,1.05) -- ++(-0.3,0) -- ++(0,-0.7) -- (2.5+0.15,-1+0.3) -- ++(0,0.7) -- ++(-0.3,0) -- ++(0,-0.7) -- (0.3,-0.7);
    \draw[thick, red] (0,0) --  ++(+0.3,0) -- ++(0,-0.7);
    \node[thick, red] at (2.5,-0.85) {$W$};
    \node[thick] at (-0.4,-0.6) {(a)};
    \node[] at (0,-1.3) {};
  \end{tikzpicture}
  \begin{tikzpicture}[scale=1.5]
		\node[draw, thick, rectangle,minimum width=0.4cm,minimum height=0.4cm,fill=white!80!gray] (A) at (2.5/2+0.18/2,-0.35) {$\mathcal{C}^{A}_{2}$};
		\node[draw, thick, rectangle,minimum width=0.4cm,minimum height=0.4cm,fill=white!80!gray] (B) at (3*2.5/2-0.13/2,-0.35) {$\mathcal C^B_2$};
              	\node[draw, thick, rectangle,minimum width=0.4cm,minimum height=0.4cm, red] (MA1) at (2.5/2+0.15/2-0.64,-0.35) {$\mathcal{M}^{A}_1$};
		\node[draw, thick, rectangle,minimum width=0.4cm,minimum height=0.4cm, red] (MA3) at (2.5/2+0.15/2+0.64,-0.35) {$\mathcal M^A_3$};
                \node[draw, thick, rectangle,minimum width=0.4cm,minimum height=0.4cm, red] (MB1) at (3*2.5/2-0.15/2-0.64,-0.35) {$\mathcal{M}^{B}_1$};
		\node[draw, thick, rectangle,minimum width=0.4cm,minimum height=0.4cm, red] (MB3) at (3*2.5/2-0.15/2+0.64,-0.35) {$\mathcal M^B_3$};
    \draw[thick, red,->] (A) -- (MA3);  \draw[thick, red,<-]  (A) -- (MA1);
    \draw[thick, red,->] (B) -- (MB3);  \draw[thick, red,<-]  (B) -- (MB1);
    \draw[thick, red,->] (MA3) -- ++(0.38,0);  \draw[thick, red,<-]  (MA1) --  ++(-0.38,0);
    \draw[thick, red,->] (MB3) -- ++(0.38,0);  \draw[thick, red,<-]  (MB1) --  ++(-0.38,0);
    \draw[thick, red] (0,0) -- ++(0,-1.05) -- ++(5,0) -- ++ (0,1.05) -- ++(-0.3,0) -- ++(0,-0.7) -- (2.5+0.15,-1+0.3) -- ++(0,0.7) -- ++(-0.3,0) -- ++(0,-0.7) -- (0.3,-0.7);
    \draw[thick, red] (0,0) --  ++(+0.3,0) -- ++(0,-0.7);
    \node[thick, red] at (2.5,-0.85) {$\$(W)$};
    \node[thick] at (-0.4,-0.6) {(b)};
  \end{tikzpicture}  \caption{(a) The situation where the parties share a bipartite process $W$ (in red) and apply the CPTP maps $\mathcal M^X_1$ and $\mathcal M^X_3$ (in blue) to their inputs and outputs can be equivalently described by (b) a single bipartite process $\$(W)$ (in red).}
  \label{fig:local-maps}
\end{figure}

It follows from this definition that $\$(W)$ is a valid process. To see this, note that the validity of $W$ implies that the probabilities
\begin{equation}
 P(\mathcal C_{123}^A,\mathcal C_{123}^B) = \tr[ (C_{123}^A \otimes C_{123}^B) W]
\end{equation}
are positive and normalised. By definition, these are equal to the probabilities
\begin{equation}
\label{eq:probability-composition}
P(\mathcal C^A_2,\mathcal C^B_2) = \tr[ (C^A_2 \otimes C^B_2) \cdot \$(W)],
\end{equation}
and the arguments in Appendix~\ref{sec:valid_w} show that requiring the probabilities $P(\mathcal C^A_2,\mathcal C^B_2)$ to be positive and normalised is enough to imply the validity of the process $\$(W)$.

Furthermore, if $W$ is causally separable so is $\$(W)$. This follows from the linearity of $\$(\cdot)$ and from the fact that $\$(\cdot)$ preserves the causal order when applied to a causally ordered process, which follows directly from the analogous property for quantum combs \cite{chiribella09b}. 

We want to show that for all $\$(\cdot)$ (\ie, for all CPTP maps $M_1^A$, $M_3^A$, $M_1^B$ and $M_3^B$) and $W$,
\begin{equation}
 \gr(\$(W)) \le \gr(W),
\end{equation}
or equivalently that
\begin{equation}
 -\tr\De{\de{S_{\$(W)}}\$(W)} \le -\tr[(S_W) W] \, .
\end{equation}
By duality, this is equivalent to
\begin{equation}
 -\tr\De{\$^*\de{S_{\$(W)}} W} \le -\tr[(S_W) W]
\end{equation}
(where $\$^*$ is the dual map of $\$$), which follows from the optimality of $S_W$ if $\$^*\de{S_{\$(W)}}$ is a valid causal witness that respects the normalisation condition for generalised robustness (as defined in SDP problem~\eqref{eq:sdp_witness}). Therefore, we need to show it has the two following properties:
\begin{gather}
 \tr\De{\$^*\de{S_{\$(W)}} W^\text{sep}} \ge 0 \quad \forall W^\text{sep} \, , \\
 \id/d_O - \$^*\de{S_{\$(W)}} \in \mathcal W^* \, .
\end{gather}
The first one follows from duality
\begin{equation}
 \tr\De{\$^*\de{S_{\$(W)}} W^\text{sep}} = \tr\De{S_{\$(W)} \$\de{W^\text{sep}}}
\end{equation}
and the fact that $\$\de{W^\text{sep}}$ is causally separable and $S_{\$(W)}$ is a causal witness.

The second one is equivalent to
\begin{equation}
 \tr\De{\de{\id/d_O - \$^*\de{S_{\$(W)}}}\Omega} \ge 0
\end{equation}
for every (not necessarily normalised) process matrix $\Omega$. From duality and linearity this is equivalent to
\begin{equation}
 \tr\De{ S_{\$(W)} \, \$(\Omega)} \le \tr(\Omega)/d_O \, ,
\end{equation}
and this follows from the fact that $\$(\cdot)$ is trace-preserving and that $\id/d_O - S_{\$(W)} \in \mathcal W^*$ (which is the normalization condition from the SDP problem~\eqref{eq:sdp_witness}).

An analogous proof fails for random robustness, as the dual map $\$^*(\cdot)$ can increase the trace of a witness, and therefore make it fail to satisfy the normalisation condition for SDP problem~\eqref{eq:sdp_random_robustness}. To show that $\rr(W)$ does not in fact satisfy \textbf{Monotony}, it is enough to find a process and local operations such that $\rr\big(\$(W)\big) > \rr(W)$. 

A concrete counterexample can be obtained by considering $W_\text{OCB}$ and $S_\text{OCB}$ from section \ref{sec:witness_example}. Let
\begin{equation}
 W_1 = W_\text{OCB} \otimes \frac{\id^{A_I'}}{2}
\end{equation}
be the process obtained from $W_\text{OCB}$ by adding a maximally mixed qubit to Alice's input space. Then its random robustness is (up to numerical precision)
\begin{equation}
 \rr(W_1) = -\tr S_{W_1}\,W_1 = \sqrt{2}-1,
\end{equation}
where
\begin{equation}
 S_{W_1} = 2\, S_\text{OCB} \otimes \proj{0}^{A_I'}
\end{equation}
is its optimal random robustness witness. Now, we can obtain the process
\begin{equation}
 \$(W_1) = W_\text{OCB} \otimes \proj{0}^{A_I'}
\end{equation}
from $W_1$ simply by discarding the system in Alice's input space $A_I'$ and replacing it with $\proj{0}$, which is clearly a local operation. Then its random robustness is (up to numerical precision)
\begin{equation}
 \rr(\$(W_1)) = -\tr S_{\$(W_1)}\,\$(W_1) =  2(\sqrt{2}-1),
\end{equation}
where $S_{\$(W_1)} = S_{W_1}$. Thus we have shown that
 \begin{equation}
  \rr(\$(W_1)) > \rr(W_1),
 \end{equation}
so random robustness is not monotonous under local operations.

\section{Characterisation of tripartite causal witnesses}\label{app:proof_thm3}

\begin{proof}[Proof of Theorem~\ref{thm:switch_witness}]
 As defined in section \ref{sec:causally_separable_3c}, the cone of tripartite causally separable processes with ${d_{C_O}=1}$ is
 \begin{equation}
  \wsepcone_{3C} = \conv\De{ \de{\mathcal P \cap \mathcal L_{A \prec B \prec C}} \cup \de{\mathcal P \cap \mathcal L_{B \prec A \prec C}} } \, ,
 \end{equation}
where $\mathcal L_{A \prec B \prec C}$ and $\mathcal L_{B \prec A \prec C}$ are the linear subspaces defined by the projectors $L_{A \prec B \prec C}$ and $L_{B \prec A \prec C}$. The cone of causal witnesses $\mathcal S_{3C}$ is its dual
\begin{equation}
 \mathcal S_{3C} = {\wsepcone_{3C}}^* \, .
\end{equation}
Using duality relations \eqref{eq:dual_properties} and \eqref{eq:orthogonal_complement} and the fact that the cone of positive semidefinite matrices is self-dual, we get
  \begin{align}
  \mathcal S_{3C} &=  \de{\mathcal P \cap \mathcal L_{A \prec B \prec C}}^* \cap \de{\mathcal P \cap \mathcal L_{B \prec A \prec C}}^* \\
  &=  \De{\conv\de{\mathcal P \cup \mathcal L_{A \prec B \prec C}^\perp}} \cap \De{\conv\de{\mathcal P \cup \mathcal L_{B \prec A \prec C}^\perp}},
 \end{align}
with
\begin{multline}
 \conv\de{\mathcal P \cup \mathcal L_{A \prec B \prec C}^\perp} = \\ \{ S^P_{ABC} + S^\perp_{ABC} \midset S^P_{ABC} \ge 0,\ L_{A \prec B \prec C}(S^\perp_{ABC})=0 \}
\end{multline}
and
\begin{multline}
 \conv\de{\mathcal P \cup \mathcal L_{B \prec A \prec C}^\perp} = \\ \{ S^P_{BAC} + S^\perp_{BAC} \midset S^P_{BAC} \ge 0,\ L_{B \prec A \prec C}(S^\perp_{BAC})=0 \} \, .
\end{multline}
\end{proof}
\section{Optimizing Chiribella's task}\label{sec:optimizing_chiribella}

We want to optimize the weights $q_{ij}^\com, q_{ij}^\ant$ so as to minimize the maximal probability of success\footnote{Remember that the probability of success for the quantum switch is always equal to one.} for causally separable processes $p_\text{succ}^\text{sep}$, \ie, we want to minimize the upper bound
\begin{equation}
 \tr(G_\text{finite} W^\text{sep}) \le p_\text{succ}^\text{sep}.
\end{equation}
This is relevant because, according to equation \eqref{eq:generalised_robustness_versus_p_succ}, a lower $p_\text{succ}^\text{sep}$ corresponds to a larger resistance to worst-case noise.

To do this, note that $\tr(G_\text{finite} W) \le p_\text{succ}^\text{sep}$ if and only if $\tr[(p_\text{succ}^\text{sep}\id/d_O-G_\text{finite}) W] \ge 0$. Imposing that this holds for all causally separable processes $W \in \wsepcone_{3C}$ amounts to imposing that $p_\text{succ}^\text{sep}\frac{\id}{d_O} - G_\text{finite} \in \mathcal S_{3C}$, where $\mathcal S_{3C}$ is the cone of causal witnesses (characterized through Theorem~\ref{thm:switch_witness}).

We are thus led to define the following SDP problem:
\begin{equation}
\begin{gathered} \min \ p_\text{succ}^\text{sep} \\[1mm]
 \text{s.t.} \qquad p_\text{succ}^\text{sep}\id/d_O - G_\text{finite} \in \mathcal S_{3C} \, , \quad \\
  q_{ij}^\com \ge 0 \, , \quad q_{ij}^\ant \ge 0 \, , \quad \sum_{i,j=1}^{10} q_{ij}^\com  + q_{ij}^\ant = 1 \, ,\\
  q_{ij}^\com = 0\quad\quad \forall i,j\quad\text{s.t.}\quad[U_i,U_j]\neq0, \\ 
  q_{ij}^\ant = 0\quad\quad \forall i,j\quad\text{s.t.}\quad\{U_i,U_j\}\neq0,
\end{gathered} \label{SDP:optim_chiribella}
\end{equation}
where in order to keep the interpretation of the task as guessing whether the unitaries commute or anticommute, we imposed that $q_{ij}^\com = 0$ for non-commuting $U_i, U_j$ and $q_{ij}^\ant = 0$ for non-anticommuting $U_i,U_j$.

Solving this problem numerically, we found
\begin{gather}
p_\text{succ}^\text{sep} \approx 0.8690
\end{gather}
(and we omit the optimal $q_{ij}^\com, q_{ij}^\ant$ for brevity).


\clearpage

\bibliographystyle{linksen}
\bibliography{biblio}

\providecommand{\href}[2]{#2}\begingroup\raggedright\begin{thebibliography}{10}

\bibitem{deutsch1989quantum}
D.~Deutsch, ``Quantum Computational Networks,''
  \href{http://dx.doi.org/10.1098/rspa.1989.0099}{{\em Proc. R. Soc. Lond. A}
  {\bfseries 425}, 73--90 (1989)}.

\bibitem{hardy2009foliable}
L.~Hardy, ``Foliable operational structures for general probabilistic
  theories,'' {\em Deep Beauty: Understanding the Quantum World through
  Mathematical Innovation; Halvorson, H., Ed} 409 (2009),
  \href{http://arxiv.org/abs/0912.4740}{{\ttfamily arXiv:0912.4740
  [quant-ph]}}.

\bibitem{coecke2010quantum}
B.~Coecke, ``Quantum picturalism,'' {\em Contemporary physics} {\bfseries 51},
  59--83 (2010).

\bibitem{PhysRevA.81.062348}
G.~Chiribella, G.~M. D'Ariano, and P.~Perinotti, ``Probabilistic theories with
  purification,'' \href{http://dx.doi.org/10.1103/PhysRevA.81.062348}{{\em
  Phys. Rev. A} {\bfseries 81}, 062348 (2010)},
  \href{http://arxiv.org/abs/0908.1583}{{\ttfamily arXiv:0908.1583
  [quant-ph]}}.

\bibitem{PhysRevA.84.012311}
G.~Chiribella, G.~M. D'Ariano, and P.~Perinotti, ``Informational derivation of
  quantum theory,'' \href{http://dx.doi.org/10.1103/PhysRevA.84.012311}{{\em
  Phys. Rev. A} {\bfseries 84}, 012311 (2011)},
  \href{http://arxiv.org/abs/1011.6451}{{\ttfamily arXiv:1011.6451
  [quant-ph]}}.

\bibitem{chiribella09}
G.~{Chiribella}, G.~M. {D'Ariano}, P.~{Perinotti}, and B.~{Valiron}, ``{Quantum
  computations without definite causal structure},''
  \href{http://dx.doi.org/10.1103/PhysRevA.88.022318}{{\em Phys. Rev.~A}
  {\bfseries 88}, 022318 (2013)},
  \href{http://arxiv.org/abs/0912.0195}{{\ttfamily arXiv:0912.0195
  [quant-ph]}}.

\bibitem{chiribella12}
G.~{Chiribella}, ``{Perfect discrimination of no-signalling channels via
  quantum superposition of causal structures},''
  \href{http://dx.doi.org/10.1103/PhysRevA.86.040301}{{\em Phys. Rev.~A}
  {\bfseries 86}, 040301 (2012)},
  \href{http://arxiv.org/abs/1109.5154}{{\ttfamily arXiv:1109.5154
  [quant-ph]}}.

\bibitem{araujo14}
M.~{Araújo}, F.~{Costa}, and {\v C}.~{Brukner}, ``{Computational Advantage
  from Quantum-Controlled Ordering of Gates},''
  \href{http://dx.doi.org/10.1103/PhysRevLett.113.250402}{{\em Phys. Rev.
  Lett.} {\bfseries 113}, 250402 (2014)},
  \href{http://arxiv.org/abs/1401.8127}{{\ttfamily arXiv:1401.8127
  [quant-ph]}}.

\bibitem{procopio_experimental_2014}
L.~M. Procopio, A.~Moqanaki, M.~Araújo, F.~Costa, I.~A. Calafell, E.~G. Dowd,
  D.~R. Hamel, L.~A. Rozema, {\v C}.~Brukner, and P.~Walther, ``Experimental
  superposition of orders of quantum gates,''
  \href{http://dx.doi.org/10.1038/ncomms8913}{{\em Nat. Commun.} {\bfseries 6},
  7913 (2015)}, \href{http://arxiv.org/abs/1412.4006}{{\ttfamily
  arXiv:1412.4006 [quant-ph]}}.

\bibitem{hardy2007towards}
L.~{Hardy}, ``{Towards quantum gravity: a framework for probabilistic theories
  with non-fixed causal structure},''
  \href{http://dx.doi.org/10.1088/1751-8113/40/12/S12}{{\em J.~Phys. A: Math.
  Gen.} {\bfseries 40}, 3081--3099 (2007)},
  \href{http://arxiv.org/abs/gr-qc/0608043}{{\ttfamily arXiv:gr-qc/0608043}}.

\bibitem{rovelli1990quantum}
C.~Rovelli, ``Quantum mechanics without time: A model,''
  \href{http://dx.doi.org/10.1103/PhysRevD.42.2638}{{\em Phys. Rev. D}
  {\bfseries 42}, 2638--2646 (1990)}.

\bibitem{ashtekar96}
A.~{Ashtekar}, ``{Large Quantum Gravity Effects: Unforeseen Limitations of the
  Classical Theory},''
  \href{http://dx.doi.org/10.1103/PhysRevLett.77.4864}{{\em Phys. Rev. Lett.}
  {\bfseries 77}, 4864--4867 (1996)},
  \href{http://arxiv.org/abs/gr-qc/9610008}{{\ttfamily arXiv:gr-qc/9610008}}.

\bibitem{oreshkov12}
O.~{Oreshkov}, F.~{Costa}, and {\v C}.~{Brukner}, ``{Quantum correlations with
  no causal order},'' \href{http://dx.doi.org/10.1038/ncomms2076}{{\em Nat.
  Commun.} {\bfseries 3}, 1092 (2012)},
  \href{http://arxiv.org/abs/1105.4464}{{\ttfamily arXiv:1105.4464
  [quant-ph]}}.

\bibitem{chiribella09b}
G.~{Chiribella}, G.~M. {D'Ariano}, and P.~{Perinotti}, ``{Theoretical framework
  for quantum networks},''
  \href{http://dx.doi.org/10.1103/PhysRevA.80.022339}{{\em Phys. Rev.~A}
  {\bfseries 80}, 022339 (2009)},
  \href{http://arxiv.org/abs/0904.4483}{{\ttfamily arXiv:0904.4483
  [quant-ph]}}.

\bibitem{chuang00}
M.~Nielsen and I.~Chuang, {\em Quantum Computation and Quantum Information}.
\newblock Cambridge University Press, 2000.

\bibitem{davies70}
E.~Davies and J.~Lewis, ``An operational approach to quantum probability,''
  \href{http://dx.doi.org/10.1007/BF01647093}{{\em Comm. Math. Phys.}
  {\bfseries 17}, 239--260 (1970)}.

\bibitem{zyczkowski06}
K.~Życzkowski and I.~Bengtsson, {\em Geometry of Quantum States}.
\newblock Cambridge University Press, 2006.

\bibitem{gutoski06}
G.~Gutoski and J.~Watrous, ``Toward a general theory of quantum games,'' in
  {\em In Proceedings of 39th ACM STOC}, pp.~565--574.
\newblock 2006.
\newblock \href{http://arxiv.org/abs/quant-ph/0611234}{{\ttfamily
  arXiv:quant-ph/0611234}}.

\bibitem{oreshkov15}
O.~{Oreshkov} and C.~{Giarmatzi}, ``{Causal and causally separable
  processes},'' \href{http://arxiv.org/abs/1506.05449}{{\ttfamily
  arXiv:1506.05449 [quant-ph]}}.

\bibitem{guhne09}
O.~{G{\"u}hne} and G.~{T{\'o}th}, ``{Entanglement detection},''
  \href{http://dx.doi.org/10.1016/j.physrep.2009.02.004}{{\em Phys. Rep.}
  {\bfseries 474}, 1--75 (2009)},
  \href{http://arxiv.org/abs/0811.2803}{{\ttfamily arXiv:0811.2803
  [quant-ph]}}.

\bibitem{rockafellar70}
R.~T. Rockafellar, {\em Convex Analysis}.
\newblock Princeton University Press, 1970.

\bibitem{nesterov87}
Y.~Nesterov and A.~Nemirovskii, {\em Interior Point Polynomial Algorithms in
  Convex Programming}.
\newblock Studies in Applied Mathematics. Society for Industrial and Applied
  Mathematics, 1987.

\bibitem{steiner03}
M.~{Steiner}, ``{Generalized robustness of entanglement},''
  \href{http://dx.doi.org/10.1103/PhysRevA.67.054305}{{\em Phys. Rev.~A}
  {\bfseries 67}, 054305 (2003)},
  \href{http://arxiv.org/abs/quant-ph/0304009}{{\ttfamily
  arXiv:quant-ph/0304009}}.

\bibitem{vidal99}
G.~{Vidal} and R.~{Tarrach}, ``{Robustness of entanglement},''
  \href{http://dx.doi.org/10.1103/PhysRevA.59.141}{{\em Phys. Rev.~A}
  {\bfseries 59}, 141 (1999)},
  \href{http://arxiv.org/abs/quant-ph/9806094}{{\ttfamily
  arXiv:quant-ph/9806094}}.

\bibitem{yalmip}
J.~Lofberg, ``YALMIP : A Toolbox for Modeling and Optimization in {MATLAB},''
  in {\em Proceedings of the CACSD Conference}.
\newblock Taipei, Taiwan, 2004.
\newblock \url{http://users.isy.liu.se/johanl/yalmip}.

\bibitem{mosek}
``The MOSEK optimization toolbox for MATLAB.''
\newblock \url{http://www.mosek.com/}.

\bibitem{brunner14}
N.~{Brunner}, D.~{Cavalcanti}, S.~{Pironio}, V.~{Scarani}, and S.~{Wehner},
  ``{Bell nonlocality},''
  \href{http://dx.doi.org/10.1103/RevModPhys.86.419}{{\em Rev. Mod. Phys.}
  {\bfseries 86}, 419--478 (2014)},
  \href{http://arxiv.org/abs/1303.2849}{{\ttfamily arXiv:1303.2849
  [quant-ph]}}.

\bibitem{brukner14}
{\v C}.~{Brukner}, ``{Bounding quantum correlations with indefinite causal
  order},'' \href{http://dx.doi.org/10.1088/1367-2630/17/8/083034}{{\em New
  J.~Phys.} {\bfseries 17}, 083034 (2015)},
  \href{http://arxiv.org/abs/1404.0721}{{\ttfamily arXiv:1404.0721
  [quant-ph]}}.

\bibitem{baumeler13}
{\"A}.~Baumeler and S.~Wolf,
  \href{http://dx.doi.org/10.1109/ISIT.2014.6874888}{``Perfect signaling among
  three parties violating predefined causal order,''} in {\em Information
  Theory (ISIT), 2014 IEEE International Symposium on}, pp.~526--530.
\newblock June, 2014.
\newblock \href{http://arxiv.org/abs/1312.5916}{{\ttfamily arXiv:1312.5916
  [quant-ph]}}.

\bibitem{baumeler14}
{\"A}.~Baumeler, A.~Feix, and S.~Wolf, ``{Maximal incompatibility of locally
  classical behavior and global causal order in multi-party scenarios},''
  \href{http://dx.doi.org/10.1103/PhysRevA.90.042106}{{\em Phys. Rev. A}
  {\bfseries 90}, 042106 (2014)},
  \href{http://arxiv.org/abs/1403.7333}{{\ttfamily arXiv:1403.7333
  [quant-ph]}}.

\bibitem{araujo14b}
C.~Branciard, M.~Araújo, A.~Feix, F.~Costa, and {\v C}.~Brukner, ``The
  simplest causal inequalities and their violation,''
  \href{http://arxiv.org/abs/1508.01704}{{\ttfamily arXiv:1508.01704
  [quant-ph]}}.

\bibitem{zych14}
M.~{Zych}, F.~{Costa}, I.~{Pikovski}, and {\v C}.~{Brukner}, ``Bell
  Inequalities for Temporal Order.'' (in preparation).

\bibitem{gurvits2003classical}
L.~Gurvits, ``Classical deterministic complexity of Edmonds' Problem and
  quantum entanglement,'' in {\em Proceedings of the thirty-fifth annual ACM
  symposium on Theory of computing}, pp.~10--19, ACM.
\newblock 2003.

\bibitem{royer_wigner_1991}
A.~Royer, ``Wigner function in Liouville space: A canonical formalism,''
  \href{http://dx.doi.org/10.1103/PhysRevA.43.44}{{\em Phys. Rev. A} {\bfseries
  43}, 44--56 (1991)}.

\bibitem{braunstein_universal_2000}
S.~L. Braunstein, G.~M. D'Ariano, G.~J. Milburn, and M.~F. Sacchi, ``Universal
  Teleportation with a Twist,''
  \href{http://dx.doi.org/10.1103/PhysRevLett.84.3486}{{\em Phys. Rev. Lett.}
  {\bfseries 84}, 3486--3489 (2000)},
  \href{http://arxiv.org/abs/quant-ph/9908036}{{\ttfamily
  arXiv:quant-ph/9908036}}.

\bibitem{choi_completely_1975}
M.-D. Choi, ``Completely positive linear maps on complex matrices,''
  \href{http://dx.doi.org/10.1016/0024-3795(75)90075-0}{{\em Linear Algebra
  Appl.} {\bfseries 10}, 285--290 (1975)}.

\bibitem{barnum05}
H.~{Barnum}, C.~A. {Fuchs}, J.~M. {Renes}, and A.~{Wilce}, ``{Influence-free
  states on compound quantum systems},''
  \href{http://arxiv.org/abs/quant-ph/0507108}{{\ttfamily
  arXiv:quant-ph/0507108}}.

\bibitem{brandao05}
F.~G.~S.~L. {Brand{\~a}o}, ``{Quantifying entanglement with witness
  operators},'' \href{http://dx.doi.org/10.1103/PhysRevA.72.022310}{{\em Phys.
  Rev.~A} {\bfseries 72}, 022310 (2005)},
  \href{http://arxiv.org/abs/quant-ph/0503152}{{\ttfamily
  arXiv:quant-ph/0503152}}.

\end{thebibliography}\endgroup

\end{document}